\theoremstyle{definition}
\newtheorem{definition}{Definition}
\theoremstyle{plain}
\newtheorem{lemma}{Lemma}
\newtheorem{corollary}{Corollary}
\newtheorem{theorem}{Theorem}
\newtheorem{proposition}{Proposition}
\theoremstyle{remark}
\newtheorem*{remark}{Remark}
\begin{document}

\title{Approximating Acceptance Probabilities of CTMC-Paths on Multi-Clock Deterministic Timed Automata}

\author{Hongfei Fu\thanks{Supported by a CSC (China Scholarship Council) scholarship.}\\{\small Lehrstuhl f\"{u}r Informatik 2, RWTH Aachen University, Germany}}

\date{}


\maketitle

\begin{abstract}
We consider the problem of approximating the probability mass of the set of timed paths under a continuous-time Markov chain (CTMC) that are accepted by a deterministic timed automaton (DTA). As opposed to several existing works on this topic, we consider DTA with multiple clocks. Our key contribution is an algorithm to approximate these probabilities using finite difference methods. An error bound is provided which indicates the approximation error. The stepping stones towards this result include rigorous proofs for the measurability of the set of accepted paths and the integral-equation system characterizing the acceptance probability, and a differential characterization for the acceptance probability.
\end{abstract}

\section{Introduction}

Continuous-time Markov chains (CTMCs)~\cite{feller.william.aiptia} are one of the most prominent models for performance and dependability analysis of real-time stochastic systems. They are the semantical backbones of Markovian queueing networks, stochastic Petri nets and calculi for system biology and so forth. The desired behaviour of these systems is specified by various measures such as reachability with time information, timed logics such as CSL\cite{DBLP:journals/tse/BaierHHK03,DBLP:conf/icalp/ZhangJNH11}, mean response time, throughput, expected frequency of errors, and so forth.

Verification of continuous-time Markov chains has received much attention in recent years~\cite{DBLP:journals/cacm/BaierHHK10}. Many applicable results have been obtained on time-bounded reachability~\cite{DBLP:journals/tse/BaierHHK03,DBLP:conf/fsttcs/FearnleyRSZ11}, CSL model checking~\cite{DBLP:journals/tse/BaierHHK03,DBLP:conf/icalp/ZhangJNH11}, and so forth. In this paper, we focus on verifying CTMC against timed automata specification. In particular we consider approximating the probabilities of sets of CTMC-paths accepted by a deterministic timed automata (DTA)~\cite{DBLP:journals/tcs/AlurD94,DBLP:conf/hybrid/BrazdilKKKR11}. In general, DTA represents a wide class of linear real-time specifications. For example, we can describe time-bounded reachability probability ``to reach target set $G\subseteq S$ within time bound $T$ while avoiding unsafe states $U\subseteq S$'' $(G\cap U=\emptyset)$ by the single-clock DTA $\mathcal{A}_1$ (Fig.~\ref{fig:dta:a}), and the property ``to reach target set $G\subseteq S$ within time bound $T_1$ while successively remaining in unsafe states $U\subseteq S$ for at most $T_2$ time'' $(G\cap U=\emptyset)$ by the two-clock DTA $\mathcal{A}_2$ (Fig.~\ref{fig:dta:b}), both with initial configuration $(q_0,\vec{0})$. (We omit redundant locations that cannot reach the accepting state.)

\begin{figure}
\begin{minipage}[t]{0.45\linewidth}
\centering
\includegraphics{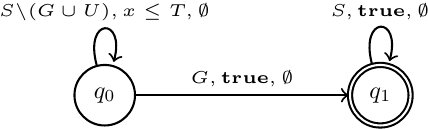}
\caption{DTA $\mathcal{A}_1$}\label{fig:dta:a}
\end{minipage}
\begin{minipage}[t]{0.55\linewidth}
\centering
\includegraphics{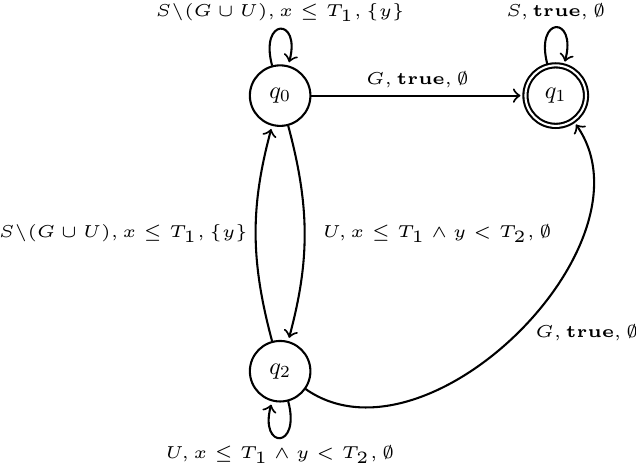}
\caption{DTA $\mathcal{A}_2$}\label{fig:dta:b}
\end{minipage}
\end{figure}

The problem to verify CTMC against DTA specifications is first considered by Donatelli~\emph{et al.}~\cite{DBLP:journals/tse/DonatelliHS09} where they enriched CSL with an acceptance condition of one-clock DTA to obtain the logic $\mathrm{CSL}^{\mathrm{TA}}$. In their paper, they proved that $\mathrm{CSL}^{\mathrm{TA}}$ is at least as expressive as $\mathrm{CSL}$ and $\mathrm{asCSL}$~\cite{DBLP:journals/tse/BaierHHK03,DBLP:journals/tse/BaierCHKS07}, and is strictly more expressive than $\mathrm{CSL}$. Moreover, they presented a model-checking algorithm for $\mathrm{CSL}^{\mathrm{TA}}$ using Markov regenerative processes. Chen~\emph{et al.}~\cite{DBLP:journals/corr/abs-1101-3694} systematically studied the DTA acceptance condition on CTMC-paths. More specifically, they proved that the set of CTMC-path accepted by a DTA is measurable and proposed a system of integral equations which characterizes the acceptance probabilities. Moreover, they demonstrated that the product of CTMC and DTA is a piecewise deterministic Markov process~\cite{mha.davis}, a dynamic system which integrates both discrete control and continuous evolution. Afterwards, Barbot~\emph{et al.}~\cite{DBLP:conf/tacas/BarbotCHKM11} put the approximation of DTA acceptance probabilities of CTMC-paths into practice, especially the algorithm on one-clock DTA which is first devised by Donatelli~\emph{et al.}~\cite{DBLP:journals/tse/DonatelliHS09} and then rearranged by Chen~\emph{et al.}~\cite{DBLP:journals/corr/abs-1101-3694}. Later on, Chen~\emph{et al.}~\cite{DBLP:conf/formats/ChenDKM11} proposed approximation algorithms for time-bounded verification of several linear real-time specifications, where the restricted time-bounded case, in which the time guard $x<T$ with a fresh clock $x$ and a time bound $T$ is enforced on each edge that leads to some final state of the DTA, is covered. Very recently, Mikeev~\emph{et al.}~\cite{mikeev.neuhausser.spieler.wolf} applies the notion of DTA acceptance condition on CTMC-paths to system biology. It is worth noting that Br\'{a}zdil~\emph{et al.} also studied DTA specifications in~\cite{DBLP:conf/hybrid/BrazdilKKKR11}. However they focused on semi-Markov processes as the underlying continuous-time stochastic model and limit frequencies of locations (in the DTA) as the performance measures, rather than path-acceptance.

Our contributions are as follows.  We start by providing a rigorous proof for the measurability of CTMC paths accepted by a DTA, correcting the proof provided by Chen \emph{et al.}~\cite{DBLP:journals/corr/abs-1101-3694}.  We confirm the correctness of the integral equation system characterizing acceptance probabilities provided by Chen \emph{et al.}~\cite{DBLP:journals/corr/abs-1101-3694} by providing a formal proof, and derive a differential characterization.  This provides the main basis for our algorithm to approximate acceptance probabilities using finite difference methods~\cite{jw.thomas}.  We provide tight error bounds for the approximation algorithm.  Whereas other works~\cite{DBLP:conf/tacas/BarbotCHKM11, mikeev.neuhausser.spieler.wolf, DBLP:journals/tse/DonatelliHS09} focus on single-clock DTA, our approximation scheme is applicable to any multi-clock DTA.  To our knowledge, this is the first such approximation algorithm with error bounds. Barbot \emph{et al.}~\cite{DBLP:conf/tacas/BarbotCHKM11} suggested an approximation scheme, but did not provide any error bounds.

The paper is organized as follows. In Section 2 we introduce some preliminaries. In Section 3 we prove the measurability of accepted paths, and prove the integral equations~\cite{DBLP:journals/corr/abs-1101-3694} that characterize the acceptance probability. In Section 4 we develop several tools useful to our main result. In Section 5 we propose a differential characterization for the family of acceptance probability functions. Base on these results, we establish and solve our approximation scheme in Section 6 by using finite difference methods~\cite{jw.thomas}, which is the main result of the paper. Section 7 concludes the paper and discusses some possible future works.

All integrals in this paper should be basically understood as Lebesgue Integral.

\section{Preliminaries}

In this section we introduce continuous-time Markov chains~\cite{feller.william.aiptia} and deterministic timed automata~\cite{DBLP:journals/tcs/AlurD94,DBLP:conf/hybrid/BrazdilKKKR11,DBLP:journals/corr/abs-1101-3694}.

\subsection{Continuous-Time Markov Chains}

\begin{definition}\label{def:ctmc}
A \emph{continuous-time Markov chain} (CTMC) is a tuple
$(S,L,\mathbf{P},\lambda,\mathcal{L})$ where
\begin{itemize}\itemsep0pt \parskip0pt \parsep0pt \topsep0pt
\item $S$ is a finite set of \emph{states}, and $L$ is a finite set of \emph{labels};
\item $\mathbf{P}:S\times S\mapsto [0,1]$ is a \emph{transition
matrix} such that $\sum_{u\in S}\mathbf{P}(s,u)=1$ for all $s\in S$;
\item $\lambda:S\mapsto\mathbb{R}_{>0}$ is an \emph{exit-rate function}, and $\mathcal{L}:S\mapsto L$ is a \emph{labelling function}.
\end{itemize}
\end{definition}
Intuitively, the running behaviour of a CTMC is as follows. Suppose $s$ is the current state of a CTMC. Firstly, the CTMC stays at $s$ for $t$ time units where the dwell-time $t$ observes the negative exponential distribution with rate $\lambda(s)$. Then the CTMC changes its current state to some state $u$ with probability $\mathbf{P}(s,u)$ and continues running from $u$, and so forth. The one-step probability of the transition from $s$ to $u$ whose dwell time lies in the interval $I$ equals $\mathbf{P}(s,u)\cdot\int_{t\in I}\lambda(s)\cdot e^{-\lambda(s)t}\,\mathrm{d}t$. Besides, the labelling function $\mathcal{L}$ assigns each state $s$
a label which indicates the set of atomic properties that hold at $s$.

To ease the notation, we denote the probability density function of the negative exponential distribution with rate $\lambda(s)$ by $\Lambda_s$, i.e., $\Lambda_s(t)=\lambda(s)\cdot e^{-\lambda(s)\cdot t}$ when $t\ge 0$ and $\Lambda_s(t)=0$ when $t<0$. 
It is worth noting that under our definition, we restrict ourselves such that the rates of all states are positive. CTMCs which contain states with rate $0$ (i.e. \emph{deadlock} states without outgoing transitions) can be adjusted to our case by (i) changing the rate of a deadlock state $s$ to any positive value and (ii) setting $P(s,s)=1$ and $P(s,u)=0$ for all $u\ne s$, i.e., by making a self-loop on $s$.

Below we formally define a probability measure on sets of CTMC-paths, following the definitions from~\cite{DBLP:journals/tse/BaierHHK03}.
Suppose $\mathcal{M}=(S,L,\mathbf{P},\lambda,\mathcal{L})$ be a CTMC.
An $\mathcal{M}$-\emph{path} $\pi$ is an infinite sequence
$s_0t_0s_1t_1\dots$ such that $s_n\in S$ and $t_n\in\mathbb{R}_{\ge 0}$
for all $n\in\mathbb{N}_0$. In other words, the set of $\mathcal{M}$-paths, denoted by $\mathrm{Path}(\mathcal{M})$, is essentially
$(S\times\mathbb{R}_{\ge 0})^\omega$. Given an $\mathcal{M}$-path $\pi=s_0t_0s_1t_1\dots$,
we denote $s_n$ and $t_n$ by $\pi[n]$ and $\pi{\langle}{n}{\rangle}$, respectively.

A \emph{template} $\theta$ is a finite sequence $s_0 I_0\dots s_{l-1}I_{l-1}s_{l}$
such that $l\ge 1$, $s_n\in S$ for all $0\le n\le l$ and $I_n$ is an interval in $\mathbb{R}_{\ge 0}$
for all $0\le n\le l-1$. Given a template $\theta=s_0 I_0\dots s_{l-1}I_{l-1}s_{l}$,
we define the \emph{cylinder set} $R_\theta$ as the following set:
\[
\{\pi\in\mathrm{Path}(\mathcal{M})\mid\pi[n]=s_n\mbox{ for all }0\le n\le l,\mbox{
and }\pi{\langle}{n}{\rangle}\in I_n\mbox{
for all }0\le n\le l-1\}
\]
An \emph{initial distribution} $\Theta$ is a function $S\mapsto [0,1]$ such that $\sum_{s\in S}\Theta(s)=1$. 
The \emph{probability space} $(\Omega,\mathcal{F},\mathcal{B}_\Theta)$
over $\mathcal{M}$-paths with initial distribution $\Theta$ is defined as follows:
\begin{itemize}\itemsep0pt \parskip0pt \parsep0pt \topsep0pt
\item $\Omega=\mathrm{Path}(\mathcal{M})$;
\item $\mathcal{F}\subseteq 2^{\Omega}$ is the smallest $\sigma$-algebra generated by the \emph{cylindrical family}
$\{R_\theta\mid\theta \mbox{ is a template}\}$ of subsets of $\Omega$.
\item $\mathcal{B}_\Theta:\mathcal{F}\mapsto [0,1]$ is the unique probability measure such that

\centerline{$\mathcal{B}_\Theta\left(R_\theta\right)=\Theta(s_0)\cdot\prod_{n=0}^{l-1}\left\{\mathbf{P}(s_n,s_{n+1})\cdot\int_{I_n}\Lambda_{s_n}(t)\,\mathrm{d}t\right\}$} for every
template $\theta=s_0 I_0\dots s_{l-1}I_{l-1}s_{l}$.
\end{itemize}

Intuitively, the probability space $(\Omega,\mathcal{F},\mathcal{B}_\Theta)$ is generated by all cylinder sets $R_\theta$, where $\mathcal{B}_\Theta(R_{\theta})$ is the product of the initial probability and those one-step probabilities specified in $\Theta$ and $\theta$.
The uniqueness of $\mathcal{B}_\Theta$ is guaranteed by Carath\'{e}odory's Extension Theorem~\cite{patrick.billingsley.pm}. 

When $\Theta$ is a Dirac distribution on $s\in S$ (i.e., $\Theta(s)=1$), we simply denote $(\Omega,\mathcal{F},\mathcal{B}_\Theta)$ by $(\Omega,\mathcal{F},\mathcal{B}_s)$. In this paper, we focus on the computation of $\mathcal{B}_s$, since any $\mathcal{B}_\Theta$ can be expressed as a linear combination of $\{\mathcal{B}_s\}_{s\in S}$.

\subsection{Deterministic Timed Automata}

Suppose $\mathcal{X}$ be a finite set of \emph{clocks}. A (\emph{clock}) \emph{valuation} over $\mathcal{X}$ is a function $\eta:\mathcal{X}\mapsto\mathbb{R}_{\ge 0}$. We denote by $\mathrm{Val}(\mathcal{X})$ the set of valuations over $\mathcal{X}$. Sometimes we will view a clock valuation as a real vector with an implicit order on $\mathcal{X}$.

A \emph{guard} (or \emph{clock constraints}) over a finite set of clocks $\mathcal{X}$ is a finite conjunction of basic constraints of the form $x\Join c$, where $x\in\mathcal{X}$, $\Join\in\{<,\le,>,\ge\}$ and $c\in\mathbb{N}_0$. We denote the set of guards over $\mathcal{X}$ by $\Phi(\mathcal{X})$. For each $\eta\in\mathrm{Val}(\mathcal{X})$ and $g\in\Phi(\mathcal{X})$, the satisfaction relation $\eta\models g$ is defined by: $\eta\models x\Join c$ iff $\eta(x)\Join c$, and $\eta\models g_1\wedge g_2$ iff $\eta\models g_1$ and $\eta\models g_2$. Given $g\in\Phi(\mathcal{X})$, we may also refer $g$ to the set of valuations that satisfy $g$: this may happen in the context such as $g_1\cap g_2$, etc. Given $X\subseteq\mathcal{X}$, $\eta\in\mathrm{Val}(\mathcal{X})$ and $t\in\mathbb{R}_{\ge 0}$, the valuations $\eta[X:=0]$, $\eta+t$, and $\eta-t$ are defined as follows:
\begin{enumerate}\itemsep0pt \parskip0pt \parsep0pt \topsep0pt
\item if $x\in X$ then $\eta[X:=0](x):=0$, otherwise $\eta[X:=0](x):=\eta(x)$\enskip;
\item $(\eta+t)(x):=\eta(x)+t$ for all $x\in\mathcal{X}$\enskip;
\item $(\eta-t)(x):=\eta(x)-t$ for all $x\in\mathcal{X}$, provided that $\eta(x)\ge t$ for all $x\in\mathcal{X}$\enskip.
\end{enumerate}
Intuitively, $\eta[X:=0]$ is obtained by resetting all clocks of $X$ to zero on $\eta$, and $\eta+t$ resp. $\eta-t$ is obtained by delaying resp. backtracking $t$ time units from $\eta$.

\begin{definition}\cite{DBLP:journals/tcs/AlurD94,DBLP:journals/corr/abs-1101-3694,DBLP:conf/hybrid/BrazdilKKKR11}
A \emph{deterministic timed automaton} (DTA) is a tuple \\$(Q,\Sigma,\mathcal{X},\Delta,F)$ where
\begin{itemize} \itemsep0pt \parskip0pt \parsep0pt \topsep0pt
\item $Q$ is a finite set of \emph{locations}, and  $F\subseteq Q$ is a set of \emph{final locations};
\item $\Sigma$ is a finite \emph{alphabet} of \emph{signatures}, and $\mathcal{X}$ is a finite set of \emph{clocks};
\item $\Delta\subseteq Q\times\Sigma\times\Phi(\mathcal{X})\times 2^\mathcal{X}\times Q$ is a finite set of \emph{rules} such that
\begin{enumerate} \itemsep0pt \parskip0pt \parsep0pt \topsep0pt
\item $\Delta$ is \emph{deterministic}: whenever $(q_1,a_1,g_1,X_1,q'_1),(q_2,a_2,g_2,X_2,q'_2)\in\Delta$, if $(q_1,a_1)=(q_2,a_2)$ and $g_1\cap g_2\ne\emptyset$ then
$(g_1,X_1,q'_1)=(g_2,X_2,q'_2)$.
\item $\Delta$ is \emph{total}: for all $(q,a)\in Q\times\Sigma$ and $\eta\in\mathrm{Val}(\mathcal{X})$, there exists $(q,a,g,X,q')\in\Delta$ such that $\eta\models g$.
\end{enumerate}
\end{itemize}
Given $q\in Q$, $\eta\in\mathrm{Val}(\mathcal{X})$ and $a\in\Sigma$, the triple $(\mathbf{g}_{q,a}^{\eta},\mathbf{X}_{q,a}^{\eta},\mathbf{q}_{q,a}^{\eta})\in\Phi(\mathcal{X})\times 2^\mathcal{X}\times Q$ are determined such that $(q,a,\mathbf{g}_{q,a}^{\eta},\mathbf{X}_{q,a}^{\eta},\mathbf{q}_{q,a}^{\eta})\in\Delta$ is the unique rule satisfying $\eta\models \mathbf{g}_{q,a}^{\eta}$.
\end{definition}

\begin{definition}\cite{DBLP:journals/tcs/AlurD94,DBLP:journals/corr/abs-1101-3694}
Let $\mathcal{A}=(Q,\Sigma,\mathcal{X},\Delta,F)$ be a DTA. A \emph{configuration} of $\mathcal{A}$ is a pair $(q,\eta)$, where $q\in Q$ and $\eta\in\mathrm{Val}(\mathcal{X})$. A \emph{timed signature} is a pair $(a,t)$ where $a\in\Sigma$ and $t\in\mathbb{R}_{\ge 0}$. The \emph{one-step transition function}
\[
\kappa^\mathcal{A}:(Q\times\mathrm{Val}(\mathcal{X}))\times (\Sigma\times\mathbb{R}_{\ge 0})\mapsto Q\times\mathrm{Val}(\mathcal{X})
\]
is defined by: $\kappa^\mathcal{A}((q,\eta),(a,t))=(\mathbf{q}_{q,a}^{\eta+t},(\eta+t)[\mathbf{X}_{q,a}^{\eta+t}:=0])$\enskip.
\end{definition}
We may represent $\kappa^\mathcal{A}((q,\eta),(a,t))=(q',\eta')$ by the more intuitive notation ``$(q,\eta)\xrightarrow{(a,t)}_{\mathcal{A}}(q',\eta')$''. We omit ``$\mathcal{A}$'' if the context is clear.

Intuitively, the configuration $\kappa((q,\eta),(a,t))$ is obtained as follows: firstly we delay $t$ time-units at $(q,\eta)$ to obtain $(q,\eta+t)$; then we find the unique rule $(q,a,g,X,q')\in\Delta$ such that $\eta+t\models g$; finally, we obtain $\kappa((q,\eta),(a,t))$ by changing the location to $q'$ and resetting $\eta+t$ with $X$. The determinism and the totality of $\Delta$ together ensures that $\kappa$ is a function.

\begin{definition}\cite{DBLP:journals/corr/abs-1101-3694}
Let $\mathcal{A}=(Q,\Sigma,\mathcal{X},\Delta,F)$ be a DTA. A \emph{timed word} is an infinite sequence of timed signatures. The \emph{run} of $\mathcal{A}$ on a timed word $w=\{(a_n,t_n)\}_{n\in\mathbb{N}_0}$ with \emph{initial configuration} $(q,\eta)$, denoted by $\mathcal{A}_{q,\eta}(w)$, is the unique infinite
sequence $\{(q_n,\eta_n)(a_n,t_n)\}_{n\in\mathbb{N}_0}$ which satisfies that $(q_0,\eta_0)=(q,\eta)$ and $(q_{n+1},\eta_{n+1})=\kappa^\mathcal{A}((q_n,\eta_n),(a_n,t_n))$ for $n\ge 0$.

A timed word $w$ is \emph{accepted by} $\mathcal{A}$ with initial configuration $(q,\eta)$ (\emph{abbr}. ``$w$ accepted by $\mathcal{A}_{q,\eta}$'') iff $\mathcal{A}_{q,\eta}(w)=\{(q_n,\eta_n)(a_n,t_n)\}_{n\in\mathbb{N}_0}$ satisfies that $q_n\in F$ for some $n\ge 0$. Moreover, $w$ is \emph{accepted} by $\mathcal{A}_{q,\eta}$ \emph{within} $k$
\emph{steps} ($k\ge 0$) iff $\mathcal{A}_{q,\eta}(w)=\{(q_n,\eta_n)(a_n,t_n)\}_{n\in\mathbb{N}_0}$
satisfies that $q_n\in F$ for some $0\le n\le k$.
\end{definition}

\section{Measurability and The Integral Equations}

In this section, we provide a rigorous proof for the measurability of the set of CTMC-paths accepted by a DTA and the system of integral equations that characterizes the acceptance probability. The notion of acceptance follows the previous ones in~\cite{DBLP:journals/corr/abs-1101-3694}.

Below we fix a CTMC $\mathcal{M}=(S,L,\mathbf{P},\lambda,\mathcal{L})$ and a DTA $\mathcal{A}=(Q,\Sigma,\mathcal{X},\Delta,F)$ such that $\Sigma=L$. Given a finite or infinite word $\alpha$, we denote by $\alpha_n$ ($n\ge 0$) the $n$-th signature, i.e., $\alpha=\alpha_0\alpha_1\dots\alpha_n\dots$ if $\alpha$ is infinite and $\alpha=\alpha_0\alpha_1\dots\alpha_{k-1}$ if $\alpha$ is finite with length $k$. Analogously, given a $k$-dimensional vector $\mathbf{t}$, we denote $\mathbf{t}=(\mathbf{t}_0,\dots,\mathbf{t}_{k-1})$.
We denote by ${\langle}{J}{\rangle}$ the characteristic function of a set $J$.

Firstly, we formally define the notion of acceptance.

\begin{definition}\cite{DBLP:journals/corr/abs-1101-3694}
The set of $\mathcal{M}$-paths \emph{accepted} by $\mathcal{A}$ w.r.t $s\in S$, $q\in Q$ and $\eta\in\mathrm{Val}(\mathcal{X})$, denoted by $\mathsf{Path}^{\mathcal{M}\otimes\mathcal{A}}(s,q,\eta)$, is defined by:
\[
\mathsf{Path}^{\mathcal{M}\otimes\mathcal{A}}(s,q,\eta):=\{\pi\in
\mathrm{Path}(\mathcal{M})\mid\pi[0]=s\mbox{ and
}\mathcal{L}_\pi\mbox{ is accepted by }\mathcal{A}_{q,\eta}\}~~,
\]
where $\mathcal{L}_\pi$ is the timed word defined by: $(\mathcal{L}_\pi)_n=\left(\mathcal{L}(\pi[n]),\pi{\langle}{n}{\rangle}\right)$ for all $n\ge 0$. Moreover, the set of $\mathcal{M}$-paths accepted by $\mathcal{A}$ w.r.t $s$, $q$ and $\eta$ \emph{within} $k$-\emph{steps} ($k\ge 0$), denoted by $\mathsf{Path}^{\mathcal{M}\otimes\mathcal{A}}_k(s,q,\eta)$, is defined as the set of $\mathcal{M}$-paths $\pi$ such that $\pi[0]=s$ and $\mathcal{L}_\pi$ is accepted by $\mathcal{A}_{q,\eta}$
within $k$ steps.
\end{definition}
Note that $\mathcal{L}_\pi$ specifies the behaviour of $\mathcal{M}$ observable by an outside observer. By definition, we have $\bigcup_{k\ge 0}\mathsf{Path}_k^{\mathcal{M}\otimes\mathcal{A}}(s,q,\eta)=\mathsf{Path}^{\mathcal{M}\otimes\mathcal{A}}(s,q,\eta)$. We omit ``$\mathcal{M}\otimes\mathcal{A}$'' in
 if the underlying context is clear.

\begin{remark}
We point out the main error in the measurability proof by Chen~\emph{et al.}~\cite{DBLP:journals/corr/abs-1101-3694}. The error appears on Page 11 under the label ``(1b)'' which handles the equality guards in timed transitions. In (1b), for an timed transition $e$ emitted from $q$ with guard $x=K$, four DTA $\mathcal{A}_e,\overline{\mathcal{A}}_e$, $\mathcal{A}^>_e,\mathcal{A}^<_e$ are defined w.r.t the original DTA $\mathcal{A}$. Then it is argued that
\[
Paths^\mathcal{C}(\mathcal{A}_e)=Paths^\mathcal{C}(\overline{\mathcal{A}}_e)\backslash(Paths^\mathcal{C}(\mathcal{A}^>_e)\cup Paths^\mathcal{C}(\mathcal{A}^<_e))
\]
This is incorrect. The left part $Paths^\mathcal{C}(\mathcal{A}_e)$ excludes all timed paths which involve both the guard $x>K$ and the guard $x<K$ (from $q$). However the right part does not. So the left and right part are not equal. \qed
\end{remark}
Below we prove that $\mathsf{Path}(s,q,\eta)$ is measurable under $\Omega$ for $(s,q,\eta)\in S\times Q\times\mathrm{Val}(\mathcal{X})$, and the integral-equation system that characterizes the acceptance probability~\cite{DBLP:journals/corr/abs-1101-3694}.  We abbreviate  $\left(\mathbf{g}_{q,\mathcal{L}(s)}^{\eta},\mathbf{X}_{q,\mathcal{L}(s)}^{\eta},\mathbf{q}_{q,\mathcal{L}(s)}^{\eta}\right)$ as $(\mathbf{g}_{q,s}^{\eta},\mathbf{X}_{q,s}^{\eta},\mathbf{q}_{q,s}^{\eta})$.
Given $\gamma\in\Delta$, we may denote $\gamma=\left(\mathfrak{q}(\gamma),\mathfrak{a}(\gamma),\mathfrak{g}(\gamma),\mathfrak{X}(\gamma),\mathfrak{q}'(\gamma)\right)$\enskip.

Note that $\bigcup_{k\ge 0}\mathsf{Path}_k(s,q,\eta)=\mathsf{Path}(s,q,\eta)$. Thus in order to prove the measurability of $\mathsf{Path}(s,q,\eta)$, it suffices to prove that each $\mathsf{Path}_k(s,q,\eta)$ is measurable under $\Omega$. To this end, we decompose $\mathsf{Path}_k(s,q,\eta)$ into subsets of paths, as follows.

\begin{definition}\label{def:dcmp:1}
Suppose $k\in\mathbb{N}$. We define the set $D^k\subseteq S^{k+1}\times\Delta^{k}$ as follows. For all $(\alpha,\beta)\in S^{k+1}\times\Delta^k$, $(\alpha,\beta)\in D^k$ iff the following conditions hold:
\begin{itemize}\itemsep0pt \parskip0pt \parsep0pt \topsep0pt
\item $\mathfrak{a}(\beta_n)=\mathcal{L}(\alpha_n)$ for all $0\le n\le k-1$, and $\mathfrak{q}'(\beta_n)=\mathfrak{q}(\beta_{n+1})$ for all $0\le n\le k-2$;
\item either $\mathfrak{q}(\beta_n)\in F$ for some $0\le n\le k-1$, or $\mathfrak{q}'(\beta_{k-1})\in F$.
\end{itemize}
Let $D^k(q,s):=\{\left(\alpha,\beta\right)\in D^k\mid(\mathfrak{q}(\beta_0),\alpha_0)=(q,s)\}$, for each $q\in Q$ and $s\in S$.
\end{definition}

\begin{definition}\label{def:dcmp:2}
Suppose $(\alpha,\beta)\in D^k$ $(k\ge 1)$ and $\eta\in\mathrm{Val}(\mathcal{X})$. Define the set $\mathsf{Path}^\alpha_{\beta,\eta}\subseteq\mathrm{Path}(\mathcal{M})$ as follows. For all $\pi\in\mathrm{Path}(\mathcal{M})$, $\pi\in\mathsf{Path}^\alpha_{\beta,\eta}$ iff the following conditions hold:
\begin{itemize} \itemsep0pt \parskip0pt \parsep0pt \topsep0pt
\item $\pi[n]=\alpha_n$ for all $0\le n\le k$;
\item The run $\mathcal{A}_{\mathfrak{q}(\beta_0),\eta}(\mathcal{L}_\pi)=\{(q_n,\eta_n)(\mathcal{L}(\pi[n]),\pi{\langle}{n}{\rangle})\}_{n\ge 0}$ satisfies that $q_n=\mathfrak{q}(\beta_n)$ and $\eta_n+\pi{\langle}{n}{\rangle}\models\mathfrak{g}(\beta_n)$ for all $0\le n\le k-1$.
\end{itemize}
\end{definition}
The intuition is that $\mathsf{Path}^\alpha_{\beta,\eta}$ is the set of $\mathcal{M}$-paths which visit the first $k+1$ states in the state sequence $\alpha$ while $\mathcal{A}$ synchronizes with the timed path by taking $k$ rules from the sequence $\beta$ (cf.~Fig.~\ref{fig:decomp}). From Definition~\ref{def:dcmp:1}, Definition~\ref{def:dcmp:2} and the fact that $\mathcal{A}$ is deterministic, it is not hard to prove the following lemma.
\begin{figure}[h]
\[
\binom{\mathfrak{q}(\beta_0)}{\eta}\dots
\binom{\mathfrak{q}(\beta_{n})}{\eta_{n}}\xrightarrow[\beta_{n}]{\binom{\mathcal{L}(\alpha_{n})}{{\pi}{\langle}{n}{\rangle}}}
\binom{\mathfrak{q}(\beta_{n+1})}{\eta_{n+1}}\dots
\binom{\mathfrak{q}(\beta_{k-1})}{\eta_{k-1}}\xrightarrow[\beta_{k-1}]{\binom{\mathcal{L}(\alpha_{k-1})}{{\pi}{\langle}{k-1}{\rangle}}}
\binom{\mathfrak{q}'(\beta_{k-1})}{\eta_{k}}
\xrightarrow{\binom{\mathcal{L}(\alpha_{k})}{{\pi}{\langle}{k}{\rangle}}}\dots
\]
\caption{The run $\mathcal{A}_{\mathfrak{q}(\beta_0),\eta}(\mathcal{L}_\pi)$ given $\pi\in\mathsf{Path}^\alpha_{\beta,\eta}$}\label{fig:decomp}
\end{figure}

\begin{lemma}\label{lem:mrbl:dcmp}
For all $k\ge 1$, $\mathsf{Path}_k(s,q,\eta)=\bigcup\{\mathsf{Path}^\alpha_{\beta,\eta}\mid
(\alpha,\beta)\in D^k(q,s)\}$. Furthermore, the union is disjoint, i.e., $\mathsf{Path}^{\alpha}_{\beta,\eta}\cap\mathsf{Path}^{\alpha'}_{\beta',\eta}=\emptyset$ whenever $(\alpha,\beta)\ne(\alpha',\beta')$.
\end{lemma}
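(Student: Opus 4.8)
The plan is to prove the set equality by establishing both inclusions directly, and then to treat disjointness separately. Throughout, the workhorse is the fact noted after the definition of $\kappa^\mathcal{A}$: determinism and totality of $\Delta$ make the run $\mathcal{A}_{q,\eta}(\mathcal{L}_\pi)$ a single well-defined sequence, in which at each step \emph{exactly one} rule of $\Delta$ is fired. The entire argument is bookkeeping around this uniqueness; there is no analytic content.

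For the inclusion $\mathsf{Path}_k(s,q,\eta)\subseteq\bigcup\{\mathsf{Path}^\alpha_{\beta,\eta}\mid(\alpha,\beta)\in D^k(q,s)\}$, I would start from an accepting $\pi$ and read off the witnessing pair: set $\alpha_n:=\pi[n]$ for $0\le n\le k$, and let $\beta_n$ be the rule actually fired at the $n$-th step of $\mathcal{A}_{q,\eta}(\mathcal{L}_\pi)=\{(q_n,\eta_n)(a_n,t_n)\}_{n\ge 0}$, so that $\mathfrak{q}(\beta_n)=q_n$, $\mathfrak{a}(\beta_n)=\mathcal{L}(\pi[n])$, and $\eta_n+\pi\langle n\rangle\models\mathfrak{g}(\beta_n)$. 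The signature condition $\mathfrak{a}(\beta_n)=\mathcal{L}(\alpha_n)$ and the chaining condition $\mathfrak{q}'(\beta_n)=q_{n+1}=\mathfrak{q}(\beta_{n+1})$ are then immediate from the construction of the run; acceptance within $k$ steps gives the final-location clause of $D^k$ by splitting on whether the accepting index is $\le k-1$ (so $\mathfrak{q}(\beta_n)=q_n\in F$) or equal to $k$ (so $\mathfrak{q}'(\beta_{k-1})=q_k\in F$); and $(\mathfrak{q}(\beta_0),\alpha_0)=(q,s)$ places $(\alpha,\beta)$ in $D^k(q,s)$. Membership $\pi\in\mathsf{Path}^\alpha_{\beta,\eta}$ holds by the same construction.

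The reverse inclusion is where determinism does the real work. Given $(\alpha,\beta)\in D^k(q,s)$ and $\pi\in\mathsf{Path}^\alpha_{\beta,\eta}$, the definitions already supply $\pi[0]=\alpha_0=s$ and $q_n=\mathfrak{q}(\beta_n)$ for $0\le n\le k-1$ along the run $\mathcal{A}_{q,\eta}(\mathcal{L}_\pi)$ (recall $\mathfrak{q}(\beta_0)=q$). To turn the $D^k$ acceptance clause into acceptance within $k$ steps I must identify the location reached after $k$ steps: since $\beta_{k-1}$ is a rule out of $q_{k-1}=\mathfrak{q}(\beta_{k-1})$ with signature $\mathcal{L}(\pi[k-1])$ whose guard is satisfied by $\eta_{k-1}+\pi\langle k-1\rangle$, determinism forces it to be the unique rule fired there, hence $q_k=\mathfrak{q}'(\beta_{k-1})$. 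Then either $\mathfrak{q}(\beta_n)\in F$ for some $n\le k-1$, or $\mathfrak{q}'(\beta_{k-1})\in F$, and in both cases $q_m\in F$ for some $0\le m\le k$, so $\pi\in\mathsf{Path}_k(s,q,\eta)$.

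For disjointness I would first observe that every pair in the index set $D^k(q,s)$ shares $(\mathfrak{q}(\beta_0),\alpha_0)=(q,s)$, so each $\mathsf{Path}^\alpha_{\beta,\eta}$ in the union refers to the \emph{same} run $\mathcal{A}_{q,\eta}(\mathcal{L}_\pi)$; this common initial configuration is essential, since runs started at different locations could in principle accept the same $\pi$. Assuming $\pi\in\mathsf{Path}^\alpha_{\beta,\eta}\cap\mathsf{Path}^{\alpha'}_{\beta',\eta}$, the state coordinates give $\alpha_n=\pi[n]=\alpha'_n$, hence $\alpha=\alpha'$; then for each $n\le k-1$ the rules $\beta_n,\beta'_n$ have equal source $q_n$, equal signature $\mathcal{L}(\alpha_n)$, and guards both containing $\eta_n+\pi\langle n\rangle$, so $\mathfrak{g}(\beta_n)\cap\mathfrak{g}(\beta'_n)\ne\emptyset$ and determinism forces $\beta_n=\beta'_n$. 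Thus $\beta=\beta'$, contradicting $(\alpha,\beta)\ne(\alpha',\beta')$. I expect the main obstacle to be purely notational rather than conceptual: keeping the abstract index sequence $\beta$ aligned with the concretely-fired rules of the run, and invoking determinism at exactly the right places (to pin down $q_k$ in the reverse inclusion and to collapse $\beta_n,\beta'_n$ in the disjointness step).
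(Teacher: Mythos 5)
Your proposal is correct and fills in exactly the argument the paper leaves implicit: the paper states this lemma without proof, remarking only that it follows ``from Definition~\ref{def:dcmp:1}, Definition~\ref{def:dcmp:2} and the fact that $\mathcal{A}$ is deterministic,'' and your two inclusions plus the disjointness step invoke determinism at precisely the intended places (to pin down $q_k=\mathfrak{q}'(\beta_{k-1})$ in the reverse inclusion and to collapse $\beta_n=\beta'_n$ in the disjointness argument). No gaps.
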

Thus to prove that $\mathsf{Path}_k(s,q,\eta)$ is measurable, it suffices to prove that each $\mathsf{Path}^\alpha_{\beta,\eta}$ is measurable. To this end, we prove two technical lemmas as follows.

\begin{lemma}\label{lem:dcmp:char}
Let $k\ge 1$. For each $\beta\in\Delta^k$ and $\eta\in\mathrm{Val}(\mathcal{X})$, we define
\[
J_{\beta,\eta}:=\{\mathbf{t}\in\mathbb{R}^k_{\ge 0}\mid\nu[\beta,\eta,\mathbf{t}]_n+\mathbf{t}_n\models\mathfrak{g}(\beta_n)\mbox{
for all }0\le n\le k-1\}
\]
where $\nu[\beta,\eta,\mathbf{t}]_n\in\mathrm{Val}(\mathcal{X})$ \emph{(}$0\le n\le k-1$\emph{)} is defined by: (cf. Fig.~\ref{fig:decomp:2})
\[
\nu[\beta,\eta,\mathbf{t}]_0=\eta; \quad \nu[\beta,\eta,\mathbf{t}]_{n+1}=(\nu[\beta,\eta,\mathbf{t}]_n+\mathbf{t}_n)[\mathfrak{X}(\beta_n):=0]\enskip.
\]
Then given any $(\alpha,\beta)\in D^k$ and $\eta\in\mathrm{Val}(\mathcal{X})$, $\pi\in\mathsf{Path}^{\alpha}_{\beta,\eta}$ iff $\pi[n]=\alpha_n$ for all $0\le n\le k$ and $(\pi{\langle}{0}{\rangle},\dots,\pi{\langle}{k-1}{\rangle})\in J_{\beta,\eta}$.
\end{lemma}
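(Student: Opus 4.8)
The plan is to prove both inclusions simultaneously by showing that, under the standing hypothesis $\pi[n]=\alpha_n$ for all $0\le n\le k$, the clock valuations $\eta_n$ produced by the run $\mathcal{A}_{\mathfrak{q}(\beta_0),\eta}(\mathcal{L}_\pi)=\{(q_n,\eta_n)(a_n,t_n)\}_{n\ge 0}$ coincide with the symbolic valuations $\nu[\beta,\eta,\mathbf{t}]_n$, where $\mathbf{t}_n=\pi\langle n\rangle$, for as long as the run keeps synchronizing with $\beta$. The point is that the two defining recurrences are formally identical: the definition of $\kappa$ gives $\eta_{n+1}=(\eta_n+t_n)[\mathbf{X}_{q_n,a_n}^{\eta_n+t_n}:=0]$, while $\nu[\beta,\eta,\mathbf{t}]_{n+1}=(\nu[\beta,\eta,\mathbf{t}]_n+\mathbf{t}_n)[\mathfrak{X}(\beta_n):=0]$. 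Thus the whole argument reduces to checking that the rule selected by $\kappa$ at step $n$ is exactly $\beta_n$, so that its reset set and target location match those dictated by $\beta$.

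First I would establish, by induction on $n$ for $0\le n\le k-1$, the statement: if $\eta_m+t_m\models\mathfrak{g}(\beta_m)$ holds for all $0\le m<n$, then $q_n=\mathfrak{q}(\beta_n)$ and $\eta_n=\nu[\beta,\eta,\mathbf{t}]_n$. The base case $n=0$ is immediate since the run starts at $(\mathfrak{q}(\beta_0),\eta)$ and $\nu[\beta,\eta,\mathbf{t}]_0=\eta$. For the inductive step the crucial move is a determinism argument: the induction hypothesis gives $q_n=\mathfrak{q}(\beta_n)$, while $(\alpha,\beta)\in D^k$ supplies $a_n=\mathcal{L}(\alpha_n)=\mathfrak{a}(\beta_n)$; together with $\eta_n+t_n\models\mathfrak{g}(\beta_n)$ this exhibits $\beta_n$ as a rule of $\Delta$ from $q_n$ on signature $a_n$ whose guard is satisfied by $\eta_n+t_n$. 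By the determinism clause of the DTA it is the \emph{unique} such rule, hence precisely the one chosen by $\kappa$, so $\mathbf{q}_{q_n,a_n}^{\eta_n+t_n}=\mathfrak{q}'(\beta_n)=\mathfrak{q}(\beta_{n+1})$ (using the chaining condition $\mathfrak{q}'(\beta_n)=\mathfrak{q}(\beta_{n+1})$ of $D^k$) and $\mathbf{X}_{q_n,a_n}^{\eta_n+t_n}=\mathfrak{X}(\beta_n)$. Feeding these into the two matching recurrences propagates both equalities to index $n+1$.

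With this claim both directions follow. For ($\Rightarrow$), membership $\pi\in\mathsf{Path}^\alpha_{\beta,\eta}$ already gives $\pi[n]=\alpha_n$ and supplies $\eta_n+t_n\models\mathfrak{g}(\beta_n)$ for all $0\le n\le k-1$; hence the hypothesis of the claim holds at every index, yielding $\eta_n=\nu[\beta,\eta,\mathbf{t}]_n$, and substituting into the guard conditions gives $\nu[\beta,\eta,\mathbf{t}]_n+\mathbf{t}_n\models\mathfrak{g}(\beta_n)$, i.e.\ $\mathbf{t}\in J_{\beta,\eta}$. For ($\Leftarrow$), I would run a strengthened induction that establishes $q_n=\mathfrak{q}(\beta_n)$, $\eta_n=\nu[\beta,\eta,\mathbf{t}]_n$ and $\eta_n+t_n\models\mathfrak{g}(\beta_n)$ at once: the assumption $\mathbf{t}\in J_{\beta,\eta}$ delivers $\nu[\beta,\eta,\mathbf{t}]_n+\mathbf{t}_n\models\mathfrak{g}(\beta_n)$, which—after the identification $\eta_n=\nu[\beta,\eta,\mathbf{t}]_n$—is exactly the guard condition the previous claim consumed, closing the loop. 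Together with $\pi[n]=\alpha_n$ this is the definition of $\pi\in\mathsf{Path}^\alpha_{\beta,\eta}$.

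The only genuine subtlety, and the step I would write most carefully, is the determinism identification: nothing in the definition of the run forces it to \emph{follow} $\beta$, so one must argue that the signature and target data packaged in $(\alpha,\beta)\in D^k$, combined with the guard being satisfied, leave $\kappa$ no alternative but to select $\beta_n$. Everything else is bookkeeping that matches two identical recurrences, and no complication arises at the top index, since $\nu[\beta,\eta,\mathbf{t}]_n$ and the guard conditions are required only for $0\le n\le k-1$.
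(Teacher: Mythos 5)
Your proposal is correct and follows essentially the same route as the paper's proof: both directions rest on an induction identifying the run valuations $\eta_n$ with $\nu[\beta,\eta,\mathbf{t}]_n$, with the determinism of $\Delta$ used to force $\kappa$ to select exactly $\beta_n$ at each step. You merely spell out in more detail the determinism step that the paper leaves as ``one can prove inductively on $n$''.
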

\begin{proof}
Suppose $\pi\in\mathsf{Path}^{\alpha}_{\beta,\eta}$. Let
$\mathcal{A}_{\mathfrak{q}(\beta_0),\eta}(\mathcal{L}_\pi)=\{(q_n,\eta_n)(\mathcal{L}(\pi[n]),\pi{\langle}{n}{\rangle})\}_{n\ge 0}$ and $\mathbf{t}:=(\pi{\langle}{0}{\rangle},\dots,\pi{\langle}{k-1}{\rangle})$. By $\pi\in\mathsf{Path}^{\alpha}_{\beta,\eta}$, $\pi[n]=\alpha_n$ for all $0\le n\le k$, and $q_n=\mathfrak{q}(\beta_n)$ and $\eta_n+\pi{\langle}{n}{\rangle}\models\mathfrak{g}(\beta_n)$ for all $0\le n\le k-1$. Then one can prove inductively on $n$ that $\eta_n=\nu[\beta,\eta,\mathbf{t}]_n$ for all $0\le n\le k-1$. Thus $\nu[\beta,\eta,\mathbf{t}]_n+\mathbf{t}_n\models\mathfrak{g}(\beta_n)$ for all $0\le n\le k-1$. It follows that  $\mathbf{t}\in J_{\beta,\eta}$.

Suppose now that $\pi[n]=\alpha_n$ for all $0\le n\le k$ and $\mathbf{t}:=(\pi{\langle}{0}{\rangle},\dots,\pi{\langle}{k-1}{\rangle})\in J_{\beta,\eta}$. Denote $\mathcal{A}_{\mathfrak{q}(\beta_0),\eta}(\mathcal{L}_\pi)=\{(q_n,\eta_n)(\mathcal{L}(\pi[n]),\pi{\langle}{n}{\rangle})\}_{n\ge 0}$. Since $\mathcal{A}$ is deterministic, one can prove inductively on $n$ that $q_n=\mathfrak{q}(\beta_n)$ and $\eta_n=\nu[\beta,\eta,\mathbf{t}]_n$ for all $0\le n\le k-1$. Then we have $\eta_n+\pi{\langle}{n}{\rangle}\models\mathfrak{g}(\beta_n)$ for all $0\le n\le k-1$.
\end{proof}
\begin{figure}[h]
\[
\eta=\nu[\beta,\eta,\mathbf{t}]_0\xrightarrow[\beta_0]{\mathbf{t}_0}\cdots\nu[\beta,\eta,\mathbf{t}]_n\xrightarrow[\beta_n]{\mathbf{t}_n}\nu[\beta,\eta,\mathbf{t}]_{n+1}
\dots\nu[\beta,\eta,\mathbf{t}]_{k-1}\xrightarrow[\beta_{k-1}]{\mathbf{t}_{k-1}}\bot
\]
\caption{The definition of $\nu[\beta,\eta,\mathbf{t}]_n$}\label{fig:decomp:2}
\end{figure}

\begin{remark}\label{rmk:1}
One can prove inductively that for all $0\le n\le k-1$ and for all $x\in\mathcal{X}$:
\begin{itemize}\itemsep0pt \parskip0pt \parsep0pt \topsep0pt
\item $\nu[\beta,\eta,\mathbf{t}]_n(x)+\mathbf{t}_n=\eta(x)+\sum_{i=0}^{n}\mathbf{t}_i$ if
$x\not\in\bigcup_{i=0}^{n-1}\mathfrak{X}(\beta_i)$; and
\item $\nu[\beta,\eta,\mathbf{t}]_n(x)+\mathbf{t}_n=\sum_{i=m+1}^{n}\mathbf{t}_i$ if $x\in\mathfrak{X}(\beta_m)\backslash\left(\bigcup_{i=m+1}^{n-1}\mathfrak{X}(\beta_i)\right)$ for some unique $m<n$.
\end{itemize}
Thus each $\nu[\beta,\eta,\mathbf{t}]_n(x)+\mathbf{t}_n$ is the summation of a possible constant and a consecutive segment of $\mathbf{t}_0,\dots,\mathbf{t}_{k-1}$.\qed
\end{remark}

\begin{lemma}\label{lem:cons:recr}
Let $k\ge 2$. Suppose $\beta\in\Delta^{k}$ and $\eta\in\mathrm{Val}(\mathcal{X})$. For all $\mathbf{t}\in\mathbb{R}^k_{\ge 0}$,
\[
\mathbf{t}\in J_{\beta,\eta}\mbox{ iff }\eta+\mathbf{t}_0\models\mathfrak{g}(\beta_0)\mbox{ and }\hat{\mathbf{t}}\in J_{\hat{\beta},(\eta+\mathbf{t}_0)[\mathfrak{X}(\beta_0):=0]}\enskip,
\]
where $\hat{\beta}=\beta_1\dots\beta_{k-1}$ and $\hat{\mathbf{t}}=(\mathbf{t}_1,\dots,\mathbf{t}_{k-1})$.
\end{lemma}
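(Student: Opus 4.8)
The plan is to exploit the fact that the defining recursion for $\nu[\beta,\eta,\mathbf{t}]$ already has precisely the shape needed to match the shifted problem. The crucial observation is that the starting valuation for the shifted system, $\eta':=(\eta+\mathbf{t}_0)[\mathfrak{X}(\beta_0):=0]$, is exactly $\nu[\beta,\eta,\mathbf{t}]_1$. Accordingly, I would first establish an \emph{alignment identity} asserting that the two valuation families coincide after a single index shift, namely
\[
\nu[\beta,\eta,\mathbf{t}]_n=\nu[\hat\beta,\eta',\hat{\mathbf{t}}]_{n-1}\quad\text{for all }1\le n\le k-1\enskip.
\]

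First I would verify the base case $n=1$ straight from the recursion: $\nu[\beta,\eta,\mathbf{t}]_1=(\eta+\mathbf{t}_0)[\mathfrak{X}(\beta_0):=0]=\eta'=\nu[\hat\beta,\eta',\hat{\mathbf{t}}]_0$. Then I would induct on $n$, relying on the shifted indexing $\hat\beta_m=\beta_{m+1}$ and $\hat{\mathbf{t}}_m=\mathbf{t}_{m+1}$. Under the induction hypothesis $\nu[\beta,\eta,\mathbf{t}]_n=\nu[\hat\beta,\eta',\hat{\mathbf{t}}]_{n-1}$, the update map $\xi\mapsto(\xi+\mathbf{t}_n)[\mathfrak{X}(\beta_n):=0]$ applied at step $n$ on the left is the identical map applied at step $n-1$ on the right, since $\mathbf{t}_n=\hat{\mathbf{t}}_{n-1}$ and $\mathfrak{X}(\beta_n)=\mathfrak{X}(\hat\beta_{n-1})$; this yields $\nu[\beta,\eta,\mathbf{t}]_{n+1}=\nu[\hat\beta,\eta',\hat{\mathbf{t}}]_{n}$ and closes the induction.

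With the alignment identity in hand, the equivalence follows by splitting the defining conjunction of $J_{\beta,\eta}$ into its $n=0$ conjunct and its $1\le n\le k-1$ conjuncts. The $n=0$ conjunct reads $\nu[\beta,\eta,\mathbf{t}]_0+\mathbf{t}_0=\eta+\mathbf{t}_0\models\mathfrak{g}(\beta_0)$, which is exactly the first condition in the statement. For the remaining conjuncts I would substitute the alignment identity and rename $m:=n-1$, rewriting $\nu[\beta,\eta,\mathbf{t}]_n+\mathbf{t}_n\models\mathfrak{g}(\beta_n)$ as $\nu[\hat\beta,\eta',\hat{\mathbf{t}}]_m+\hat{\mathbf{t}}_m\models\mathfrak{g}(\hat\beta_m)$ ranging over $0\le m\le k-2$, which is by definition the membership $\hat{\mathbf{t}}\in J_{\hat\beta,\eta'}$.

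I do not anticipate a genuine obstacle, as the content is entirely bookkeeping about the single-step index shift. The only point requiring care is keeping the index ranges consistent: the alignment identity must be stated over $1\le n\le k-1$ so that all $k-1$ guards of $\hat\beta$ are accounted for, and the reindexing must carry the range $1\le n\le k-1$ to the range $0\le m\le k-2$ matching the definition of $J_{\hat\beta,\eta'}$. The hypothesis $k\ge 2$ is what guarantees $\hat\beta$ is nonempty, so that $J_{\hat\beta,\eta'}$ is well defined.
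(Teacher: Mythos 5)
Your proposal is correct and follows essentially the same route as the paper: the paper's proof likewise rests on the alignment identity $\nu[\beta,\eta,\mathbf{t}]_{n+1}=\nu[\hat\beta,(\eta+\mathbf{t}_0)[\mathfrak{X}(\beta_0):=0],\hat{\mathbf{t}}]_n$ for $0\le n\le k-2$ and then splits the defining conjunction of $J_{\beta,\eta}$ into the $n=0$ conjunct and the reindexed remainder. You merely spell out the induction behind the alignment identity in more detail than the paper does.
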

\begin{proof}
Suppose $\mathbf{t}\in\mathbb{R}^k_{\ge 0}$. Note that for all $0\le n\le k-2$\enskip, we have
\[
\nu[\beta,\eta,\mathbf{t}]_{n+1}=\nu\left[\hat{\beta},\nu[\beta,\eta,\mathbf{t}]_1,\hat{\mathbf{t}}\right]_n=\nu\left[\hat{\beta},(\eta+\mathbf{t}_0)[\mathfrak{X}(\beta_0):=0],\hat{\mathbf{t}}\right]_n\enskip.
\]
Then we obtain: $\mathbf{t}\in J_{\beta,\eta}$

\begin{tabular}{rl}
iff & $\nu[\beta,\eta,\mathbf{t}]_n+\mathbf{t}_n\models\mathfrak{g}(\beta_n)$ for all $0\le n\le k-1$ \\
iff & $\eta+\mathbf{t}_0\models\mathfrak{g}(\beta_0)$ and $\nu[\beta,\eta,\mathbf{t}]_n+\mathbf{t}_n\models\mathfrak{g}(\beta_n)$ for all $1\le n\le k-1$\\
iff & $\eta+\mathbf{t}_0\models\mathfrak{g}(\beta_0)$ and $\nu
\left[\hat{\beta},(\eta+\mathbf{t}_0)[\mathfrak{X}(\beta_0):=0],\hat{\mathbf{t}}\right]_n+\hat{\mathbf{t}}_n\models\mathfrak{g}({\hat{\beta}}_n)$\\
    & for all $0\le n\le k-2$\\
iff & $\eta+\mathbf{t}_0\models\mathfrak{g}(\beta_0)$ and $\hat{\mathbf{t}}\in J_{\hat{\beta},(\eta+\mathbf{t}_0)[\mathfrak{X}(\beta_0):=0]}$\enskip.\\
\end{tabular}

\end{proof}
Now we prove the measurability result and the integral equations~\cite{DBLP:journals/corr/abs-1101-3694}. First we demonstrate that each closed subset of $\mathbb{R}_{\ge 0}^k$ is measurable when equipped with some $\alpha\in S^{k+1}$. Below given $\alpha\in S^{k+1}$ and $W\subseteq\mathbb{R}_{\ge 0}^k$ with $k\ge 1$, we define $\mathsf{Path}[\alpha,W]$ as the following set:
\[
\{\pi\in\mathrm{Path}(\mathcal{M})\mid\pi[n]=\alpha_n\mbox{ for all }0\le n\le k\mbox{ and }(\pi{\langle}{0}{\rangle},\dots,\pi{\langle}{k-1}{\rangle})\in W\}\enskip.
\]

\begin{lemma}\label{lem:mrbl:closed}
Suppose $\alpha\in S^{k+1}$ and $W\subseteq\mathbb{R}_{\ge 0}^k$ with $k\ge 1$. If $W$ is closed, then $\mathsf{Path}[\alpha,W]$ is measurable under $\Omega$. Furthermore, the probability mass of $\mathsf{Path}[\alpha, W]$ under $\mathcal{B}_{\alpha_0}$ equals $\int_{\mathbb{R}^k_{\ge 0}}\mathbf{D}(\alpha,\mathbf{t})\cdot{\langle}{W}{\rangle}(\mathbf{t})\,\mathrm{d}\mathbf{t}$, where
\[
\mathbf{D}(\alpha,\mathbf{t}):=\prod_{n=0}^{k-1}\left\{\mathbf{P}(\alpha_n,\alpha_{n+1})\cdot\Lambda_{\alpha_n}(\mathbf{t}_n)\right\}\enskip.
\]
\end{lemma}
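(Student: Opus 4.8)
The plan is to build the identity from the simplest sets---products of intervals---up to arbitrary closed sets, exploiting that a product of intervals is \emph{exactly} a cylinder set whose measure is already pinned down by the definition of $\mathcal{B}_{\alpha_0}$. First I would treat the case where $W$ is a \emph{box}, i.e. $W=I_0\times\cdots\times I_{k-1}$ for intervals $I_n\subseteq\mathbb{R}_{\ge 0}$. Here $\mathsf{Path}[\alpha,W]$ is literally the cylinder set $R_\theta$ for the template $\theta=\alpha_0 I_0\alpha_1\cdots I_{k-1}\alpha_k$, hence measurable. Since $\mathcal{B}_{\alpha_0}$ is the measure induced by the Dirac distribution on $\alpha_0$, the defining formula gives $\mathcal{B}_{\alpha_0}(R_\theta)=\prod_{n=0}^{k-1}\mathbf{P}(\alpha_n,\alpha_{n+1})\cdot\prod_{n=0}^{k-1}\int_{I_n}\Lambda_{\alpha_n}(t)\,\mathrm{d}t$. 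Applying Tonelli's theorem to the nonnegative integrand $\prod_n\Lambda_{\alpha_n}$ turns the product of one-dimensional integrals into $\int_{\mathbb{R}^k_{\ge 0}}\prod_n\Lambda_{\alpha_n}(\mathbf{t}_n)\cdot{\langle}{W}{\rangle}(\mathbf{t})\,\mathrm{d}\mathbf{t}$, and pulling in the constants $\mathbf{P}(\alpha_n,\alpha_{n+1})$ yields exactly $\int\mathbf{D}(\alpha,\mathbf{t})\,{\langle}{W}{\rangle}(\mathbf{t})\,\mathrm{d}\mathbf{t}$, settling the claim for boxes.

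Next I would pass to $W$ that is \emph{open} relative to $\mathbb{R}^k_{\ge 0}$. Every such set admits a countable \emph{disjoint} decomposition $W=\bigsqcup_i B_i$ into half-open dyadic boxes contained in $\mathbb{R}^k_{\ge 0}$, each $B_i$ again a product of intervals. Then $\mathsf{Path}[\alpha,W]=\bigsqcup_i\mathsf{Path}[\alpha,B_i]$ is a countable disjoint union of cylinder sets, hence measurable, and countable additivity of $\mathcal{B}_{\alpha_0}$ together with the box case gives $\mathcal{B}_{\alpha_0}(\mathsf{Path}[\alpha,W])=\sum_i\int\mathbf{D}\,{\langle}{B_i}{\rangle}\,\mathrm{d}\mathbf{t}$. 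As the summands are nonnegative and ${\langle}{W}{\rangle}=\sum_i{\langle}{B_i}{\rangle}$ pointwise (by disjointness), Tonelli---equivalently monotone convergence for series---lets me interchange sum and integral to obtain $\int\mathbf{D}\,{\langle}{W}{\rangle}\,\mathrm{d}\mathbf{t}$.

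Finally, for \emph{closed} $W$ I would complement inside the full box. Note that $\mathbb{R}^k_{\ge 0}$ is itself a box, so $\mathsf{Path}[\alpha,\mathbb{R}^k_{\ge 0}]$ is a cylinder set of \emph{finite} mass $\prod_{n}\mathbf{P}(\alpha_n,\alpha_{n+1})\le 1$, while the relative complement $U:=\mathbb{R}^k_{\ge 0}\setminus W$ is open. Writing $\mathsf{Path}[\alpha,W]=\mathsf{Path}[\alpha,\mathbb{R}^k_{\ge 0}]\setminus\mathsf{Path}[\alpha,U]$ exhibits it as a difference of measurable sets, hence measurable; and subtracting the two finite masses---legitimate because $\mathsf{Path}[\alpha,U]\subseteq\mathsf{Path}[\alpha,\mathbb{R}^k_{\ge 0}]$ has finite measure---matches the pointwise identity ${\langle}{W}{\rangle}={\langle}{\mathbb{R}^k_{\ge 0}}{\rangle}-{\langle}{U}{\rangle}$ on the integral side, giving the desired formula. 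Throughout, continuity of $\mathbf{D}$ and Borel-measurability of $W$ ensure the integrand $\mathbf{D}\cdot{\langle}{W}{\rangle}$ is measurable, so every integral is well defined.

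I expect the only real obstacle to be bookkeeping: carrying out the disjoint dyadic decomposition inside the subspace topology of $\mathbb{R}^k_{\ge 0}$ and checking that each piece is a genuine product of intervals (so that it corresponds to an admissible template), together with the sum--integral interchange. Both become routine once one fixes half-open dyadic cubes and invokes Tonelli on the nonnegative density, so I do not anticipate any step being genuinely deep; the conceptual content is entirely in the box case, where the cylinder-measure definition and Fubini--Tonelli do the work.
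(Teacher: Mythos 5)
Your proof is correct, but it takes a genuinely different route from the paper's. The paper works with the closed set $W$ directly: it covers $W$ by closed hypercubes of side $\epsilon_n=(1/2)^n$, shows that the intersection of these covers is exactly $W$ (this is where closedness is used), and then passes to the limit in the integral via the Dominated Convergence Theorem. You instead prove the identity first for boxes (which are literally cylinder sets), then for relatively open subsets of $\mathbb{R}^k_{\ge 0}$ via a countable \emph{disjoint} decomposition into half-open dyadic cubes plus countable additivity and Tonelli, and finally obtain the closed case by complementing inside the full box $\mathbb{R}^k_{\ge 0}$, which has finite mass. Both arguments are sound. Your version buys two things: the disjointness of the dyadic decomposition makes the identification of $\mathcal{B}_{\alpha_0}(\mathsf{Path}[\alpha,\cdot])$ with the integral immediate, whereas the paper's covers $C_k^{\epsilon_n}$ are unions of closed cubes overlapping on their faces, so the step ``one can verify that $\int\mathbf{D}\cdot\langle C_k^{\epsilon_n}\rangle$ equals the probability mass of $\mathsf{Path}[\alpha,C_k^{\epsilon_n}]$'' quietly requires an extra (easy but unstated) argument about the measure-zero overlaps; and your proof establishes the formula for all relatively open sets along the way, not just closed ones. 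The one point you should make explicit is the claim that a relatively open subset of $\mathbb{R}^k_{\ge 0}$ decomposes into half-open dyadic cubes \emph{contained in} $\mathbb{R}^k_{\ge 0}$: this holds because the dyadic grid has the coordinate hyperplanes $x_i=0$ among its faces, so the dyadic cube containing a point of $\mathbb{R}^k_{\ge 0}$ never crosses into negative coordinates. With that observation recorded, your argument is complete.
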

\begin{proof}
Let $\alpha\in S^{k+1}$ and $W\subseteq\mathbb{R}_{\ge 0}^k$ closed with $k\ge 1$. For every $\epsilon>0$, define the hypercube set $H_k^\epsilon\subseteq\mathcal{P}(\mathbb{R}^k_{\ge 0})$ as follows:
\[
H_k^\epsilon:=\{[m_0\epsilon,(m_0+1)\epsilon]\times\dots\times
[m_{k-1}\epsilon,(m_{k-1}+1)\epsilon]\mid m_n\in\mathbb{N}_0\mbox{
for }0\le n\le k-1\}
\]
When equipped with $\alpha$, each hypercube $\prod_{n=0}^{k-1}[m_n\epsilon,(m_n+1)\epsilon]$ corresponds to the template $\alpha_0[m_0\epsilon,(m_0+1)\epsilon]\dots\alpha_{k-1}[m_{k-1}\epsilon,(m_{k-1}+1)\epsilon]\alpha_k$, which in turn corresponds to a cylinder set. Now define $C_k^\epsilon$ to be a hypercube-cover of $W$ by:
\[
C_k^\epsilon:=\bigcup\{\vartheta\in H_k^\epsilon\mid\vartheta\cap W\ne\emptyset\}\enskip.
\]
Further define $C_k:=\bigcap_{n\in\mathbb{N}}C_{k}^{\epsilon_n}$ where $\epsilon_n=(\frac{1}{2})^n$. We prove that $W=C_k$. It is clear that $W\subseteq C_k$. Suppose that $C_k\not\subseteq W$. Then there is a vector $\xi\in C_k\backslash W$. Since $W$ is a closed set, there exists a neighbourhood around $\xi$ of diameter $d$ in which all vectors are not in $W$. Then $\xi\not\in C_k$ since $\xi\not\in C_{k}^{\epsilon}$ for all $\epsilon<\frac{d}{2\sqrt{k}}$. Contradiction. Thus $W=C_k$. Then it follows from $\bigcap_{n}\mathsf{Path}[\alpha,C_k^{\epsilon_n}]=\mathsf{Path}[\alpha,C_k]$ that $\mathsf{Path}[\alpha,W]$ is measurable under $\Omega$.

We have shown that $W=\bigcap_{n\in\mathbb{N}}C^{\epsilon_n}_k$. Moreover, $\{C^{\epsilon_n}_k\}_{n\in\mathbb{N}}$ is monotonically decreasing since $\epsilon_n=(\frac{1}{2})^n$. Thus ${\langle}{W}{\rangle}(\mathbf{t})=\lim\limits_{n\rightarrow\infty}{\langle}{C}_k^{\epsilon_n}{\rangle}(\mathbf{t})$ for all $\mathbf{t}\in\mathbb{R}^k_{\ge 0}$. Note that
\[
\int_{\mathbb{R}^k_{\ge 0}}\left|\mathbf{D}(\alpha,\mathbf{t})\cdot\begin{pmatrix}
{\langle}{W}{\rangle}(\mathbf{t})\\{\langle}{C}_k^{\epsilon_n}{\rangle}(\mathbf{t})
\end{pmatrix}\right|\,\mathrm{d}\mathbf{t}\le\int_{\mathbb{R}^k_{\ge 0}}\mathbf{D}(\alpha,\mathbf{t})\,\mathrm{d}\mathbf{t}=\prod_{n=0}^{k-1}\,\mathbf{P}(\alpha_n,\alpha_{n+1})<\infty
\]
Then we have $\lim\limits_{n\rightarrow\infty}\int_{\mathbb{R}^{k}_{\ge 0}}\mathbf{D}(\alpha,\mathbf{t})\cdot{\langle}{C}_k^{\epsilon_n}{\rangle}(\mathbf{t})\,\mathrm{d}\mathbf{t}=\int_{\mathbb{R}^{k}_{\ge 0}}\mathbf{D}(\alpha,\mathbf{t})\cdot{\langle}{W}{\rangle}(\mathbf{t})\,\mathrm{d}\mathbf{t}$ by Dominated Convergence Theorem. One can verify that $\int_{\mathbb{R}^{k}_{\ge 0}}\mathbf{D}(\alpha,\mathbf{t})\cdot{\langle}{C}_k^{\epsilon_n}{\rangle}(\mathbf{t})\,\mathrm{d}\mathbf{t}$ equals the probability mass of $\mathsf{Path}\left[\alpha, C_k^{\epsilon_n}\right]$ under $\mathcal{B}_{\alpha_0}$. Thus the probability mass of $\mathsf{Path}[\alpha, W]$ under $\mathcal{B}_{\alpha_0}$ equals $\int_{\mathbb{R}^k_{\ge 0}}\mathbf{D}(\alpha,\mathbf{t})\cdot{\langle}{W}{\rangle}(\mathbf{t})\,\mathrm{d}\mathbf{t}$.
\end{proof}
We handle the measurability result and the system of integral equations simultaneously in the following theorem. Below we define $\mathcal{V}:=S\times Q\times\mathrm{Val}(\mathcal{X})$.

\begin{theorem}\label{thm:mrbl}
For all $(s,q,\eta)\in\mathcal{V}$ and $k\ge 0$, $\mathsf{Path}_k(s,q,\eta)$ is measurable under $\Omega$. Furthermore, the family $\{\mathsf{prob}_k:\mathcal{V}\mapsto [0,1]\}_{k\ge 0}$, where $\mathsf{prob}_k(s,q,\eta)$ is the probability mass of $\mathsf{Path}_k(s,q,\eta)$ under $\mathcal{B}_s$, satisfies the following properties: $\mathsf{prob}_0(s,q,\eta)={\langle}{F}{\rangle}(q)$; If $q\in F$ then $\mathsf{prob}_{k+1}(s,q,\eta)=1$, otherwise
\begin{align*}
&\mathsf{prob}_{k+1}(s,q,\eta)=\\
&\quad\int_0^{+\infty}\left\{\Lambda_s(t)\cdot\left[\sum_{u\in
S}\mathbf{P}(s,u)\cdot\mathsf{prob}_k\left(u,\mathbf{q}_{q,s}^{\eta+t},(\eta+t)[\mathbf{X}_{q,s}^{\eta+t}:=0]\right)\right]\right\}\mathrm{d}t
\end{align*}
\end{theorem}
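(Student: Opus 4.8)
The plan is to prove both assertions together by induction on $k$, using the path decomposition of Lemma~\ref{lem:mrbl:dcmp} as the organizing principle. For $k\ge 1$ that lemma writes $\mathsf{Path}_k(s,q,\eta)$ as a \emph{finite disjoint} union $\bigcup\{\mathsf{Path}^\alpha_{\beta,\eta}\mid(\alpha,\beta)\in D^k(q,s)\}$, and Lemma~\ref{lem:dcmp:char} identifies each piece with $\mathsf{Path}[\alpha,J_{\beta,\eta}]$. Hence both the measurability of $\mathsf{Path}_k$ and the value $\mathsf{prob}_k(s,q,\eta)=\sum_{(\alpha,\beta)\in D^k(q,s)}\int_{\mathbb{R}^k_{\ge0}}\mathbf{D}(\alpha,\mathbf{t})\langle J_{\beta,\eta}\rangle(\mathbf{t})\,\mathrm{d}\mathbf{t}$ reduce to understanding one piece $\mathsf{Path}[\alpha,J_{\beta,\eta}]$. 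The case $k=0$ is handled directly: $\mathsf{Path}_0(s,q,\eta)$ is the whole (measurable) cylinder of paths starting in $s$ when $q\in F$, of probability $1$, and is empty otherwise, so $\mathsf{prob}_0(s,q,\eta)=\langle F\rangle(q)$.

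The first obstacle is that Lemma~\ref{lem:mrbl:closed} only covers \emph{closed} $W$, whereas $J_{\beta,\eta}$ need not be closed: by Remark~\ref{rmk:1} each defining constraint $\nu[\beta,\eta,\mathbf{t}]_n(x)+\mathbf{t}_n\Join c$ is affine in $\mathbf{t}$, so $J_{\beta,\eta}$ is a finite intersection of open or closed affine half-spaces, a Borel (indeed convex) set that, in the presence of strict guards $<,>$, may fail to be closed. I would therefore upgrade Lemma~\ref{lem:mrbl:closed} from closed to Borel sets. Fixing $\alpha$, the family $\{W\subseteq\mathbb{R}^k_{\ge0}\mid\mathsf{Path}[\alpha,W]\in\mathcal{F}\}$ is a $\sigma$-algebra, because $\mathsf{Path}[\alpha,\cdot]$ is the preimage under the coordinate-reading map $\pi\mapsto(\pi\langle0\rangle,\dots,\pi\langle k-1\rangle)$ restricted to the measurable cylinder of paths with $\pi[n]=\alpha_n$, and preimages commute with complement and countable union. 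As this $\sigma$-algebra contains all closed sets by Lemma~\ref{lem:mrbl:closed}, it contains all Borel sets, so each $\mathsf{Path}[\alpha,J_{\beta,\eta}]$ is measurable. For the probability value, the two finite Borel measures $W\mapsto\mathcal{B}_{\alpha_0}(\mathsf{Path}[\alpha,W])$ and $W\mapsto\int\mathbf{D}(\alpha,\mathbf{t})\langle W\rangle(\mathbf{t})\,\mathrm{d}\mathbf{t}$ agree on the $\pi$-system of closed sets (Lemma~\ref{lem:mrbl:closed}), which contains $\mathbb{R}^k_{\ge0}$ and generates the Borel $\sigma$-algebra, hence they agree on all Borel sets by Dynkin's $\pi$--$\lambda$ theorem. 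This establishes measurability of $\mathsf{Path}_k(s,q,\eta)$ together with the displayed formula for $\mathsf{prob}_k$.

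For the integral equation, the case $q\in F$ is immediate, since acceptance already occurs at step $0$, whence $\mathsf{Path}_{k+1}(s,q,\eta)$ is the full cylinder over $s$ and $\mathsf{prob}_{k+1}=1$. Assume $q\notin F$. Starting from $\mathsf{prob}_{k+1}(s,q,\eta)=\sum_{(\alpha,\beta)\in D^{k+1}(q,s)}\int\mathbf{D}(\alpha,\mathbf{t})\langle J_{\beta,\eta}\rangle\,\mathrm{d}\mathbf{t}$, I would peel off the leading coordinate: factoring $\mathbf{D}(\alpha,\mathbf{t})=\mathbf{P}(s,\alpha_1)\Lambda_s(\mathbf{t}_0)\,\mathbf{D}(\hat\alpha,\hat{\mathbf{t}})$ and splitting $\langle J_{\beta,\eta}\rangle$ via Lemma~\ref{lem:cons:recr} into $\langle\mathfrak{g}(\beta_0)\rangle(\eta+\mathbf{t}_0)$ and $\langle J_{\hat\beta,\eta'}\rangle(\hat{\mathbf{t}})$ with $\eta'=(\eta+\mathbf{t}_0)[\mathfrak{X}(\beta_0):=0]$, so that Tonelli's theorem (all integrands non-negative) separates the $\mathbf{t}_0$-integral from the $\hat{\mathbf{t}}$-integral. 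The central bookkeeping step is the bijection $(\alpha,\beta)\mapsto(\beta_0,\alpha_1,(\hat\alpha,\hat\beta))$ between $D^{k+1}(q,s)$ (with $q\notin F$) and triples consisting of a rule $\gamma=\beta_0\in\Delta$ with $\mathfrak{q}(\gamma)=q$ and $\mathfrak{a}(\gamma)=\mathcal{L}(s)$, a state $u=\alpha_1\in S$, and a pair $(\hat\alpha,\hat\beta)\in D^k(\mathfrak{q}'(\gamma),u)$; verifying that the consistency and acceptance conditions of Definition~\ref{def:dcmp:1} transfer correctly is exactly where $q\notin F$ is used. Re-summing the inner $\hat{\mathbf{t}}$-integrals over $(\hat\alpha,\hat\beta)\in D^k(\mathfrak{q}'(\gamma),u)$ rebuilds $\mathsf{prob}_k(u,\mathfrak{q}'(\gamma),\eta')$ by the formula already proved, giving $\mathsf{prob}_{k+1}(s,q,\eta)=\int_0^\infty\Lambda_s(t)\sum_u\mathbf{P}(s,u)\sum_\gamma\langle\mathfrak{g}(\gamma)\rangle(\eta+t)\,\mathsf{prob}_k(u,\mathfrak{q}'(\gamma),(\eta+t)[\mathfrak{X}(\gamma):=0])\,\mathrm{d}t$. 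Finally, determinism and totality of $\Delta$ guarantee that for each fixed $t$ exactly one rule $\gamma$ from $(q,\mathcal{L}(s))$ satisfies $\eta+t\models\mathfrak{g}(\gamma)$, namely the one realizing $(\mathbf{g}^{\eta+t}_{q,s},\mathbf{X}^{\eta+t}_{q,s},\mathbf{q}^{\eta+t}_{q,s})$; this collapses the inner sum over $\gamma$ to a single term and yields precisely the claimed equation. The base $k+1=1$, where Lemma~\ref{lem:cons:recr} does not apply, is obtained by the same collapse computed directly.

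The main obstacle I anticipate is the measurability upgrade of the second paragraph: recognizing that $J_{\beta,\eta}$ is only Borel rather than closed, and that one cannot simply appeal to completeness of $\mathcal{B}_s$ (the cylinder-generated $\sigma$-algebra need not be complete), so the clean route is the $\sigma$-algebra together with the $\pi$--$\lambda$ argument rather than a measure-zero-boundary sandwich. A secondary, more clerical difficulty is the bijection together with measurability of the resulting $t$-integrand; the latter follows because, by Remark~\ref{rmk:1}, $(\eta,\mathbf{t})\mapsto\langle J_{\beta,\eta}\rangle(\mathbf{t})$ is jointly measurable, so each $\mathsf{prob}_k(u,\cdot,\cdot)$ is a measurable function of its valuation argument and composes measurably with the affine map $t\mapsto(\eta+t)[\mathfrak{X}(\gamma):=0]$.
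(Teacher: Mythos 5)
Your proposal is correct, and its overall skeleton coincides with the paper's: decompose $\mathsf{Path}_{k}(s,q,\eta)$ via Lemma~\ref{lem:mrbl:dcmp}, identify each piece with $\mathsf{Path}[\alpha,J_{\beta,\eta}]$ via Lemma~\ref{lem:dcmp:char}, derive the value formula for each piece, and then obtain the integral equation by peeling off the first coordinate with Lemma~\ref{lem:cons:recr}, Fubini/Tonelli, the index bijection $D^{k+1}(q,s)\cong\{(u,\gamma,(\hat\alpha,\hat\beta))\}$, and the determinism/totality collapse of the sum over rules. The one place where you genuinely diverge is the treatment of non-closed $J_{\beta,\eta}$. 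The paper stays elementary: it writes $J_{\beta,\eta}=\bigcup_{n}J^{n}_{\beta,\eta}$ as an increasing union of closed sets obtained by shrinking each strict constraint by $(\tfrac12)^{n}$, and then passes to the limit with the Dominated Convergence Theorem, both for measurability and for the value $(\dag)$. You instead upgrade Lemma~\ref{lem:mrbl:closed} once and for all to Borel sets: the good-sets family $\{W\mid\mathsf{Path}[\alpha,W]\in\mathcal{F}\}$ is a $\sigma$-algebra containing the closed sets, and the two finite measures $W\mapsto\mathcal{B}_{\alpha_0}(\mathsf{Path}[\alpha,W])$ and $W\mapsto\int\mathbf{D}(\alpha,\mathbf{t})\langle W\rangle(\mathbf{t})\,\mathrm{d}\mathbf{t}$ agree on the $\pi$-system of closed sets (including the whole space, so total masses match), hence on all Borel sets by the $\pi$--$\lambda$ theorem. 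Your route is more general --- it handles any Borel $W$, not just the finitely-constrained polyhedral sets that actually arise --- at the cost of invoking the measure-uniqueness machinery; the paper's route exploits the specific shape of $J_{\beta,\eta}$ and needs only monotone approximation and dominated convergence. Both are sound, and your observation that one cannot shortcut via completeness of $\mathcal{B}_s$ is a fair warning; the remaining points you flag (joint measurability of the $t$-integrand, the role of $q\notin F$ in the bijection, the direct computation of the base case $k+1=1$) match what the paper does implicitly or explicitly.
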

\begin{proof}
First we prove that $\mathsf{Path}_k(s,q,\eta)$ is measurable. Let $(s,q,\eta)\in\mathcal{V}$. The case $k=0$ is easy: $\mathsf{Path}_0(s,q,\eta)$ is either $\emptyset$ or $\{\pi\in\Omega\mid \pi[0]=s\}$, depending on whether $q\not\in F$ or $q\in F$. We prove the case when $k\ge 1$. By Lemma~\ref{lem:mrbl:dcmp}, it suffices to prove that each $\mathsf{Path}^{\alpha}_{\beta,\eta}$ with $(\alpha,\beta)\in D^{k}(q,s)$ is measurable.

Let $(\alpha,\beta)\in D^{k}(q,s)$. By Lemma~\ref{lem:dcmp:char}, $\mathsf{Path}^{\alpha}_{\beta,\eta}=\mathsf{Path}[\alpha,J_{\beta,\eta}]$. As is mentioned previously, $J_{\beta,\eta}$ is specified by a finite conjunctive collection of linear constraints on $\{\pi{\langle}{n}{\rangle}\}_{0\le n\le k-1}$: each takes the form $\sum_{n=l_1}^{l_2}\pi{\langle}{n}{\rangle}\Join c$ where $0\le l_1\le l_2\le k-1$, $\Join\in\{<,\le,>,\ge\}$ and $c\in\mathbb{R}$. We distinguish two cases below.

\textbf{Case 1}: All $\Join$'s present in the linear constraints are either $\le$ or $\ge$. Then $J_{\beta,\eta}$ is closed  in $\mathbb{R}^k_{\ge 0}$. Thus by Lemma~\ref{lem:mrbl:closed}, $\mathsf{Path}[\alpha,J_{\beta,\eta}]$ is measurable under $\Omega$.

\textbf{Case 2}: Some $\Join$ is $<$ or $>$. The point is that ``${<}{c}=\bigcup\{{\le}{c-\epsilon}\mid
\epsilon>0\}$'' and ``${>}{c}$'' likewise. Thus by the fact that $J_{\beta,\eta}$ is specified by a finite number of linear constraints, we have $J_{\beta,\eta}=\bigcup_{n\in\mathbb{N}}{J}^{n}_{\beta,\eta}$ where ${J}^n_{\beta,\eta}$ is specified by the set of constraints obtained from $J_{\beta,\eta}$ by replacing
each occurrence of ``${<}{c}$'' with ``${\le}{c-{({\frac{1}{2}})}^n}$'' and ``${>}{c}$'' likewise. Because each $J^n_{\beta,\eta}$ is closed, $\mathsf{Path}[\alpha,J^n_{\beta,\eta}]$ is measurable under $\Omega_s$. Then by $\mathsf{Path}[\alpha,J_{\beta,\eta}]=\bigcup_{n\in\mathbb{N}}\mathsf{Path}[\alpha,J^n_{\beta,\eta}]$, we obtain that $\mathsf{Path}^{\alpha}_{\beta,\eta}$ is measurable under $\Omega$.

Now we prove the integral-equation system for $\mathsf{prob}$. Let $(s,q,\eta)\in\mathcal{V}$. By definition, we have $\mathsf{prob}_0(s,q,\eta)={\langle}{F}{\rangle}(q)$ and $\mathsf{prob}_{k+1}(s,q,\eta)=1$ if $q\in F$. We prove the relation between $\mathsf{prob}_{k+1}$ and $\mathsf{prob}_k$ when $q\not\in F$. By Lemma~\ref{lem:mrbl:dcmp}, $\mathsf{Path}_{k+1}(s,q,\eta)$ is the disjoint union of $\{\mathsf{Path}^\alpha_{\beta,\eta}\mid(\alpha,\beta)\in D^{k+1}(q,s)\}$. Then
$\mathsf{prob}_{k+1}(s,q,\eta)=\sum_{(\alpha,\beta)\in D^{k+1}(q,s)}\mathsf{prob}^{\alpha}_{\beta,\eta}$\enskip,
where $\mathsf{prob}^{\alpha}_{\beta,\eta}$ is the probability mass of $\mathsf{Path}^{\alpha}_{\beta,\eta}$ under $\mathcal{B}_{\alpha_0}$. We first prove that:
\begin{equation*}
\mathsf{prob}^{\alpha}_{\beta,\eta}=\int_{\mathbb{R}^{m}_{\ge 0}}\mathbf{D}(\alpha,\mathbf{t})\cdot {\langle}{J}_{\beta,\eta}{\rangle}(\mathbf{t})\,\mathrm{d}\mathbf{t}\tag{$\dag$}
\end{equation*}
given any $m\ge 1$ and $(\alpha,\beta)\in D^{m}$. Analogously, we distinguish two cases based on the types of constraints $\Join$ that specify $J_{\beta,\eta}$.

\textbf{Case 1}: All $\Join$'s are either $\le$ or $\ge$. Then the result follows from Lemma~\ref{lem:mrbl:closed}.

\textbf{Case 2}: Some $\Join$ is $<$ or $>$. We have shown that $J_{\beta,\eta}=\bigcup_{n\in\mathbb{N}}J^n_{\beta,\eta}$, where $J^n_{\beta,\eta}$ is obtained from $J_{\beta,\eta}$ by relaxing $<$ and $>$ with $(\frac{1}{2})^n$. Furthermore, $\lim\limits_{n\rightarrow\infty}{\langle}{J}^n_{\beta,\eta}{\rangle}(\mathbf{t})={\langle}{J}_{\beta,\eta}{\rangle}(\mathbf{t})$ because $\{J^n_{\beta,\eta}\}_{n\ge 1}$ is monotonically increasing. By Lemma~\ref{lem:mrbl:closed}, $\mathcal{B}_{\alpha_0}\left(\mathsf{Path}[\alpha,J^n_{\beta,\eta}]\right)$ equals $\int_{\mathbb{R}^{m}_{\ge 0}}\mathbf{D}(\alpha,\mathbf{t})\cdot{\langle}{J}^n_{\beta,\eta}{\rangle}(\mathbf{t})\,\mathrm{d}\mathbf{t}$. Thus by Dominated Convergence Theorem, we obtain $(\dag)$.

Consider $\mathsf{prob}^{\alpha}_{\beta,\eta}$ where $k\ge 1$, $(\alpha,\beta)\in D^{k+1}$ and $\mathfrak{q}(\beta_0)\not\in F$. Define
\[
\hat{\alpha}:=\alpha_1\dots\alpha_{k+1}\mbox{ and }\hat{\beta}:=\beta_1\dots\beta_{k}
\]
By Lemma~\ref{lem:cons:recr},
${\langle}{J}_{\beta,\eta}{\rangle}(\mathbf{t})={\langle}{\mathfrak{g}(\beta_0)}{\rangle}(\eta+\mathbf{t}_0)\cdot{\langle}{J}_{\hat{\beta},(\eta+\mathbf{t}_0)[\mathfrak{X}(\beta_0):=0]}{\rangle}(\hat{\mathbf{t}})$ for all $\mathbf{t}\in\mathbb{R}^{k+1}_{\ge 0}$, where $\hat{\mathbf{t}}:=(\mathbf{t}_1,\dots,\mathbf{t}_{k})$. Then by Fubini's Theorem and $(\dag)$, we have
\begin{eqnarray*}
\mathsf{prob}^{\alpha}_{\beta,\eta}&=&\int_{\mathbb{R}^{k+1}_{\ge 0}}\mathbf{D}(\alpha,\mathbf{t})\cdot{\langle}{J}_{\beta,\eta}{\rangle}(\mathbf{t})\,\mathrm{d}\mathbf{t}\\
&=&\int_{\mathbb{R}_{\ge 0}}\mathbf{D}(\alpha_0\alpha_1,t)\cdot
{\langle}{\mathfrak{g}(\beta_0)}{\rangle}(\eta+t)\cdot\\
& &\qquad\left(\int_{\mathbb{R}^{k}_{\ge 0}}\mathbf{D}(\hat{\alpha},\hat{\mathbf{t}})\cdot{\big\langle}{J}_{\hat{\beta},(\eta+t)[\mathfrak{X}(\beta_0):=0]}{\big\rangle}(\hat{\mathbf{t}})\,\mathrm{d}\hat{\mathbf{t}}\right)\,\mathrm{d}t\\
&=&\int_{\mathbb{R}_{\ge 0}}\mathbf{D}(\alpha_0\alpha_1,t)\cdot
\langle{\mathfrak{g}(\beta_0)}\rangle(\eta+t)\cdot\mathsf{prob}^{\hat{\alpha}}_{\hat{\beta},(\eta+t)[\mathfrak{X}(\beta_0):=0]}\,\mathrm{d}t\\
\end{eqnarray*}
where in the last step, we use the fact that $(\hat{\alpha},\hat{\beta})\in D^k$. Below we prove the relation between $\mathsf{prob}_{k+1}(s,q,\eta)$ and $\mathsf{prob}_k$ when $q\not\in F$. If $k\ge 1$, we have:
\begin{eqnarray*}
& &\mathsf{prob}_{k+1}(s,q,\eta)\\
&=&\sum_{(\alpha,\beta)\in D^{k+1}(q,s)}\mathsf{prob}^{\alpha}_{\beta,\eta}\\
&=&\sum_{(\alpha,\beta)\in D^{k+1}(q,s)}\int_{\mathbb{R}_{\ge 0}}\mathbf{D}(\alpha_0\alpha_1,t)\cdot
\langle{\mathfrak{g}(\beta_0)}\rangle(\eta+t)\cdot\mathsf{prob}^{\hat{\alpha}}_{\hat{\beta},(\eta+t)[\mathfrak{X}(\beta_0):=0]}\,\mathrm{d}t\\
&=&\sum_{u\in S}\sum_{\gamma\in\Delta_{q,s}}\left\{\sum_{(\alpha,\beta)\in D^{k}(\mathfrak{q}'(\gamma),u)}\left[\int_{\mathbb{R}_{\ge 0}}\mathbf{D}(su,t)\cdot{\langle}{\mathfrak{g}(\gamma)}{\rangle}(\eta+t)\cdot\mathsf{prob}^{\alpha}_{\beta,(\eta+t)[\mathfrak{X}(\gamma):=0]}\,\mathrm{d}t\right]\right\}\\
&=&\sum_{u\in S}\sum_{\gamma\in\Delta_{q,s}}\int_{\mathbb{R}_{\ge 0}}\mathbf{D}(su,t)\cdot{\langle}{\mathfrak{g}(\gamma)}{\rangle}(\eta+t)\cdot\left(\sum_{(\alpha,\beta)\in D^{k}(\mathfrak{q}'(\gamma),u)}\mathsf{prob}^{\alpha}_{\beta,(\eta+t)[\mathfrak{X}(\gamma):=0]}\right)\,\mathrm{d}t\\
&=&\sum_{u\in S}\sum_{\gamma\in\Delta_{q,s}}\int_{\mathbb{R}_{\ge 0}}\mathbf{P}(s,u)\cdot\Lambda_s(t)
\cdot{\langle}{\mathfrak{g}(\gamma)}{\rangle}(\eta+t)\cdot\mathsf{prob}_{k}(u,\mathfrak{q}'(\gamma),(\eta+t)[\mathfrak{X}(\gamma):=0])\,\mathrm{d}t\\
&=&\int_{\mathbb{R}_{\ge 0}}\sum_{\gamma\in\Delta_{q,s}}\left\{\Lambda_s(t)
\cdot{\langle}{\mathfrak{g}(\gamma)}{\rangle}(\eta+t)\cdot\left[\sum_{u\in S}\mathbf{P}(s,u)\cdot\mathsf{prob}_{k}(u,\mathfrak{q}'(\gamma),(\eta+t)[\mathfrak{X}(\gamma):=0])\right]\right\}\,\mathrm{d}t\\
&=&\int_0^{+\infty}\left\{\Lambda_s(t)\cdot\left[\sum_{u\in
S}\mathbf{P}(s,u)\cdot\mathsf{prob}_{k}\left(u,\mathbf{q}_{q,s}^{\eta+t},(\eta+t)[\mathbf{X}_{q,s}^{\eta+t}:=0]\right)\right]\right\}\,\mathrm{d}t
\end{eqnarray*}
where $\Delta_{q,s}:=\{\gamma\in\Delta\mid (\mathfrak{q}(\gamma),\mathfrak{a}(\gamma))=(q,\mathcal{L}(s))\}$ and the last step is obtained by the fact that the integrand functions are identical. If $k=0$, we have:
\begin{eqnarray*}
& &\mathsf{prob}_{1}(s,q,\eta)\\
&=&\sum_{(\alpha,\beta)\in D^{1}(q,s)}\mathsf{prob}^{\alpha}_{\beta,\eta}\\
&=&\sum_{(\alpha,\beta)\in D^{1}(q,s)}\int_{\mathbb{R}_{\ge 0}}\mathbf{D}(\alpha_0\alpha_1,t)\cdot
{\langle}{{J}_{\beta,\eta}}{\rangle}(t)\,\mathrm{d}t\\
&=&\sum_{u\in S}\sum_{\gamma\in\Delta^F_{q,s}}\int_{0}^{+\infty}\mathbf{P}(s,u)\cdot\Lambda_s(t)\cdot{\langle}{\mathfrak{g}(\gamma)}{\rangle}(\eta+t)\,\mathrm{d}t\\
&=&\int_0^{+\infty}\sum_{\gamma\in\Delta^F_{q,s}}\big\{
\Lambda_s(t)\cdot{\langle}{\mathfrak{g}(\gamma)}{\rangle}(\eta+t)\big\}\,\mathrm{d}t\\
&=&\int_0^{+\infty}\left\{\Lambda_s(t)\cdot\left[\sum_{u\in
S}\mathbf{P}(s,u)\cdot\mathsf{prob}_0\left(u,\mathbf{q}_{q,s}^{\eta+t},(\eta+t)[\mathbf{X}_{q,s}^{\eta+t}:=0]\right)\right]\right\}\,\mathrm{d}t
\end{eqnarray*}
where $\Delta_{q,s}^F:=\{\gamma\in\Delta\mid (\mathfrak{q}(\gamma),\mathfrak{a}(\gamma))=(q,\mathcal{L}(s))
\mbox{ and }\mathfrak{q}'(\gamma)\in F\}$ and the last equality is derived from the fact that the integrand functions are identical.
\end{proof}
The main result of this section is as follows.
\begin{corollary}\label{crlly:mrbl}
For all $(s,q,\eta)\in\mathcal{V}$,
$\mathsf{Path}(s,q,\eta)$ is measurable under $\Omega$.
Furthermore, the function $\mathsf{prob}:\mathcal{V}\mapsto [0,1]$,
for which $\mathsf{prob}(s,q,\eta)$ is the probability mass of $\mathsf{Path}(s,q,\eta)$ under
$\mathcal{B}_s$, satisfies the following system of integral equations: If $q\in F$ then $\mathsf{prob}(s,q,\eta)=1$, otherwise
\begin{align*}
&\mathsf{prob}(s,q,\eta)=\\
&\quad\int_0^{+\infty}\left\{\Lambda_s(t)\cdot\left[\sum_{u\in S}\mathbf{P}(s,u)\cdot
\mathsf{prob}\left(u,\mathbf{q}_{q,s}^{\eta+t},(\eta+t)[\mathbf{X}_{q,s}^{\eta+t}:=0]\right)\right]\right\}\,\mathrm{d}t
\end{align*}
\end{corollary}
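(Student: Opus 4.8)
The plan is to derive Corollary~\ref{crlly:mrbl} as a limit of the finite-step results already established in Theorem~\ref{thm:mrbl}, exploiting the identity $\mathsf{Path}(s,q,\eta)=\bigcup_{k\ge 0}\mathsf{Path}_k(s,q,\eta)$ noted right after the definition of acceptance. Since each $\mathsf{Path}_k(s,q,\eta)$ is measurable under $\Omega$ by Theorem~\ref{thm:mrbl}, their countable union is measurable as well, which immediately settles the measurability claim. For the probability values, I would first observe that the sequence $\{\mathsf{Path}_k(s,q,\eta)\}_{k\ge 0}$ is monotonically increasing in $k$: a path accepted within $k$ steps is certainly accepted within $k+1$ steps, so $\mathsf{Path}_k(s,q,\eta)\subseteq\mathsf{Path}_{k+1}(s,q,\eta)$. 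Continuity of the measure $\mathcal{B}_s$ from below then yields $\mathsf{prob}(s,q,\eta)=\lim_{k\to\infty}\mathsf{prob}_k(s,q,\eta)$ for every $(s,q,\eta)\in\mathcal{V}$.

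Next I would pass to the limit in the recurrence of Theorem~\ref{thm:mrbl}. The base case $q\in F$ is trivial: there $\mathsf{prob}_{k+1}(s,q,\eta)=1$ for all $k$, so $\mathsf{prob}(s,q,\eta)=1$, matching the stated equation. For $q\notin F$, I start from
\[
\mathsf{prob}_{k+1}(s,q,\eta)=\int_0^{+\infty}\Lambda_s(t)\cdot\Big[\sum_{u\in S}\mathbf{P}(s,u)\cdot\mathsf{prob}_k\big(u,\mathbf{q}_{q,s}^{\eta+t},(\eta+t)[\mathbf{X}_{q,s}^{\eta+t}:=0]\big)\Big]\,\mathrm{d}t
\]
and let $k\to\infty$. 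The left-hand side tends to $\mathsf{prob}(s,q,\eta)$ by the monotone-limit argument above. On the right-hand side, the inner finite sum over $u\in S$ converges pointwise in $t$ to the corresponding sum with $\mathsf{prob}_k$ replaced by $\mathsf{prob}$, again by the monotone limit applied at each shifted configuration $\big(u,\mathbf{q}_{q,s}^{\eta+t},(\eta+t)[\mathbf{X}_{q,s}^{\eta+t}:=0]\big)$.

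The key step is to justify interchanging the limit and the integral over $t$. I would use either the Monotone Convergence Theorem or the Dominated Convergence Theorem. Monotonicity is available: since $\mathsf{prob}_k\le\mathsf{prob}_{k+1}$ pointwise (each $\mathsf{prob}_k$ is the measure of an increasing set), the integrand on the right is nondecreasing in $k$ for each fixed $t$, so MCT applies directly. Alternatively, the whole integrand is dominated by the integrable function $\Lambda_s(t)$, because the bracketed quantity is a convex combination of values in $[0,1]$ and hence bounded by $1$; this licenses the Dominated Convergence Theorem. Either route gives
\[
\lim_{k\to\infty}\mathsf{prob}_{k+1}(s,q,\eta)=\int_0^{+\infty}\Lambda_s(t)\cdot\Big[\sum_{u\in S}\mathbf{P}(s,u)\cdot\mathsf{prob}\big(u,\mathbf{q}_{q,s}^{\eta+t},(\eta+t)[\mathbf{X}_{q,s}^{\eta+t}:=0]\big)\Big]\,\mathrm{d}t,
\]
which is exactly the asserted integral equation for $\mathsf{prob}$.

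The main obstacle I anticipate is purely the bookkeeping in the limit interchange rather than any deep difficulty: one must confirm that the pointwise convergence $\mathsf{prob}_k\to\mathsf{prob}$ holds at the \emph{shifted} configurations appearing inside the integral (not only at $(s,q,\eta)$ itself), and that the measurability in $t$ of the integrand is preserved in the limit so that the integral is well-defined. Both follow from applying the same monotone-limit identity uniformly across $\mathcal{V}$, but they should be stated explicitly. Given the integrable dominating function $\Lambda_s$ and the clean monotonicity, no subtler estimates are needed.
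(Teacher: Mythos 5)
Your proposal is correct and follows essentially the same route as the paper: measurability via the countable union $\bigcup_k\mathsf{Path}_k(s,q,\eta)$, the monotone limit $\mathsf{prob}_k\to\mathsf{prob}$, and the Monotone Convergence Theorem to pass the limit through the integral in the recurrence of Theorem~\ref{thm:mrbl}. The extra care you take about convergence at the shifted configurations and the alternative dominated-convergence justification are sound but not needed beyond what the paper's own (terser) argument already uses.
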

\begin{proof}
It is clear that $\mathsf{prob}(s,q,\eta)=1$ if $q\in F$. Suppose $q\not\in F$, then by Theorem~\ref{thm:mrbl},
\begin{align*}
&\mathsf{prob}_{k+1}(s,q,\eta)=\\
&\quad\int_0^{+\infty}\left\{\Lambda_s(t)\cdot\left[\sum_{u\in
S}\mathbf{P}(s,u)\cdot\mathsf{prob}_k\left(u,\mathbf{q}_{q,s}^{\eta+t},(\eta+t)[\mathbf{X}_{q,s}^{\eta+t}:=0]\right)\right]\right\}\,\mathrm{d}t
\end{align*}
Note that $\lim_{k\rightarrow\infty}\mathsf{prob}_k=\mathsf{prob}$. Thus by Monotone Convergence Theorem, we obtain the desired result by passing the $\lim$ operator into the integral.
\end{proof}

\section{Equivalences, Lipschitz Continuity and The Product Region Graph}

In this section we prepare several tools to derive the differential characterization for the function $\mathsf{prob}$. In detail, we review several equivalence relations on clock valuations~\cite{DBLP:journals/tcs/AlurD94} and the product region graph between CTMC and DTA~\cite{DBLP:journals/corr/abs-1101-3694}, and derive a Lipschitz Continuity of the function $\mathsf{prob}$.

Below we fix a CTMC $\mathcal{M}=(S,L,\mathbf{P},\lambda,\mathcal{L})$ and a DTA $\mathcal{A}=(Q,L,\mathcal{X},\Delta,F)$. We denote by $T^\mathcal{A}_x$ the largest number $c$ that appears in some guard $x\Join c$ of $\mathcal{A}$ on clock $x$, by $T^\mathcal{A}_{\max}$ the number $\max_{x\in\mathcal{X}}T^\mathcal{A}_x$, and by $\lambda^\mathcal{M}_{\max}$ the value $\max\{\lambda(s)\mid s\in S\}$. We omit $\mathcal{M}$ or $\mathcal{A}$ if the context is clear.

\subsection{Equivalence Relations}

\begin{definition}\cite{DBLP:journals/tcs/AlurD94}
Two valuations $\eta,\eta'\in\mathrm{Val}(\mathcal{X})$ are \emph{guard-equivalent}, denoted by $\eta\equiv_{\mathrm{g}}\eta'$, if they satisfy the following conditions:
\begin{enumerate} \itemsep0pt \parskip0pt \parsep0pt \topsep0pt
\item for all $x\in\mathcal{X}$, $\eta(x)>T_x$ iff $\eta'(x)>T_x$;
\item for all $x\in\mathcal{X}$, if $\eta(x)\le T_x$ and $\eta'(x)\le T_x$, then (i) $\mathrm{int}(\eta(x))=\mathrm{int}(\eta'(x))$ and (ii) $\mathrm{frac}(\eta(x))>0$ iff $\mathrm{frac}(\eta'(x))>0$.
\end{enumerate}
where $\mathrm{int}(),\mathrm{frac}()$ are the integral and fractional part of a real number, respectively. Moreover, $\eta$ and $\eta'$ are \emph{equivalent}, denoted by $\eta\sim\eta'$, if (i) $\eta\equiv_{\mathrm{g}}\eta'$ and (ii) for all $x,y\in\mathcal{X}$, if $\eta(x),\eta'(x)\le T_x$ and $\eta(y),\eta'(y)\le T_y$, then
$\mathrm{frac}(\eta(x))\Join\mathrm{frac}(\eta(y))$ iff $\mathrm{frac}(\eta'(x))\Join\mathrm{frac}(\eta'(y))$ for all $\Join\in\{<,=,>\}$. We will call equivalence classes of $\sim$ \emph{regions}. Given a region $[\eta]_\sim$, we say that $[\eta]_\sim$ is \emph{marginal} if $\eta(x)\le T_x$ and $\mathrm{frac}(\eta(x))=0$ for some clock $x$.
\end{definition}
In other words, equivalence classes of $\equiv_\mathrm{g}$ are captured by a boolean vector over $\mathcal{X}$ which indicates whether $\eta(x)>T_x$, an integer vector which indicates the integral parts on $\eta(x)\le T_x$ and a boolean vector which indicates whether $\eta(x)$ is an integer when $\eta(x)\le T_x$; equivalence classes of $\sim$ is further captured by a linear order on the set $\{x\in\mathcal{X}\mid\eta(x)\le T_x\}$ w.r.t $\mathrm{frac}(\eta(x))$. Below we state some basic properties of $\equiv_{\mathrm{g}}$ and $\sim$.

\begin{proposition}\cite{DBLP:journals/tcs/AlurD94}
The following properties on $\equiv_{\mathrm{g}}$ and $\sim$ hold:
\begin{enumerate} \itemsep0pt \parskip0pt \parsep0pt \topsep0pt
\item Both $\equiv_\mathrm{g}$ and $\sim$ is an equivalence relation over clock valuations, and has finite index;
\item if $\eta\equiv_\mathrm{g}\eta'$ then they satisfy the same set of guards that appear in $\mathcal{A}$;
\item If $\eta\sim\eta'$ then
\begin{itemize} \itemsep0pt \parskip0pt \parsep0pt \topsep0pt
\item for all $t>0$, there exists $t'>0$ such that $\eta+t\sim\eta'+t'$, and
\item for all $t'>0$, there exists $t>0$ such that $\eta+t\sim\eta'+t'$.
\end{itemize}
\item If $\eta\sim\eta'$, then $\eta[X:=0]\sim\eta'[X:=0]$ for all $X\subseteq\mathcal{X}$. Moreover, for all $\eta\in\mathrm{Val}(\mathcal{X})$ and $X\subseteq\mathcal{X}$, $\{\eta'[X:=0]\mid\eta'\in [\eta]_\sim\}$ is a region.
\end{enumerate}
\end{proposition}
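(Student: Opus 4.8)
The plan is to verify the four assertions in turn, relying throughout on the combinatorial description of the two equivalences recorded in the paragraph following the definition (the boolean ``$>T_x$'' vector, the integer vector of integral parts, the boolean ``$\mathrm{frac}>0$'' vector, and the linear order on fractional parts). For \textbf{Property 1}, reflexivity, symmetry and transitivity of both $\equiv_{\mathrm{g}}$ and $\sim$ are immediate, since every defining clause is phrased as an equality or a biconditional relating $\eta$ and $\eta'$ and these survive the usual logical manipulations. Finite index follows directly from the combinatorial description: a class of $\equiv_{\mathrm{g}}$ is fixed by finitely much data — each clock $x$ contributes a bit (whether $\eta(x)>T_x$) and, when $\eta(x)\le T_x$, an integer in $\{0,\dots,T_x\}$ together with a bit (whether $\mathrm{frac}(\eta(x))>0$) — and a class of $\sim$ merely adds a linear order on the finite set $\{x:\eta(x)\le T_x\}$, so there are finitely many classes. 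For \textbf{Property 2}, since a guard is a finite conjunction of atoms $x\Join c$ with $c\in\mathbb{N}_0$ and $c\le T_x$, it suffices to prove $\eta\models x\Join c\iff\eta'\models x\Join c$: clause~1 of $\equiv_{\mathrm{g}}$ gives $\eta(x)>T_x\iff\eta'(x)>T_x$; when both exceed $T_x\ge c$ both sides exceed $c$, and when both are $\le T_x$ the truth of $x\Join c$ for integer $c$ is a function of $\mathrm{int}(\eta(x))$ and of the bit $\mathrm{frac}(\eta(x))>0$, both preserved by clause~2. Conjunction is then trivial.

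For the first part of \textbf{Property 4} I would check the clauses of $\sim$ coordinatewise for $\eta[X:=0]$ and $\eta'[X:=0]$: clocks $x\in X$ drop to $0$ in both, so their integral part, their ``$\mathrm{frac}>0$'' bit and their ``$>T_x$'' bit (all the ``zero'' values, using $T_x\ge 0$) agree trivially; clocks outside $X$ are untouched, hence agree because $\eta\sim\eta'$; and the fractional order is preserved because the reset clocks all collapse to the bottom level $\mathrm{frac}=0$, whose relation to any other clock is governed solely by that clock's ``$\mathrm{frac}>0$'' bit, itself preserved by $\equiv_{\mathrm{g}}$. For the ``moreover'' part I would prove $\{\eta'[X:=0]\mid\eta'\in[\eta]_\sim\}=[\eta[X:=0]]_\sim$; the inclusion $\subseteq$ is exactly the first part. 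The reverse inclusion is delicate: given $\mu\sim\eta[X:=0]$ I must build $\eta''\sim\eta$ with $\eta''[X:=0]=\mu$. One first observes that $\mu(x)=0$ for $x\in X$ is forced, so I am free to choose $\eta''(x)$ on $X$ without affecting the reset, and I set $\eta''(y)=\mu(y)$ for $y\notin X$. The naive choice $\eta''(x)=\eta(x)$ fails in general, because $\mu$ may assign its non-$X$ clocks fractional values positioned differently relative to the $X$-clocks than in $\eta$. Instead I assign each $x\in X$ the correct integral part (that of $\eta(x)$, or a value $>T_x$ when $\eta(x)>T_x$) together with a fresh fractional value slotted into the gap among the fractional parts of $\mu$'s non-$X$ clocks that is dictated by the order $\eta$ imposes; density of $(0,1)$ and finiteness of $\mathcal{X}$ guarantee such a slot exists. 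A final coordinatewise check then yields $\eta''\sim\eta$ and $\eta''[X:=0]=\mu$.

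For \textbf{Property 3}, by symmetry of $\sim$ the two bullets are interchangeable, so I would prove only the first. The engine is a time-successor lemma: for any valuation $\zeta$ the sequence of regions $[\zeta+t]_\sim$ traversed as $t$ increases from $0$ depends only on $[\zeta]_\sim$. I would establish this by analysing the next boundary event — the clock(s) with $\zeta(x)\le T_x$ of maximal fractional part reach the next integer first (simultaneously if tied), a datum read off the region alone — so the immediate time-successor is a function of the current region; clocks passing $T_x$ drop out permanently, so the sequence is finite and absorbs into the all-``$>T_x$'' region, and induction shows the whole sequence is a region invariant. Granting the lemma, since $\eta\sim\eta'$ share the region $r_0$ and thus the same successor sequence $r_0,r_1,\dots$, for a given $t>0$ the valuation $\eta+t$ lies in some $r_j$, and I pick $t'>0$ with $\eta'+t'\in r_j$; such $t'$ exists because $\eta'$ realizes the identical sequence, with $t'>0$ available in either regime (if $j=0$ then $r_0$ is time-open and $\eta'+t'\in r_0$ for all small $t'>0$, while if $j\ge 1$ then $r_j$ is reached from $r_0$ only after positive time).

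I expect the main obstacle to be twofold: the reverse inclusion in Property~4, where the preimage $\eta''$ must be synthesised by re-inserting fractional values rather than copied from $\eta$, and the time-successor lemma underpinning Property~3, whose inductive analysis of boundary-crossing events carries the genuine combinatorial content; the remaining steps are routine verification of the defining clauses.
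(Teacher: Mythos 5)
The paper does not prove this proposition: it is imported from Alur and Dill with a citation only, so there is no in-paper argument to compare yours against. Your reconstruction is correct and follows the standard region-equivalence arguments: Properties 1 and 2 are the routine verifications you describe; your reduction of Property 3 to the time-successor lemma (the sequence of regions traversed by $\zeta+t$ is a function of $[\zeta]_\sim$ alone) is the classical route, and your handling of the $j=0$ case is the one point there that actually needs care; and in Property 4 you correctly identify that the surjectivity half of the ``moreover'' claim is the only non-trivial step, since the preimage $\eta''$ cannot simply copy $\eta$ on $X$ but must have its fractional parts re-inserted into the gaps of $\mu$'s order as dictated by $\eta$'s order, which density of $(0,1)$ permits. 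One small completeness remark on the time-successor lemma: the ``next boundary event'' is not always a clock reaching the next integer --- when the current region is marginal, the immediate successor is obtained by letting the zero-fraction clocks become positive --- but this is subsumed by your formulation that the immediate successor is a function of the current region, so it is a presentational point rather than a gap.
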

Besides these two equivalence notions, we define another finer equivalence notion as follows.

\begin{definition}
Two valuations $\eta,\eta'\in\mathrm{Val}(\mathcal{X})$ are \emph{bound-equivalent}, denoted by $\eta\equiv_{\mathrm{b}}\eta'$, if for all $x\in\mathcal{X}$, either $\eta(x)>T_x$ and $\eta'(x)>T_x$, or $\eta(x)=\eta'(x)$.
\end{definition}
It is straightforward to verify that $\equiv_{\mathrm{b}}$ is an equivalence relation. The following lemma specifies the relation between $\equiv_{\mathrm{b}}$ and $\mathsf{prob}$, see Barbot~\emph{et al.}~\cite{DBLP:conf/tacas/BarbotCHKM11}. Below we present an alternative proof for integrity.

\begin{proposition}\label{lem:boundness}
Let $s\in S$, $q\in Q$ and $\eta,\eta'\in\mathrm{Val}(\mathcal{X})$. If $\eta\equiv_{\mathrm{b}}\eta'$,
then $\mathsf{prob}(s,q,\eta)=\mathsf{prob}(s,q,\eta')$.
\end{proposition}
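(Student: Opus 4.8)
The plan is to prove this by establishing the claim for each $\mathsf{prob}_k$ individually by induction on $k$, and then pass to the limit using $\lim_{k\to\infty}\mathsf{prob}_k=\mathsf{prob}$. The key structural fact I would exploit is that the integral equation in Theorem~\ref{thm:mrbl} expresses $\mathsf{prob}_{k+1}(s,q,\eta)$ as an integral over a delay $t$ of a term depending on $\mathsf{prob}_k$ evaluated at the successor valuation $(\eta+t)[\mathbf{X}_{q,s}^{\eta+t}:=0]$ together with the successor location $\mathbf{q}_{q,s}^{\eta+t}$. So the crux is to show that, when $\eta\equiv_\mathrm{b}\eta'$, the dynamics induced by delaying and then resetting respect the bound-equivalence in a way that keeps the two computations in lockstep.

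First I would handle the base case: $\mathsf{prob}_0(s,q,\eta)=\langle F\rangle(q)$ does not depend on $\eta$ at all, so $\eta\equiv_\mathrm{b}\eta'$ trivially gives equality. For the inductive step, fix $\eta\equiv_\mathrm{b}\eta'$ with $q\notin F$ (the case $q\in F$ is immediate since both sides are $1$). The central lemma I would isolate is this: for a fixed delay $t$, if $\eta\equiv_\mathrm{b}\eta'$ then $\eta+t\equiv_\mathrm{g}\eta'+t$ (indeed each coordinate is either still above $T_x$ on both sides, or equal on both sides, since adding the same $t$ preserves both alternatives of the $\equiv_\mathrm{b}$ definition). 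Guard-equivalence then forces $\eta+t$ and $\eta'+t$ to satisfy exactly the same guards of $\mathcal{A}$ (Proposition on $\equiv_\mathrm{g}$, item 2), hence they select the \emph{same} rule: $\mathbf{g}_{q,s}^{\eta+t}=\mathbf{g}_{q,s}^{\eta'+t}$, $\mathbf{X}_{q,s}^{\eta+t}=\mathbf{X}_{q,s}^{\eta'+t}$, and $\mathbf{q}_{q,s}^{\eta+t}=\mathbf{q}_{q,s}^{\eta'+t}$. Writing $X$ for this common reset set, I must then check that the two successor valuations $(\eta+t)[X:=0]$ and $(\eta'+t)[X:=0]$ are again bound-equivalent.

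The verification of $(\eta+t)[X:=0]\equiv_\mathrm{b}(\eta'+t)[X:=0]$ is a coordinatewise check and is where the argument must be done carefully, though it is not deep. For a clock $x\in X$, both reset valuations give $0$, so they are equal. For $x\notin X$, the value is $(\eta+t)(x)=\eta(x)+t$ versus $\eta'(x)+t$; by $\eta\equiv_\mathrm{b}\eta'$ either $\eta(x)=\eta'(x)$, whence the two values are equal, or both $\eta(x),\eta'(x)>T_x$, whence both $\eta(x)+t,\eta'(x)+t>T_x$, so they remain bound-equivalent at $x$. Thus bound-equivalence of the successor configurations holds for every $t$, and the integrands in the Theorem~\ref{thm:mrbl} equations for $\mathsf{prob}_{k+1}(s,q,\eta)$ and $\mathsf{prob}_{k+1}(s,q,\eta')$ coincide pointwise in $t$ by the induction hypothesis applied to $\mathsf{prob}_k$ (the factor $\Lambda_s(t)\cdot\mathbf{P}(s,u)$ is independent of the valuation). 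Integrating the identical integrands yields $\mathsf{prob}_{k+1}(s,q,\eta)=\mathsf{prob}_{k+1}(s,q,\eta')$.

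Finally I would conclude by taking $k\to\infty$: since $\mathsf{prob}_k(s,q,\eta)=\mathsf{prob}_k(s,q,\eta')$ for all $k$ and $\mathsf{prob}=\lim_k\mathsf{prob}_k$ (the monotone limit used in Corollary~\ref{crlly:mrbl}), the equality is preserved in the limit, giving $\mathsf{prob}(s,q,\eta)=\mathsf{prob}(s,q,\eta')$. I expect the only real obstacle to be the careful bookkeeping in the coordinatewise successor check — specifically making sure the reset-then-delay interaction does not break the dichotomy of $\equiv_\mathrm{b}$ — but as sketched this reduces to the two-case split above and presents no genuine difficulty. An alternative, slightly slicker route would be to prove directly that $\eta\equiv_\mathrm{b}\eta'$ implies the two runs of $\mathcal{A}$ on any common timed word visit identical location sequences and maintain bound-equivalent valuations at every step, making $J_{\beta,\eta}=J_{\beta,\eta'}$ for every relevant $\beta$ and invoking Lemma~\ref{lem:dcmp:char}; but the inductive-over-$k$ argument via the integral equations is the most economical.
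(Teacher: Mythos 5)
Your proposal is correct, but your primary route differs from the paper's. The paper proves the stronger, purely set-theoretic fact that $\mathsf{Path}(s,q,\eta)=\mathsf{Path}(s,q,\eta')$: for any path $\pi$, it compares the two runs $\mathcal{A}_{q,\eta}(\mathcal{L}_\pi)$ and $\mathcal{A}_{q,\eta'}(\mathcal{L}_\pi)$ and shows by induction on the step index $n$ that the locations coincide and the valuations stay bound-equivalent, using exactly the two preservation facts you isolate (that $\equiv_{\mathrm{b}}$ is preserved under adding a common delay and under a common reset, and that bound-equivalent valuations select the same rule). Equality of the probability masses is then immediate, with no appeal to the integral equations. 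Your main argument instead inducts on $k$ through the recursion of Theorem~\ref{thm:mrbl} and passes to the limit; this is sound (the integral equations are established before this proposition, so there is no circularity, and your coordinatewise check that $\equiv_{\mathrm{b}}$ implies $\equiv_{\mathrm{g}}$ and hence identical rule selection is right), but it is heavier, since it routes a statement about identical path sets through measure-theoretic machinery. The ``alternative, slicker route'' you mention at the end — showing the two runs visit identical location sequences with bound-equivalent valuations — is essentially the paper's actual proof, and is indeed the more economical one; the only thing it needs beyond your two-case split is the observation that acceptance depends only on the location sequence of the run, which your run-lockstep argument already delivers.
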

\begin{proof}
We prove that $\mathsf{Path}(s,q,\eta)=\mathsf{Path}(s,q,\eta')$. Suppose $\pi\in\mathsf{Path}(s,q,\eta)$. Then the run $\mathcal{A}_{q,\eta}(\mathcal{L}_\pi)=\{(q_n,\eta_n)(\mathcal{L}(\pi[n]),\pi\langle n\rangle)\}_{n\ge 0}$ satisfies that $q_n\in F$ for some $n\ge 0$. Denote $\mathcal{A}_{q,\eta'}(\mathcal{L}_\pi)=\{(q'_n,\eta'_n)(\mathcal{L}(\pi[n]),\pi\langle n\rangle)\}_{n\ge 0}$. We prove inductively on $n$ that $q_n=q'_n$ and $\eta_n\equiv_\mathrm{b}\eta'_n$ for all $n\ge 0$. This would imply that $\pi\in\mathsf{Path}(s,q,\eta')$. The inductive proof can be carried out by the fact that $\eta_n\equiv_\mathrm{b}\eta'_n$ implies $\eta_n+{\pi}{\langle}{n}{\rangle}\equiv_\mathrm{b}\eta'_n+{\pi}{\langle}{n}{\rangle}$ and
$(\eta_n+{\pi}{\langle}{n}{\rangle})[X:=0]\equiv_\mathrm{b}(\eta'_n+{\pi}{\langle}{n}{\rangle})[X:=0]$ for all $X\subseteq\mathcal{X}$. Thus $\mathsf{Path}(s,q,\eta)\subseteq\mathsf{Path}(s,q,\eta')$. The other direction can be proved symmetrically.
\end{proof}

In the following, we further introduce a useful proposition.

\begin{proposition}\label{lem:rightleft}
For each $\eta\in\mathrm{Val}(\mathcal{X})$, there exists $t_1>0$ such that $\eta+t\sim\eta+t'$ for all $t,t'\in (0,t_1)$. For each $\eta\in\mathrm{Val}(\mathcal{X})$ such that $\eta(x)>0$ for all $x\in\mathcal{X}$, there exists $t_2>0$ such that $\eta-t\sim\eta-t'$ for all $t,t'\in (0,t_2)$.
\end{proposition}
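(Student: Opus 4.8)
The plan is to prove both halves of the statement by exploiting the finite-index structure of the equivalence $\sim$, specifically the fact that along a continuous delay ray $\{\eta+t\}$ the region can change only at finitely many ``event times'' where some clock $\eta(x)+t$ crosses an integer boundary or the threshold $T_x$. First I would fix $\eta\in\mathrm{Val}(\mathcal{X})$ and observe that a region $[\zeta]_\sim$ is determined by three pieces of data (as noted after the definition of $\sim$): for each clock $x$, whether $\zeta(x)>T_x$; the integral part $\mathrm{int}(\zeta(x))$ when $\zeta(x)\le T_x$; whether $\mathrm{frac}(\zeta(x))>0$; and finally the ordering of the fractional parts $\mathrm{frac}(\zeta(x))$ among clocks with $\zeta(x)\le T_x$. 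Each of these predicates, viewed as a function of $t$, changes value only when some quantity of the form $\eta(x)+t$, $\mathrm{frac}(\eta(x)+t)$, or a difference $\mathrm{frac}(\eta(x)+t)-\mathrm{frac}(\eta(y)+t)$ passes through a critical value. Since $\mathrm{frac}(\eta(x)+t)$ is piecewise linear in $t$ with breakpoints exactly at the instants where $\eta(x)+t$ hits an integer, and since the two fractional parts $\mathrm{frac}(\eta(x)+t)$ and $\mathrm{frac}(\eta(y)+t)$ differ by the constant $\mathrm{frac}(\eta(x))-\mathrm{frac}(\eta(y))$ (modulo $1$) except at these breakpoints, all such events occur on a discrete set of times with no accumulation point near $0$.

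For the first assertion I would then argue: there are only finitely many clocks and finitely many integer thresholds below $T_x+1$, so the set of event times in a bounded interval $(0,\epsilon]$ is finite. Hence I may choose $t_1>0$ smaller than the least positive event time; on $(0,t_1)$ none of the defining predicates of the region changes, so all $\eta+t$ for $t\in(0,t_1)$ share the same values of the boolean/integer data and the same fractional ordering, which by definition means $\eta+t\sim\eta+t'$ for all $t,t'\in(0,t_1)$. The cleanest way to package this is to note that Proposition (properties of $\sim$), item~3, already guarantees that $\eta+t$ ranges over finitely many regions as $t$ increases through a bounded interval, and that the map $t\mapsto[\eta+t]_\sim$ is right-constant just above $0$; I would make the ``just above $0$'' explicit via the event-time argument rather than invoking the proposition as a black box.

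The second assertion, about backward delay $\eta-t$, is the symmetric statement under the substitution $t\mapsto -t$, but requires the hypothesis $\eta(x)>0$ for all $x$ precisely to ensure $\eta-t\in\mathrm{Val}(\mathcal{X})$ (i.e. all coordinates remain nonnegative) for small $t$; indeed by definition $\eta-t$ is only defined when $\eta(x)\ge t$ for every $x$, so I would first pick $t_2\le\min_{x}\eta(x)$ and then shrink $t_2$ below the least positive backward event time. The event-time analysis is identical except that the breakpoints of $\mathrm{frac}(\eta(x)-t)$ occur where $\eta(x)-t$ crosses an integer from above; since $\eta(x)>0$, the nearest such crossing is at a strictly positive distance from $0$, so again the region is constant on $(0,t_2)$.

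The main obstacle I anticipate is handling the fractional-ordering clause of $\sim$ correctly across the integer-crossing breakpoints, and in particular being careful that a clock which equals an integer exactly at the base point $\eta$ (so $\mathrm{frac}(\eta(x))=0$, a marginal configuration) immediately acquires a strictly positive fractional part for all small $t>0$ on the forward ray (and, symmetrically, on the backward ray once $\eta(x)>0$). Thus the region just above $0$ is in general \emph{not} equal to $[\eta]_\sim$ itself when $[\eta]_\sim$ is marginal, which is exactly why the statement asserts constancy on the open interval $(0,t_1)$ rather than at $t=0$. Making this boundary behaviour precise — that the limiting region from the right is well-defined and stable on a small punctured neighbourhood — is the delicate point, but it follows cleanly once the event times are shown to form a discrete set bounded away from $0$.
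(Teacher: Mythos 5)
Your proof is correct and follows essentially the same approach as the paper: both arguments identify the finitely many critical delays at which $[\eta+t]_\sim$ (resp.\ $[\eta-t]_\sim$) can change --- integer crossings of fractional parts and crossings of the thresholds $T_x$ --- and choose $t_1,t_2$ strictly below the least positive such delay, with the same attention to the marginal case $\mathrm{frac}(\eta(x))=0$. The paper merely makes the choice explicit (e.g.\ $t_1=1-c_1$ with $c_1$ the maximal fractional part, and a short case analysis on the minimal one for $t_2$), whereas you package it as ``the least positive event time''; the content is identical.
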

\begin{proof}
Define $\mathcal{R}':=\{\eta(x)-T_x\mid x\in\mathcal{X}\mbox{ and }\eta(x)>T_x\}$. If $\eta(x)>T_x$ for all clocks $x$, then we can choose $t_1$ to be any positive real number and $t_2=\min\mathcal{R}'$. Below we suppose that there is $x\in\mathcal{X}$ such that $\eta(x)\le T_x$.

Define $\mathcal{R}:=\{\mathrm{frac}(\eta(x))\mid x\in\mathcal{X}\mbox{ and }\eta(x)\le T_x\}$. Let $c_1,c_2$ be the maximum and the minimum value of $\mathcal{R}$, respectively. Note that $0\le c_2\le c_1 <1$. Then we can choose $t_1=1-c_1$. The choice of $t_2$ subjects to the two cases below.
\begin{enumerate}  \itemsep0pt \parskip0pt \parsep0pt \topsep0pt
\item $c_2>0$. Then we can choose $t_2=\min\{\{c_2\}\cup\mathcal{R}'\}$.
\item $c_2=0$. If $\mathcal{R}=\{c_2\}$ then we can choose $t_2=\min\{\{1\}\cup\mathcal{R}'\}$. Otherwise, let $c'>c_2$ be the second minimum value in $\mathcal{R}$. Then we can choose $t_2=\min\{\{c'\}\cup\mathcal{R}'\}$.
\end{enumerate}
It is straightforward to verify that $t_1,t_2$ satisfy the desired property.
\end{proof}
We denote $\eta^+$ to be a representative in $\{\eta+t\mid t\in (0,t_1)\}$, and $\eta^-$ to be a representative in $\{\eta-t\mid t\in (0,t_2)\}$, where $t_1,t_2$ are specified in Proposition~\ref{lem:rightleft}. The choice among the representatives will be irrelevant because they are equivalent under $\sim$. Note that if $[\eta]_\sim$ is not marginal, then $[\eta]_\sim=[\eta^+]_\sim=[\eta^-]_\sim$.

\subsection{The Product Region Graph}

We define a qualitative variation of the product region graph proposed by Chen~\emph{et al.}~\cite{DBLP:journals/corr/abs-1101-3694}, mainly to derive a qualitative property of the function $\mathsf{prob}$. The content of this subsection may be covered by the result by Br\'{a}zdil~\emph{et al.}~\cite{DBLP:conf/concur/BrazdilKKKR10}. Even though, we present it for the sake of integrity.

\begin{definition}
The \emph{product region graph} $G^{\mathcal{M}\otimes\mathcal{A}}=(V^{\mathcal{M}\otimes\mathcal{A}},E^{\mathcal{M}\otimes\mathcal{A}})$ of $\mathcal{M}$ and $\mathcal{A}$ is a directed graph defined as follows: $V=S\times Q\times\left(\mathrm{Val}(\mathcal{X})\slash\sim\right)$, $((s,q,r),(s',q',r'))\in E$ iff (i) $\mathbf{P}(s,s')>0$ and (ii) there exists $\eta\in r,\eta'\in r'$ and $t>0$ such that $[\eta+t]_\sim$ is not a marginal region and $(q',\eta')=\kappa((q,\eta),(\mathcal{L}(s),t))$. A vertex $(s,q,r)\in V$ is \emph{final} if $q\in F$.
\end{definition}
We will omit ${\mathcal{M}\otimes\mathcal{A}}$ in $G^{\mathcal{M}\otimes\mathcal{A}}=(V^{\mathcal{M}\otimes\mathcal{A}},E^{\mathcal{M}\otimes\mathcal{A}})$ if the context is clear. The following lemma states the relationship between $\mathsf{prob}$ and the product region graph. Below we define
\[
\mathcal{R}_\eta:=\{0,1\}\cup\{\mathrm{frac}(\eta(x))\mid x\in\mathcal{X}\mbox{ and }\eta(x)\le T_x\}
\]
for each $\eta\in\mathrm{Val}(\mathcal{X})$. Intuitively, $\mathcal{R}_\eta$ captures the fractional values on $\eta$.

\begin{proposition}\label{lem:prg}
For all $(s,q,\eta)\in\mathcal{V}$, $\mathsf{prob}(s,q,\eta)>0$ iff $(s,q,[\eta]_\sim)$  can reach some final vertex in $G$.
\end{proposition}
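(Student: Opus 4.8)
The plan is to prove both directions by relating the positivity of $\mathsf{prob}$ to the existence of a finite accepting run in the product region graph, exploiting the recursive structure given by Corollary~\ref{crlly:mrbl} together with the equivalence results of Section~4.

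\textbf{The ``if'' direction.} Suppose $(s,q,[\eta]_\sim)$ can reach some final vertex along a path $(s_0,q_0,r_0)\to\dots\to(s_m,q_m,r_m)$ in $G$, with $(s_0,q_0,r_0)=(s,q,[\eta]_\sim)$ and $q_m\in F$. My plan is to lift this graph-path to a genuine positive-probability ``tube'' of $\mathcal{M}$-paths accepting within $m$ steps. First I would show, by the definition of the edge relation $E$ and by Proposition~4 (closure of $\sim$ under delay, reset and the existence statements in its item~3), that along each edge one can pick an actual valuation $\eta_n\in r_n$, an actual CTMC state $s_n$ with $\mathbf{P}(s_{n-1},s_n)>0$, and a \emph{nonempty open interval} of dwell-times $t_{n-1}$ such that $[\eta_{n-1}+t_{n-1}]_\sim$ is non-marginal and $\kappa((q_{n-1},\eta_{n-1}),(\mathcal{L}(s_{n-1}),t_{n-1}))=(q_n,\eta_n)$. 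Because the region reached is non-marginal, the guard satisfied is an open condition, so the set of admissible dwell-time vectors $(t_0,\dots,t_{m-1})$ contains a nonempty open box. I would then invoke the integral formula of Theorem~\ref{thm:mrbl} (equivalently the decomposition $(\dag)$ via $J_{\beta,\eta}$): the probability $\mathsf{prob}_m(s,q,\eta)$ is bounded below by the integral of the strictly positive density $\mathbf{D}(\alpha,\mathbf{t})$ over this open box, which is strictly positive since every $\mathbf{P}(s_{n-1},s_n)>0$ and $\Lambda_{s_{n-1}}>0$ on $\mathbb{R}_{>0}$. Hence $\mathsf{prob}(s,q,\eta)\ge\mathsf{prob}_m(s,q,\eta)>0$.

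\textbf{The ``only if'' direction.} Suppose $\mathsf{prob}(s,q,\eta)>0$. Since $\mathsf{prob}=\lim_k\mathsf{prob}_k$ is monotone, there is a least $k$ with $\mathsf{prob}_k(s,q,\eta)>0$, and I would induct on this $k$. The base case $k=0$ forces $q\in F$, so $(s,q,[\eta]_\sim)$ is itself final. For the inductive step, I unfold one level of the integral equation from Corollary~\ref{crlly:mrbl}: positivity of the integrand on a set of positive Lebesgue measure in $t$ means there exist a state $u$ with $\mathbf{P}(s,u)>0$ and a \emph{positive-measure} set of delays $t$ such that $\mathsf{prob}_{k-1}(u,\mathbf{q}^{\eta+t}_{q,s},(\eta+t)[\mathbf{X}^{\eta+t}_{q,s}:=0])>0$. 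A positive-measure set of $t$ must meet some non-marginal region (marginal regions are lower-dimensional, hence Lebesgue-null in the $t$-parameter), so I can select such a $t$ with $[\eta+t]_\sim$ non-marginal; this yields exactly an edge $((s,q,[\eta]_\sim),(u,q',r'))\in E$ in the product region graph. The induction hypothesis applied to the successor configuration (using Proposition~4(4) to see its $\sim$-class is the region $r'$) then gives a path from $(u,q',r')$ to a final vertex, and prepending the edge completes the path from $(s,q,[\eta]_\sim)$.

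\textbf{Main obstacle.} The delicate point is the interplay between the \emph{measure-theoretic} positivity statement ($\mathsf{prob}_k>0$ means positive-measure set of witnessing delays) and the \emph{qualitative} region structure: I must argue carefully that a positive-measure set of delay vectors cannot be confined to the marginal regions, so that a genuinely non-marginal witness exists to furnish an edge of $G$. This requires knowing that the marginal regions (where some $\mathrm{frac}(\eta(x))=0$) are Lebesgue-null along the relevant time axes, and that the function $t\mapsto \mathsf{prob}_{k-1}(\dots)$ restricted to any fixed non-marginal region is positive on a whole open sub-interval once it is positive anywhere on that region — which follows from Proposition~\ref{lem:boundness} ($\equiv_\mathrm{b}$-invariance) combined with the local constancy of the discrete data within a single non-marginal region. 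Getting this ``positive measure $\Rightarrow$ non-marginal witness'' bookkeeping right, across all clocks and resets, is the crux; the rest is a routine induction driven by the integral equation.
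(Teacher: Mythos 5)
Your proposal is correct and follows essentially the same route as the paper: the ``only if'' direction is the same induction on $k$ through the integral equation, using that the marginal delays along $t\mapsto\eta+t$ form a finite (hence null) set so that a non-marginal witness $t^*$ exists and furnishes an edge of $G$; the ``if'' direction likewise converts a graph path into positive-length intervals of witnessing delays. The only cosmetic difference is that you lower-bound $\mathsf{prob}_m$ by integrating the $k$-step density $\mathbf{D}(\alpha,\mathbf{t})$ over an open box inside $J_{\beta,\eta}$, whereas the paper inducts backwards along the graph path applying the one-step integral equation at each stage --- both are sound.
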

\begin{proof}
``$\Rightarrow$'': It is clear that $\mathsf{prob}(s,q,\eta)>0$ iff $\mathsf{prob}_k(s,q,\eta)>0$ for some $k\in\mathbb{N}_{0}$. We prove by induction on $k$ that for all $(s,q,\eta)\in\mathcal{V}$, if $\mathsf{prob}_k(s,q,\eta)>0$ then $(s,q,[\eta]_\sim)$ can reach some final vertex in $G$. The base step $k=0$ is easy. Suppose $\mathrm{prob}_{k+1}(s,q,\eta)>0$ with $q\not\in F$. By Theorem~\ref{thm:mrbl}, we can deduce that
\begin{equation}\label{eq:eq1}
\int_0^{+\infty}\left\{
\Lambda_s(t)\cdot\left[\sum_{u\in
S}\mathbf{P}(s,u)\cdot
\mathsf{prob}_k\left(u,\mathbf{q}_{q,s}^{\eta+t},(\eta+t)[\mathbf{X}_{q,s}^{\eta+t}:=0]\right)\right]\right\}\mathrm{d}t>0
\end{equation}
Consider the regions traversed by $\eta+t$ when $t$ goes from $0$ to $+\infty$. Denote $\mathcal{R}_\eta=\{w_0,\dots,w_m\}$ such that $m\ge 1$ and $w_i>w_{i+1}$ for all $0\le i < m$. Note that $w_0=1$ and $w_m=0$. We divide $[0,+\infty)$ into open integer intervals $(0,1),(1,2),(T_{\max}-1,T_{\max}),(T_{\max},\infty)$. For each $n<T_{\max}$, we further divide the interval $(n,n+1)$ into the following open sub-intervals, excluding a finite number of isolating points:
\[
(n+1-w_{0},n+1-w_{1}),~\dots,~(n+1-w_{m-1}, n+1-w_{m})\enskip.
\]
Then we define the cluster
\[
\mathcal{I}:=\{(n+1-w_{i},n+1-w_{i+1})\mid 0\le n<T_{\max}, 0\le i<m\}\cup\{(T_{\max},+\infty)\}\enskip.
\]
One can verify that for all $I\in\mathcal{I}$ and $t',t\in I$, $\eta+t\sim\eta+t'$. In other words, $[\eta+t]_\sim$ does not change when $t$ is restricted to one of the intervals from $\mathcal{I}$. By~(\ref{eq:eq1}), there exists $I\in\mathcal{I}$ such that
\begin{equation*}
\int_I\left\{\Lambda_s(t)\cdot\left[\sum_{u\in S}\mathbf{P}(s,u)\cdot\mathsf{prob}_k\left(u,\mathbf{q}_{q,s}^{\eta+t},(\eta+t)[\mathbf{X}_{q,s}^{\eta+t}:=0]\right)\right]\right\}\mathrm{d}t>0\enskip.
\end{equation*}
This means that there is $u^*\in S$ and $t^*\in I$ such that
\[
\mathbf{P}(s,u^*)\cdot\mathsf{prob}_k\left(u^*,\mathbf{q}_{q,s}^{\eta+t^*},(\eta+t^*)[\mathbf{X}_{q,s}^{\eta+t^*}:=0]\right)>0\enskip.
\]
Since $I$ is nonempty, $[\eta+t^*]_\sim$ is not a marginal region. Thus there is an edge from $(s,q,[\eta]_\sim)$ to $(u^*,\mathbf{q}_{q,s}^{\eta+t^*},[(\eta+t^*)[\mathbf{X}_{q,s}^{\eta+t^*}:=0]]_\sim)$ in $G$. By the induction hypothesis, the vertex $(u^*,\mathbf{q}_{q,s}^{\eta+t^*},[(\eta+t^*)[\mathbf{X}_{q,s}^{\eta+t^*}:=0]]_\sim)$ can reach some final vertex $G$. Then $(s,q,[\eta]_\sim)$ can reach some final vertex in $G$.

``$\Leftarrow$'': Suppose $(s,q,[\eta]_\sim)$ can reach some final vertex in $G$. Let the path be
\[
(s,q,[\eta]_\sim)=(s_k,q_k,r_k)\rightarrow (s_{k-1},q_{k-1},r_{k-1})\dots\rightarrow (s_0,q_0,r_0)
\]
with $q_0\in F$. We prove inductively on $n\le k$ that $\mathsf{prob}_n(s_n,q_n,\eta')>0$ for all $\eta'\in r_n$. The case $n=0$ is clear. Suppose $\mathsf{prob}_n(s_n,q_n,\eta')>0$ for all $\eta'\in r_n$. Let $\eta''\in r_{n+1}$ be an arbitrary valuation. By $(s_{n+1},q_{n+1},r_{n+1})\rightarrow(s_n,q_n,r_n)$, $\mathbf{P}(s_{n+1},s_n)>0$ and there exists $\eta_{n+1}\in r_{n+1}$, $\eta_n\in r_n$ and $t>0$ such that $[\eta_{n+1}+t]_\sim$ is not marginal and $(q_{n},\eta_{n})=\kappa((q_{n+1},\eta_{n+1}),(\mathcal{L}(s_{n+1}),t))$. By $\eta''\sim\eta_{n+1}$, there exists $t'>0$ such that $\eta''+t'\sim\eta_{n+1}+t$, which implies that
\[
(\eta''+t')[\mathbf{X}_{q_{n+1},s_{n+1}}^{\eta''+t'}:=0]\sim(\eta_{n+1}+t)[\mathbf{X}_{q_{n+1},s_{n+1}}^{\eta_{n+1}+t}:=0]~~(=\eta_n)\enskip.
\]
By the fact that $[\eta''+t']_\sim$ is not marginal, there exists an interval $I\subseteq\mathbb{R}_{\ge 0}$ with positive length such that for all $\tau\in I$, $\eta''+\tau\sim\eta''+t'$ and
\[
(\eta''+\tau)[\mathbf{X}_{q_{n+1},s_{n+1}}^{\eta''+\tau}:=0]\sim(\eta''+t')[\mathbf{X}_{q_{n+1},s_{n+1}}^{\eta''+t'}:=0]\sim\eta_n\enskip.
\]
Thus by induction hypothesis, $\mathsf{prob}_n\left(s_n,q_n,(\eta''+\tau)[\mathbf{X}_{q_{n+1},s_{n+1}}^{\eta''+\tau}:=0]\right)>0$ for all $\tau\in I$. Hence
\[
\int_I\left\{\Lambda_s(\tau)\cdot\left[\mathbf{P}(s_{n+1},s_n)\cdot\mathsf{prob}_n\left(s_n,q_n,(\eta''+\tau)[\mathbf{X}_{q_{n+1},s_{n+1}}^{\eta''+\tau}:=0]\right)\right]\right\}\mathrm{d}\tau>0\enskip.
\]
It follows that $\mathrm{prob}_{n+1}(s_{n+1},q_{n+1},\eta'')>0$.
\end{proof}

\subsection{Lipschitz Continuity}

Below we prove a Lipschitz Continuity property of $\mathsf{prob}$. More specifically, we prove that all functions that satisfies a boundness condition related to $\equiv_{\mathrm{b}}$ and the system of integral equations specified in Corollary~\ref{crlly:mrbl} are Lipschitz continuous. The Lipschitz continuity will be fundamental to our differential characterization and the error bound of our approximation result.

\begin{theorem}\label{thm:lipschitz}
Let $h:\mathcal{V}\mapsto [0,1]$ be a function which satisfies the following conditions for all $s\in S$, $q\in Q$ and $\eta,\eta'\in\mathrm{Val}(\mathcal{X})$:
\begin{itemize} \itemsep0pt \parskip0pt \parsep0pt \topsep0pt
\item if $\eta\equiv_{\mathrm{b}}\eta'$ then $h(s,q,\eta)=h(s,q,\eta')$;
\item if $q\in F$ then $h(s,q,\eta)=1$, otherwise $h(s,q,\eta)$ is equal to the integral
\[
\int_0^{+\infty}\left\{\Lambda_s(t)\cdot\left[\sum_{u\in
S}\mathbf{P}(s,u)\cdot h\left(u,\mathbf{q}_{q,s}^{\eta+t},(\eta+t)[\mathbf{X}_{q,s}^{\eta+t}:=0]\right)\right]\right\}\,\mathrm{d}t\enskip.
\]
\end{itemize}
Then for all $s\in S$, $q\in Q$ and $\eta,\eta'\in\mathrm{Val}(\mathcal{X})$, if ${{\parallel}{\eta-\eta'}{\parallel}}_\infty<1$ then
\[
|h(s,q,\eta)-h(s,q,\eta')|\le M_1\cdot{{\parallel}{\eta-\eta'}{\parallel}}_\infty\enskip,
\]
where $M_1:=|\mathcal{X}|\cdot \lambda_{\max}T_{\max}\cdot e^{\lambda_{\max}T_{\max}}$\enskip.
\end{theorem}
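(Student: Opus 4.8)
The plan is to first reduce the general perturbation to a perturbation of a single clock, and then to exploit a coupling argument together with a Gronwall-type estimate. Given $\eta,\eta'$ with $\|\eta-\eta'\|_\infty<1$, I would interpolate by a chain of valuations $\eta=\zeta_0,\zeta_1,\dots,\zeta_{|\mathcal{X}|}=\eta'$ in which $\zeta_{i-1}$ and $\zeta_i$ differ only in the $i$-th clock $x_i$, by the amount $\epsilon_i:=|\eta(x_i)-\eta'(x_i)|\le\|\eta-\eta'\|_\infty$. By the triangle inequality it then suffices to establish a single-clock bound $|h(s,q,\zeta_{i-1})-h(s,q,\zeta_i)|\le \lambda_{\max}T_{\max}e^{\lambda_{\max}T_{\max}}\cdot\epsilon_i$; summing the $|\mathcal{X}|$ contributions produces the factor $|\mathcal{X}|$ in $M_1$.

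For the single-clock bound, fix a clock $x$ and compare $h(s,q,\eta)$ with $h(s,q,\eta+\epsilon\,\mathbf{e}_x)$, where $\mathbf{e}_x$ is the unit perturbation on $x$ and $0<\epsilon<1$. The key observation is a coupling: in the integral equation of Corollary~\ref{crlly:mrbl}, for a fixed elapsed time $t$ the two configurations $\eta+t$ and $(\eta+\epsilon\mathbf{e}_x)+t$ differ only on clock $x$, hence satisfy the same guard and trigger the same rule $(\mathfrak{g},X,q'')$, except on a set of $t$ of small measure where the $x$-coordinate straddles an integer threshold. On this exceptional set I bound the integrand trivially by $1$; since the number of integer thresholds on $x$ is at most $T_x\le T_{\max}$, each contributing a straddling interval of length $\le\epsilon$ (here $\epsilon<1$ keeps these intervals disjoint), the set has measure $\le T_{\max}\epsilon$, and with $\Lambda_s\le\lambda_{\max}$ it contributes at most $\lambda_{\max}T_{\max}\epsilon$. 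On the complementary same-rule set the successors are $(\eta+t)[X{:=}0]$ and $((\eta+\epsilon\mathbf{e}_x)+t)[X{:=}0]$: if $x\in X$ they coincide, and if the $x$-value has already passed $T_x$ they are bound-equivalent, so in either case the first hypothesis forces the difference to vanish; otherwise the successors again differ only on $x$ by $\epsilon$, with $x$-value $\eta(x)+t\le T_x$.

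Writing $\psi(a):=\sup|h(s,q,\eta)-h(s,q,\eta+\epsilon\mathbf{e}_x)|$ over all $s,q$ and all valuations whose $x$-value equals $a$, the above analysis turns the integral equation into the recursive inequality $\psi(a)\le \lambda_{\max}T_{\max}\epsilon+\lambda_{\max}\int_a^{T_x}\psi(b)\,\mathrm{d}b$, where the cutoff at $T_x$ reflects that no difference survives once clock $x$ exceeds $T_x$ (by bound-equivalence), and the remaining integration over $t\in[0,T_x-a]$ has been rewritten via $b=a+t$. Since $h\in[0,1]$ gives $\psi\le 1<\infty$, I can apply Gronwall's inequality on the bounded interval $[a,T_x]$ to conclude $\psi(a)\le \lambda_{\max}T_{\max}\epsilon\cdot e^{\lambda_{\max}(T_x-a)}\le \lambda_{\max}T_{\max}e^{\lambda_{\max}T_{\max}}\cdot\epsilon$. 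This is exactly the single-clock bound, and combined with the interpolation it yields the claimed Lipschitz estimate with constant $M_1$.

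The main obstacle, and the source of the exponential factor, is that a direct recursive estimate does not contract: the same-rule term integrates $\Lambda_s$ against the (unchanged) distance $\epsilon$ and returns a factor $\le 1$, giving only the vacuous inequality $\psi\le\lambda_{\max}T_{\max}\epsilon+\psi$. The decisive point is the coupling/reset observation, which confines the propagation of the perturbation to the time window during which clock $x$ stays below $T_x$ — a window of length at most $T_{\max}$ — thereby converting the non-contracting recursion into a Gronwall inequality on a compact interval whose solution carries the factor $e^{\lambda_{\max}T_{\max}}$. Some care is also needed in the measure bookkeeping for the exceptional threshold-straddling set, where the hypothesis $\|\eta-\eta'\|_\infty<1$ is used.
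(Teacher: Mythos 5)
Your proof is correct and rests on the same three pillars as the paper's own argument: the reduction to a single-clock perturbation via a chain of intermediate valuations (the source of the factor $|\mathcal{X}|$ in $M_1$), the coupling observation that $\eta+t$ and $\eta'+t$ trigger the same rule except on a set of $t$ of measure at most $T_{\max}\epsilon$ (contributing at most $\lambda_{\max}T_{\max}\epsilon$), and the use of bound-equivalence to kill the difference once the perturbed clock exceeds $T_x$, which truncates everything to the window $[0,T_{\max}]$. Where you genuinely diverge is in how the non-contracting recursion is closed. The paper introduces the single $a$-independent quantity $\delta(\epsilon):=\sup|h(s,q,\eta)-h(s,q,\eta')|$ over all one-clock pairs at distance at most $\epsilon$, keeps the density $\Lambda_s$ in the recursive term, and arrives at the scalar inequality $\delta(\epsilon)\le\delta(\epsilon)\cdot(1-e^{-\lambda_{\max}T_{\max}})+\lambda_{\max}T_{\max}\epsilon$, extracting the contraction from the deficit $e^{-\lambda_{\max}T_{\max}}$, i.e.\ the probability of not jumping before time $T_{\max}$; solving for $\delta(\epsilon)$ (finite since $h\in[0,1]$) gives the constant. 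You instead stratify the supremum by the current value $a$ of the perturbed clock and obtain $\psi(a)\le\lambda_{\max}T_{\max}\epsilon+\lambda_{\max}\int_a^{T_x}\psi(b)\,\mathrm{d}b$, extracting the same constant from the shrinking integration window via Gronwall; both mechanisms yield exactly $\lambda_{\max}T_{\max}e^{\lambda_{\max}T_{\max}}\epsilon$ per clock. The one loose end in your version is that $\psi$ is a supremum over an uncountable family, so the Lebesgue measurability needed to even write $\int_a^{T_x}\psi(b)\,\mathrm{d}b$ is not automatic; this is easily repaired by iterating your inequality against the continuous majorants $\psi_0\equiv 1$ and $\psi_{n+1}(a)=\lambda_{\max}T_{\max}\epsilon+\lambda_{\max}\int_a^{T_x}\psi_n(b)\,\mathrm{d}b$, which dominate $\psi$ by induction and converge to the Gronwall bound, or avoided altogether by adopting the paper's $a$-independent supremum, for which no integration of the unknown function is required. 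Your closing remark that the naive estimate only yields the vacuous $\psi\le\lambda_{\max}T_{\max}\epsilon+\psi$ correctly identifies why one of these two devices is indispensable.
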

\begin{proof}
If $q\in F$, then the result follows from $h(s,q,\eta)=h(s,q,\eta')=1$. From now on we suppose that $q\not\in F$. To prove the theorem, it suffices to prove that
\[
|h(s,q,\eta)-h(s,q,\eta')|\le \lambda_{\max}T_{\max}\cdot e^{\lambda_{\max}T_{\max}}\cdot{{\parallel}{\eta-\eta'}{\parallel}}_\infty
\]
when ${{\parallel}{\eta-\eta'}{\parallel}}_\infty<1$ and $\eta,\eta'$ differ only on one clock, i.e. $|\{x\in\mathcal{X}\mid\eta(x)\ne\eta'(x)\}|=1$. To this end we define $\delta(\epsilon)$ for each $\epsilon\in (0,1)$ as follows:
\begin{eqnarray*}
\delta(\epsilon):=&\sup&\{|h(s,q,\eta)-h(s,q,\eta')|\mid (s,q)\in S\times Q, \eta,\eta'\in\mathrm{Val}(\mathcal{X}),\\
& &{{\parallel}{\eta-\eta'}{\parallel}}_\infty\le\epsilon\mbox{ and }\eta,\eta'\mbox{ differ only on one clock}\}
\end{eqnarray*}
Note that for all $\eta,\eta'\in\mathrm{Val}(\mathcal{X})$ and $X\subseteq\mathcal{X}$:
\begin{itemize} \itemsep0pt \parskip0pt \parsep0pt \topsep0pt
\item if $\eta$ and $\eta'$ differ at most on one clock, then so are $\eta[X:=0]$ and $\eta'[X:=0]$;
\item ${{\parallel}{\eta[X:=0]-\eta'[X:=0]}{\parallel}}_\infty\le{{\parallel}{\eta-\eta'}{\parallel}}_\infty$.
\end{itemize}
Suppose $(s,q)\in S\times Q$ and $\eta,\eta'\in\mathrm{Val}(\mathcal{X})$ which satisfies ${{\parallel}{\eta-\eta'}{\parallel}}_\infty\le\epsilon<1$ and differ only on the clock $x$, i.e., $\eta(x)\ne\eta'(x)$ and $\eta(y)=\eta'(y)$ for all $y\ne x$. W.l.o.g we can assume that $\eta(x)<\eta'(x)$. We clarify two cases below.

\textbf{Case 1}: $\mathrm{int}(\eta(x))=\mathrm{int}(\eta'(x))$. Then by $\eta(x)<\eta'(x)$, we have
$\mathrm{frac}(\eta(x))<\mathrm{frac}(\eta'(x))$. Consider the ``behaviours'' of $\eta+t$ and $\eta'+t$ when $t$ goes from $0$ to $\infty$. We divide $[0,\infty)$ into open integer intervals $(0,1),(1,2),\dots,(T_{\max}-1,T_{\max})$ and $(T_{\max},\infty)$. For each $n<T_{\max}$, we further divide the interval $(n,n+1)$ into the following open sub-intervals:
\[
(n,n+1-\mathrm{frac}(\eta'(x))), (n+1-\mathrm{frac}(\eta'(x)),n+1-\mathrm{frac}(\eta(x))), (n+1-\mathrm{frac}(\eta(x)),n+1)
\]
One can observe that for $t\in(n,n+1-\mathrm{frac}(\eta'(x)))\cup(n+1-\mathrm{frac}(\eta(x)),n+1)$, we have $\eta+t\equiv_\mathrm{g}\eta'+t$, which implies that $\eta+t$ and $\eta'+t$ satisfies the same set of guards in $\mathcal{A}$.
However for $t\in(n+1-\mathrm{frac}(\eta'(x)),n+1-\mathrm{frac}(\eta(x)))$, it may be the case that $\eta+t{\not\equiv}_\mathrm{g}\eta'+t$ due to their difference on clock $x$. Thus the total length for $t$ within $(n,n+1)$ such that $\eta+t {\not\equiv}_\mathrm{g} \eta'+t$ is smaller than $|\eta(x)-\eta'(x)|$.
Thus we have $(\dag)$:
\begin{eqnarray*}
\delta_n &:=&\Bigg|\,\int_n^{n+1}\Lambda_s(t)\cdot\Bigg\{\sum_{u\in S}\bigg[\mathbf{P}(s,u)\cdot
h\Big(u,\mathbf{q}_{q,s}^{\eta+t},(\eta+t)[\mathbf{X}_{q,s}^{\eta+t}:=0]\Big)\\
&&\quad{}-\mathbf{P}(s,u)\cdot h\Big(u,\mathbf{q}_{q,s}^{\eta'+t},(\eta'+t)[\mathbf{X}_{q,s}^{\eta'+t}:=0]\Big)\bigg]\Bigg\}\,\mathrm{d}t\,\Bigg|\\
&\le &\int_n^{n+1}\left\{\lambda(s)e^{-\lambda(s)t}\cdot\delta(\epsilon)\right\}\,\mathrm{d}t+\lambda(s)e^{-\lambda(s)n}\cdot|\eta(x)-\eta'(x)|\\
&\le &\delta(\epsilon)\cdot\int_n^{n+1}\left\{\lambda(s)e^{-\lambda(s)t}\right\}\,\mathrm{d}t+\lambda(s)e^{-\lambda(s)n}\cdot\epsilon
\end{eqnarray*}
Note that for all $t\in (T_{\max},\infty)$ and $X\subseteq\mathcal{X}$, $(\eta+t)[X:=0]\equiv_{\mathrm{b}}(\eta'+t)[X:=0]$. This implies $h(u,\mathbf{q}_{q,s}^{\eta+t},(\eta+t)[\mathbf{X}_{q,s}^{\eta+t}:=0])=h(u,\mathbf{q}_{q,s}^{\eta'+t},(\eta'+t)[\mathbf{X}_{q,s}^{\eta'+t}:=0])$. Therefore we have $(\ddag)$:
\begin{eqnarray*}
&    &|h(s,q,\eta)-h(s,q,\eta')|\\
&\le &\sum_{n=0}^{T_{\max}-1}\delta_n\\
&\le &\delta(\epsilon)\cdot\int_0^{T_{\max}}\left\{\lambda(s)e^{-\lambda(s)t}\right\}\,\mathrm{d}t+\lambda(s)\cdot\epsilon\cdot\sum_{n=0}^{T_{\max}-1}e^{-\lambda(s)n}\\
&\le &\delta(\epsilon)\cdot(1-e^{-\lambda(s)T_{\max}})+\epsilon\cdot\lambda(s)\cdot T_{\max}\\
&\le &\delta(\epsilon)\cdot(1-e^{-\lambda_{\max}T_{\max}})+\epsilon\cdot\lambda_{\max}\cdot T_{\max}
\end{eqnarray*}
\textbf{Case 2}: $\mathrm{int}(\eta(x))<\mathrm{int}(\eta'(x))$. By $|\eta(x)-\eta'(x)|<1$, we have
$\mathrm{int}(\eta(x))+1=\mathrm{int}(\eta'(x))$ and $\mathrm{frac}(\eta'(x))<\mathrm{frac}(\eta(x))$.
Similarly we divide the interval $[0,\infty)$ into integer intervals $(0,1),(1,2),\dots,(T_{\max}-1,T_{\max}),(T_{\max},\infty)$. And in each interval $(n,n+1)$, we divide the interval into the following open sub-intervals:
\[
(n, n+1-\mathrm{frac}(\eta(x))), (n+1-\mathrm{frac}(\eta(x)), n+1-\mathrm{frac}(\eta'(x))), (n+1-\mathrm{frac}(\eta'(x)), n+1)
\]
If $t\in(n+1-\mathrm{frac}(\eta(x)), n+1-\mathrm{frac}(\eta'(x)))$, then $\eta+t\equiv_\mathrm{g}\eta'+t$. And if $t$ lies in either $(n, n+1-\mathrm{frac}(\eta(x)))$ or $(n+1-\mathrm{frac}(\eta'(x)), n+1)$, then it may be the case that $\eta+t{\not\equiv}_\mathrm{g}\eta'+t$. Thus the total length within $(n,n+1)$ such that $\eta+t{\not\equiv}_\mathrm{g}\eta'+t$ is still smaller than $|\eta(x)-\eta'(x)|$. Therefore we can apply the analysis $(\dag)$ and $(\ddag)$, and obtain that
\[
|h(s,q,\eta)-h(s,q,\eta')|\le\delta(\epsilon)\cdot(1-e^{-\lambda_{\max}T_{\max}})+\epsilon\cdot\lambda_{\max}\cdot T_{\max}
\]
Thus by the definition of $\delta(\epsilon)$, we obtain
\[
\delta(\epsilon)\le\delta(\epsilon)\cdot(1-e^{-\lambda_{\max}T_{\max}})+\epsilon\cdot\lambda_{\max}\cdot T_{\max}
\]
which implies $\delta(\epsilon)\le \epsilon\cdot e^{\lambda_{\max}T_{\max}}\cdot\lambda_{\max}\cdot T_{\max}\enskip$. By letting $\epsilon={{\parallel}{\eta-\eta'}{\parallel}}_\infty$, we obtain the desired result.
\end{proof}

\begin{corollary}\label{crlly:lipschitz}
For all $(s,q)\in S\times Q$ and $\eta,\eta'\in\mathrm{Val}(\mathcal{X})$, if ${{\parallel}{\eta-\eta'}{\parallel}}_\infty<1$ then
\[
|\mathsf{prob}(s,q,\eta)-\mathsf{prob}(s,q,\eta')|\le M_1\cdot{{\parallel}{\eta-\eta'}{\parallel}}_\infty
\]
where $M_1$ is defined as in Theorem~\ref{thm:lipschitz}.
\end{corollary}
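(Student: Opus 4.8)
The plan is to recognize that Corollary~\ref{crlly:lipschitz} is nothing more than an instantiation of Theorem~\ref{thm:lipschitz} to the particular function $h=\mathsf{prob}$. Theorem~\ref{thm:lipschitz} establishes the Lipschitz bound for \emph{any} function $h:\mathcal{V}\mapsto[0,1]$ satisfying two hypotheses, so the entire task reduces to checking that $\mathsf{prob}$ meets both of them and then quoting the theorem.

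First I would verify the bound-equivalence invariance hypothesis: for all $s\in S$, $q\in Q$ and $\eta,\eta'\in\mathrm{Val}(\mathcal{X})$ with $\eta\equiv_{\mathrm{b}}\eta'$, we need $\mathsf{prob}(s,q,\eta)=\mathsf{prob}(s,q,\eta')$. This is exactly the statement of Proposition~\ref{lem:boundness}, so there is nothing to prove here. Second, I would verify that $\mathsf{prob}$ satisfies the integral-equation system demanded of $h$: namely $\mathsf{prob}(s,q,\eta)=1$ whenever $q\in F$, and otherwise $\mathsf{prob}(s,q,\eta)$ equals the prescribed integral over $t\in[0,+\infty)$. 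This is precisely the content of Corollary~\ref{crlly:mrbl}.

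Once both hypotheses are confirmed, $h:=\mathsf{prob}$ is a legitimate instance of the function to which Theorem~\ref{thm:lipschitz} applies, and the conclusion of that theorem transfers verbatim: for all $(s,q)\in S\times Q$ and $\eta,\eta'\in\mathrm{Val}(\mathcal{X})$ with ${\parallel}{\eta-\eta'}{\parallel}_\infty<1$, one obtains $|\mathsf{prob}(s,q,\eta)-\mathsf{prob}(s,q,\eta')|\le M_1\cdot{\parallel}{\eta-\eta'}{\parallel}_\infty$ with the same constant $M_1$.

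I do not expect any genuine obstacle. All of the analytic effort---the decomposition of $[0,\infty)$ into sub-intervals on which $\eta+t$ and $\eta'+t$ are guard-equivalent, the estimate that the ``bad'' set where they disagree has total length below $|\eta(x)-\eta'(x)|$, the truncation at $T_{\max}$ via $\equiv_{\mathrm{b}}$, and the self-referential bound on $\delta(\epsilon)$---was already discharged inside the proof of Theorem~\ref{thm:lipschitz}. The only conceptual point worth stating explicitly is that $\mathsf{prob}$ is indeed covered by the abstract hypotheses, which is guaranteed by Proposition~\ref{lem:boundness} and Corollary~\ref{crlly:mrbl} respectively; the corollary therefore requires only these two citations and a one-line appeal to the theorem.
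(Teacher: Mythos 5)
Your proposal is correct and is exactly the paper's argument: the paper's proof of this corollary is the one-line citation ``Directly from Corollary~\ref{crlly:mrbl}, Proposition~\ref{lem:boundness} and Theorem~\ref{thm:lipschitz}'', which is precisely the two hypothesis-checks plus the appeal to the theorem that you spell out.
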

\begin{proof}
Directly from Corollary~\ref{crlly:mrbl}, Proposition~\ref{lem:boundness} and Theorem~\ref{thm:lipschitz}.
\end{proof}
By Lipschitz Continuity, we can further prove that $\mathsf{prob}$ is the unique solution of a revised system of integral equations from the one specified in Corollary~\ref{crlly:mrbl}.

\begin{theorem}\label{thm:unique:int}
The function $\mathsf{prob}$ is the unique solution of the following system of integral equations on $h:\mathcal{V}\mapsto[0,1]$:
\begin{enumerate} \itemsep0pt \parskip0pt \parsep0pt \topsep0pt
\item for all $s\in S$, $q\in Q$ and $\eta,\eta'\in\mathrm{Val}(\mathcal{X})$, if $\eta\equiv_{\mathrm{b}}\eta'$ then $h(s,q,\eta)=h(s,q,\eta')$;
\item for all $ (s,q,\eta)\in\mathcal{V}$, $h(s,q,\eta)=0$ if $(s,q,[\eta]_\sim)$ cannot reach a final vertex in $G$, and $h(s,q,\eta)=1$ if $q\in F$;
\item for all $ (s,q,\eta)\in\mathcal{V}$, if $(s,q,[\eta]_\sim)$ can reach a final vertex in $G$ and $q\not\in F$, then $h(s,q,\eta)$ equals
\[
\int_0^{+\infty}\left\{\Lambda_s(t)\cdot\left[\sum_{u\in S}\mathbf{P}(s,u)\cdot
h\left(u,\mathbf{q}_{q,s}^{\eta+t},(\eta+t)[\mathbf{X}_{q,s}^{\eta+t}:=0]\right)\right]\right\}\,\mathrm{d}t~.
\]
\end{enumerate}
\end{theorem}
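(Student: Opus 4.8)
The plan is to split the statement into \emph{existence} (that $\mathsf{prob}$ is a solution) and \emph{uniqueness}. Existence is just a matter of collecting what is already proved. Condition~1 is exactly Proposition~\ref{lem:boundness}. For Condition~2, the subcase $q\in F$ gives $\mathsf{prob}(s,q,\eta)=1$ directly from Corollary~\ref{crlly:mrbl}, while the subcase where $(s,q,[\eta]_\sim)$ cannot reach a final vertex gives $\mathsf{prob}(s,q,\eta)=0$ by the contrapositive of Proposition~\ref{lem:prg} (recall $\mathsf{prob}\ge 0$). Condition~3 is a special case of the integral equation in Corollary~\ref{crlly:mrbl}, which in fact holds for \emph{every} $q\notin F$, not only the reachable ones.

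For uniqueness I would use a maximum-principle argument. Let $h$ be any solution and put $g:=h-\mathsf{prob}$. First I would check that $h$ meets the hypotheses of Theorem~\ref{thm:lipschitz}: Condition~1 supplies the $\equiv_{\mathrm{b}}$-invariance, and $h$ satisfies the integral equation for every $q\notin F$ --- on reachable configurations by Condition~3, and on non-reachable ones because there both sides vanish (for almost every $t$ the successor region is again non-reachable, so $h$ of it is $0$ by Condition~2, while the finitely many marginal $t$ carry no measure). Hence $h$, and likewise $\mathsf{prob}$ by Corollary~\ref{crlly:lipschitz}, is Lipschitz, so $g$ is Lipschitz and $\equiv_{\mathrm{b}}$-invariant. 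Subtracting the two integral equations, $g$ vanishes on all final and all non-reachable configurations and satisfies the homogeneous equation
\[
g(s,q,\eta)=\int_0^{+\infty}\Lambda_s(t)\sum_{u\in S}\mathbf{P}(s,u)\,g\!\left(u,\mathbf{q}_{q,s}^{\eta+t},(\eta+t)[\mathbf{X}_{q,s}^{\eta+t}:=0]\right)\mathrm{d}t
\]
on every reachable non-final configuration.

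The key step is to show $\max g\le 0$; the bound $\min g\ge 0$ then follows by applying the same argument to $-g$, which satisfies the identical conditions, so that $g\equiv 0$ and $h=\mathsf{prob}$. Using $\equiv_{\mathrm{b}}$-invariance together with Lipschitz continuity, $g(s,q,\cdot)$ factors through the compact box $\prod_{x\in\mathcal{X}}[0,T_x]$ --- a clock value exceeding $T_x$ may be replaced by $T_x$, the two giving the same value by $\equiv_{\mathrm{b}}$-invariance and continuity. As $S\times Q$ is finite, $g$ attains a global maximum $c$. Suppose $c>0$, attained at $(s^*,q^*,\eta^*)$; since $g=0$ on final and non-reachable configurations, $(s^*,q^*,[\eta^*]_\sim)$ must be reachable and non-final. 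Feeding this into the homogeneous equation and using $g\le c$ together with $\int_0^{+\infty}\Lambda_{s^*}(t)\,\mathrm{d}t=1$ forces $g=c$ at the successor configuration for almost every $t$ and every $u$ with $\mathbf{P}(s^*,u)>0$. I would then choose an edge of $G$ leaving $(s^*,q^*,[\eta^*]_\sim)$ along a shortest path to a final vertex; by the transport properties of $\sim$ and non-marginality this edge is realized by a positive-length interval of delays $t$ at $\eta^*$, so some successor configuration lying in the strictly closer region again attains $g=c$. Iterating down the shortest path reaches a final vertex, where $g=0$, contradicting $c>0$; hence $\max g\le 0$.

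The main obstacle is exactly this maximum-propagation step. Two points need care: that the maximum is genuinely \emph{attained} --- which is where the compactness furnished by $\equiv_{\mathrm{b}}$ and the global (Lipschitz) continuity of $\mathsf{prob}$ become indispensable, since $\mathrm{Val}(\mathcal{X})$ itself is unbounded --- and that the maximal value can be pushed one edge closer to $F$ inside the product region graph. The latter reuses the machinery built for Proposition~\ref{lem:prg}: the equivalence $\sim$ transports a witnessing delay at some valuation of the region to an interval of delays at $\eta^*$, and non-marginality guarantees this interval has positive Lebesgue measure, so the ``almost every $t$'' conclusion from the equality case genuinely intersects the chosen edge.
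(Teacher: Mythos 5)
Your proposal is correct and follows essentially the same route as the paper: existence by assembling Corollary~\ref{crlly:mrbl}, Proposition~\ref{lem:boundness} and Proposition~\ref{lem:prg}; uniqueness by verifying the hypotheses of Theorem~\ref{thm:lipschitz} (including the vanishing of both sides of the integral equation on non-reachable configurations for almost every delay), extracting an attained extremum on the compact box $\prod_{x}[0,T_x]$ via $\equiv_{\mathrm{b}}$-invariance and Lipschitz continuity, and propagating that extremal value along edges of the product region graph to a final vertex for a contradiction. The only cosmetic difference is that the paper works with $|h_1-h_2|$ and a single supremum argument, whereas you work with the signed difference $h-\mathsf{prob}$ and run the maximum-principle step twice ($g$ and $-g$); the substance is identical.
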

\begin{proof}
By Corollary~\ref{crlly:mrbl}, Proposition~\ref{lem:prg} and Proposition~\ref{lem:boundness}, $\mathsf{prob}$ satisfies the referred integral-equation system. Below we prove that the integral-equation system has only one solution.

We first prove that if $h:\mathcal{V}\mapsto [0,1]$ satisfies the integral-equation system, then $h$ satisfies the prerequisite of Theorem~\ref{thm:lipschitz}. We only need to consider $h(s,q,\eta)$ such that $(s,q,[\eta]_\sim)$ cannot reach a final vertex in $G$. Note that $h(s,q,\eta)=0$. From the proof of Proposition~\ref{lem:prg}, we can construct a disjoint open interval cluster $\mathcal{I}$ such that: (i) $\bigcup\mathcal{I}\subseteq\mathbb{R}_{\ge 0}$ and $\mathbb{R}_{\ge 0}\backslash\bigcup\mathcal{I}$ is finite; and (ii) for all $I\in\mathcal{I}$ and $t,t'\in I$, $\eta+t\sim\eta+t'$.
Choose any $t\in\bigcup\mathcal{I}$ and $u\in S$ such that $\mathbf{P}(s,u)>0$. Then $(u,\mathbf{q}_{q,s}^{\eta+t},\left[(\eta+t)[\mathbf{X}_{q,s}^{\eta+t}:=0]\right]_\sim)$ cannot reach some final vertex in $G$ since $[\eta+t]_\sim$ is not marginal. Thus $h(u,\mathbf{q}_{q,s}^{\eta+t},(\eta+t)[\mathbf{X}_{q,s}^{\eta+t}:=0])=0$. It follows that
\[
\int_0^{+\infty}\left\{\Lambda_s(t)\cdot\left[\sum_{u\in S}\mathbf{P}(s,u)\cdot
h\left(u,\mathbf{q}_{q,s}^{\eta+t},(\eta+t)[\mathbf{X}_{q,s}^{\eta+t}:=0]\right)\right]\right\}\,\mathrm{d}t=0\enskip.
\]

Now suppose $h_1,h_2:\mathcal{V}\mapsto[0,1]$ are two solutions of the integral-equation system. Define $h:=|h_1-h_2|$. Then by Theorem~\ref{thm:lipschitz}, $h$ is continuous on $\mathrm{Val}(\mathcal{X})$. Further by the fact that $h(s,q,\eta)=h(s,q,\eta')$ whenever $\eta\equiv_\mathrm{b}\eta'$, the image of $h$ can be obtained on $S\times Q\times\prod_{x\in\mathcal{X}}[0,T_x]$. Thus the maximum value
\[
M:=\sup\{h(s,q,\eta)\mid (s,q,\eta)\in\mathcal{V}\}
\]
can be reached. Below we prove by contradiction that $M=0$. Suppose $M>0$. Denote $\mathcal{Q}:=\{(s,q,\eta)\in\mathcal{V}\mid h(s,q,\eta)=M\}$\enskip. We first prove that $(\dag)$: for all $(s,q,\eta)\in\mathcal{Q}$ and all edge $(s,q,[\eta]_\sim)\rightarrow (s',q',r')$ in $G$, there exists $\eta'\in r'$ such that $(s',q',\eta')\in\mathcal{Q}$.

Consider an arbitrary $(s,q,\eta)\in\mathcal{Q}$. By $M>0$, we have $(s,q,[\eta]_\sim)$ can reach a final vertex in $G$ and $q\not\in F$. As before, we can divide $[0,+\infty)$ into a cluster $\mathcal{I}$ of open intervals, disregarding only a finite number of isolating points $t$, such that $[\eta+\tau]_\sim$ ($\tau\in I$) does not change for each $I\in\mathcal{I}$. Thus $h\left(u,\mathbf{q}_{q,s}^{\eta+t},(\eta+t)[\mathbf{X}_{q,s}^{\eta+t}:=0]\right)$ is piecewise continuous on $t\in\mathbb{R}_{\ge 0}$, for all $u\in S$. Note that
\begin{eqnarray*}
& &h(s,q,\eta)\\
&\le &\int_0^{+\infty}\left\{\Lambda_s(t)\cdot\left[\sum_{u\in S}\mathbf{P}(s,u)\cdot
h\left(u,\mathbf{q}_{q,s}^{\eta+t},(\eta+t)[\mathbf{X}_{q,s}^{\eta+t}:=0]\right)\right]\right\}\,\mathrm{d}t\\
&\le & \int_0^{+\infty}\left\{\Lambda_s(t)\cdot\left[\sum_{u\in S}\mathbf{P}(s,u)\cdot
h\left(s,q,\eta\right)\right]\right\}\,\mathrm{d}t\\
&=& h\left(s,q,\eta\right)
\end{eqnarray*}
By the piecewise continuity, $h\left(u,\mathbf{q}_{q,s}^{\eta+t},(\eta+t)[\mathbf{X}_{q,s}^{\eta+t}:=0]\right)=M$ whenever $t\in\bigcup\mathcal{I}$ and $\mathbf{P}(s,u)>0$. Note that $[0,+\infty)\backslash\bigcup\mathcal{I}$ is finite. Thus for all edge $(s,q,[\eta]_\sim)\rightarrow (s',q',r')$ in $G$, there exists $t\in\bigcup\mathcal{I}$ such that $\mathbf{q}_{q,s}^{\eta+t}=q'$ and $(\eta+t)[\mathbf{X}_{q,s}^{\eta+t}:=0]\in r'$. It follows from $(s',q', (\eta+t)[\mathbf{X}_{q,s}^{\eta+t}:=0]))\in\mathcal{Q}$ that $(\dag)$ holds.

Let $(s,q,\eta)\in\mathcal{Q}$. Then there exists a path
\[
(s,q,[\eta]_\sim)=(s_0,q_0,r_0)\rightarrow (s_1,q_1,r_1)\dots (s_n,q_n,r_n)
\]
in $G$ with $q_n\in F$. However by $(\dag)$, one can prove through induction that there exists $\eta'\in r_n$ such that $(s_n,q_n,\eta')\in\mathcal{Q}$, which implies $q_n\not\in F$. Contradiction. Thus $M=0$ and $h(s,q,\eta)\equiv 0$.
\end{proof}

\section{A Differential Characterization}

In this section we present a differential characterization for the function $\mathsf{prob}$. We fix a CTMC
$\mathcal{M}=(S,L,\mathbf{P},\lambda,\mathcal{L})$ and a DTA $\mathcal{A}=(Q,L,\mathcal{X},\Delta,F)$. Below we introduce our notion of derivative, which is a directional derivative as follows.

\begin{definition}
Given a function $h:\mathcal{V}\mapsto [0,1]$, we denote by $\nabla^+_\mathbf{1}h$ and resp. $\nabla^-_\mathbf{1}h$ the right directional derivative and resp. the left directional derivative of $h$ along the direction $\mathbf{1}$ if the derivative exists. Formally, we define
\begin{itemize} \itemsep0pt \parskip0pt \parsep0pt \topsep0pt
\item $\nabla^+_\mathbf{1}h(s,q,\eta):=\lim\limits_{t\rightarrow 0^+}\frac{1}{t}\cdot (h(s,q,\eta+t)-h(s,q,\eta))$, if the limit exists.
\item $\nabla^-_\mathbf{1}h(s,q,\eta):=\lim\limits_{t\rightarrow 0^+}\frac{1}{t}\cdot (h(s,q,\eta)-h(s,q,\eta-t))$, if  $\eta(x)>0$ for all $x\in\mathcal{X}$ and the limit exists.
\end{itemize}
for each $(s,q,\eta)\in\mathcal{V}$.
\end{definition}
Below we calculate these directional derivatives.

\begin{theorem}\label{thm:drv}
For all $(s,q,\eta)\in\mathcal{V}$ with $q\not\in F$, $\nabla^+_\mathbf{1}\mathsf{prob}(s,q,\eta)$ exists, and $\nabla^-_\mathbf{1}\mathsf{prob}(s,q,\eta)$ exists given that $\eta(x)>0$ for all $x\in\mathcal{X}$. Furthermore, we have
\[
\nabla^+_\mathbf{1}\mathsf{prob}(s,q,\eta)=\lambda(s)\cdot\mathsf{prob}(s,q,\eta)-\lambda(s)\cdot\sum_{u\in S}\mathbf{P}(s,u)\cdot\mathsf{prob}(u,\mathbf{q}^{\eta^+}_{q,s},\eta[\mathbf{X}^{\eta^+}_{q,s}:=0])
\]
and
\[
\nabla^-_\mathbf{1}\mathsf{prob}(s,q,\eta)=\lambda(s)\cdot\mathsf{prob}(s,q,\eta)-\lambda(s)\cdot\sum_{u\in S}\mathbf{P}(s,u)\cdot\mathsf{prob}(u,\mathbf{q}^{\eta^-}_{q,s},\eta[\mathbf{X}^{\eta^-}_{q,s}:=0])
\]
whenever $\nabla^-_\mathbf{1}\mathsf{prob}(s,q,\eta)$ exists.
\end{theorem}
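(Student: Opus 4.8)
The plan is to turn the integral equation of Corollary~\ref{crlly:mrbl} into a closed representation of the map $\tau\mapsto\mathsf{prob}(s,q,\eta+\tau)$ near $\tau=0$ and then differentiate it by hand. Abbreviate the inner sum by
\[
\psi(\sigma):=\sum_{u\in S}\mathbf{P}(s,u)\cdot\mathsf{prob}\left(u,\mathbf{q}_{q,s}^{\eta+\sigma},(\eta+\sigma)[\mathbf{X}_{q,s}^{\eta+\sigma}:=0]\right).
\]
Since $q\not\in F$, Corollary~\ref{crlly:mrbl} applied at the configuration $(s,q,\eta+\tau)$ reads $\mathsf{prob}(s,q,\eta+\tau)=\int_0^{+\infty}\Lambda_s(t)\,\psi(\tau+t)\,\mathrm{d}t$, valid for every small $\tau$ of either sign (for $\tau<0$ we use $\eta(x)>0$ to keep $\eta+\tau\in\mathrm{Val}(\mathcal{X})$). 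Substituting $r=\tau+t$ and recalling $\Lambda_s(t)=\lambda(s)e^{-\lambda(s)t}$ yields the key identity
\[
\mathsf{prob}(s,q,\eta+\tau)=e^{\lambda(s)\tau}\int_\tau^{+\infty}\lambda(s)e^{-\lambda(s)r}\psi(r)\,\mathrm{d}r=:e^{\lambda(s)\tau}\,H(\tau),
\]
where $H$ is absolutely continuous, so the whole analysis reduces to understanding $\psi$ near $0$.

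The crux is to show that $\psi$ has one-sided limits at $0$ and to identify them. By Proposition~\ref{lem:rightleft} there is $t_1>0$ with $\eta+t\sim\eta^+$ for all $t\in(0,t_1)$; since $\sim$ refines $\equiv_{\mathrm{g}}$, valuations in the same region satisfy the same guards of $\mathcal{A}$, so by determinism the active rule is constant on $(0,t_1)$, giving $\mathbf{q}_{q,s}^{\eta+t}=\mathbf{q}_{q,s}^{\eta^+}$ and $\mathbf{X}_{q,s}^{\eta+t}=\mathbf{X}_{q,s}^{\eta^+}$ there. As $t\to0^+$ the reset valuation $(\eta+t)[\mathbf{X}_{q,s}^{\eta^+}:=0]$ converges in $\|\cdot\|_\infty$ to $\eta[\mathbf{X}_{q,s}^{\eta^+}:=0]$ (the unreset clocks move by $t$, the reset ones stay at $0$), so the Lipschitz continuity of $\mathsf{prob}$ from Corollary~\ref{crlly:lipschitz} lets me pass to the limit:
\[
\psi(0^+)=\sum_{u\in S}\mathbf{P}(s,u)\cdot\mathsf{prob}\left(u,\mathbf{q}_{q,s}^{\eta^+},\eta[\mathbf{X}_{q,s}^{\eta^+}:=0]\right).
\]
The symmetric argument with the backtracking representative $\eta^-$ (again via Proposition~\ref{lem:rightleft}) produces the left limit $\psi(0^-)$ with $\eta^+$ replaced by $\eta^-$.

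With the representation $\mathsf{prob}(s,q,\eta+\tau)=e^{\lambda(s)\tau}H(\tau)$ in hand, I would compute the right difference quotient directly rather than appealing to a limit of derivatives. Splitting
\[
\frac{e^{\lambda(s)\tau}H(\tau)-H(0)}{\tau}=\frac{e^{\lambda(s)\tau}-1}{\tau}\,H(\tau)+\frac{H(\tau)-H(0)}{\tau},
\]
the first term tends to $\lambda(s)\,H(0)=\lambda(s)\,\mathsf{prob}(s,q,\eta)$; and since $H(\tau)-H(0)=-\int_0^{\tau}\lambda(s)e^{-\lambda(s)r}\psi(r)\,\mathrm{d}r$, the second term equals $-\frac1\tau\int_0^{\tau}\lambda(s)e^{-\lambda(s)r}\psi(r)\,\mathrm{d}r$, which converges to $-\lambda(s)\,\psi(0^+)$ because the integrand is right-continuous at $0$ and the average of a right-continuous function over $[0,\tau]$ tends to its right limit. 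This is exactly the stated formula for $\nabla^+_\mathbf{1}\mathsf{prob}$. For the left derivative I use $\mathsf{prob}(s,q,\eta)-\mathsf{prob}(s,q,\eta-\tau)=(1-e^{-\lambda(s)\tau})H(-\tau)+(H(0)-H(-\tau))$, where $H(0)-H(-\tau)=-\int_{-\tau}^{0}\lambda(s)e^{-\lambda(s)r}\psi(r)\,\mathrm{d}r$ averages to $-\lambda(s)\,\psi(0^-)$; this yields the companion formula.

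I expect the main obstacle to lie entirely in the middle step: rigorously establishing the one-sided continuity of $\psi$ at $0$ with the claimed values. The point is to cleanly separate the \emph{discrete} data $(\mathbf{q}_{q,s}^{\eta+t},\mathbf{X}_{q,s}^{\eta+t})$, which must be shown locally constant via the region analysis of Proposition~\ref{lem:rightleft}, from the \emph{continuous} data $(\eta+t)[\cdot:=0]$, whose limit is then fed into $\mathsf{prob}$ through Lipschitz continuity. Once that one-sided continuity is secured, the substitution identity and the elementary averaging fact make the remaining differentiation routine.
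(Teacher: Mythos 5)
Your proposal is correct and follows essentially the same route as the paper: the same change of variables turning the integral equation into $e^{\lambda(s)\tau}H(\tau)$, the same appeal to Proposition~\ref{lem:rightleft} to freeze the active rule near $\tau=0$, and the same use of Lipschitz continuity to identify the one-sided limits of the integrand. The only cosmetic difference is that you evaluate the difference quotient by the averaging property of a one-sided-continuous integrand where the paper invokes L'H\^{o}spital's Rule; these are interchangeable here.
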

\begin{proof}
We first prove the case for $\nabla^+_\mathbf{1}\mathsf{prob}(s,q,\eta)$. By Corollary~\ref{crlly:mrbl},
\begin{eqnarray*}
\mathsf{prob}(s,q,\eta+t)&=&\int_0^{+\infty}\bigg\{\Lambda_s(\tau)\cdot\Big[\sum_{u\in
S}\mathbf{P}(s,u)\cdot\\
& &\quad\quad\mathsf{prob}\big(u,\mathbf{q}_{q,s}^{\eta+(t+\tau)},(\eta+(t+\tau))[\mathbf{X}_{q,s}^{\eta+(t+\tau)}:=0]\big)\Big]\bigg\}\,\mathrm{d}\tau
\end{eqnarray*}
for $t\ge 0$. Note that the integrand function is Riemann integratable since it is piecewise continuous on $\tau$. By the variable substitution $\tau'=t+\tau$, we have for $t\ge 0$,
\begin{eqnarray*}
\mathsf{prob}(s,q,\eta+t)&=& e^{\lambda(s)\cdot t}\cdot\int_t^{+\infty}\bigg\{
\Lambda_s(\tau)\cdot\Big[\sum_{u\in S}\mathbf{P}(s,u)\cdot \\
& &\quad\quad\mathsf{prob}\big(u,\mathbf{q}_{q,s}^{\eta+\tau},(\eta+\tau)[\mathbf{X}_{q,s}^{\eta+\tau}:=0]\big)\Big]\bigg\}\,\mathrm{d}\tau
\end{eqnarray*}
Then we have
\begin{eqnarray*}
& & \mathsf{prob}(s,q,\eta+t)-\mathsf{prob}(s,q,\eta)\\
&=& e^{\lambda(s)\cdot t}\cdot\int_t^{+\infty}\left\{\Lambda_s(\tau)\cdot\left[\sum_{u\in
S}\mathbf{P}(s,u)\cdot\mathsf{prob}\left(u,\mathbf{q}_{q,s}^{\eta+\tau},(\eta+\tau)[\mathbf{X}_{q,s}^{\eta+\tau}:=0]\right)\right]\right\}\,\mathrm{d}\tau\\
& & {}-\int_0^{+\infty}\left\{
\Lambda_s(\tau)\cdot\left[\sum_{u\in
S}\mathbf{P}(s,u)\cdot
\mathsf{prob}\left(u,\mathbf{q}_{q,s}^{\eta+\tau},(\eta+\tau)[\mathbf{X}_{q,s}^{\eta+\tau}:=0]\right)\right]\right\}\,\mathrm{d}\tau\\
&=& (e^{\lambda(s)\cdot t}-1)\cdot\int_t^{+\infty}\left\{
\Lambda_s(\tau)\cdot\left[\sum_{u\in
S}\mathbf{P}(s,u)\cdot
\mathsf{prob}\left(u,\mathbf{q}_{q,s}^{\eta+\tau},(\eta+\tau)[\mathbf{X}_{q,s}^{\eta+\tau}:=0]\right)\right]\right\}\,\mathrm{d}\tau\\
& &{}-\int_0^t\left\{\Lambda_s(\tau)
\cdot\left[\sum_{u\in
S}\mathbf{P}(s,u)\cdot
\mathsf{prob}\left(u,\mathbf{q}_{q,s}^{\eta+\tau},(\eta+\tau)[\mathbf{X}_{q,s}^{\eta+\tau}:=0]\right)\right]\right\}\,\mathrm{d}\tau
\end{eqnarray*}
By Proposition~\ref{lem:rightleft}, there exists $t_1>0$ such that $\mathbf{q}_{q,s}^{\eta+\tau}$ and $\mathbf{X}_{q,s}^{\eta+\tau}$ does not change for $\tau\in(0,t_1)$. Thus the integrand function in the integral
\[
\int_0^t\left\{
\Lambda_s(\tau)\cdot\left[\sum_{u\in
S}\mathbf{P}(s,u)\cdot
\mathsf{prob}\left(u,\mathbf{q}_{q,s}^{\eta+\tau},(\eta+\tau)[\mathbf{X}_{q,s}^{\eta+\tau}:=0]\right)\right]\right\}\,\mathrm{d}\tau
\]
is continuous on $\tau$ when $t\in (0,t_1)$, and the point $\tau=0$ can be continuously redefined. Thus by L'H\^{o}spital's Rule, we obtain
\[
\nabla^+_\mathbf{1}\mathsf{prob}(s,q,\eta)=\lambda(s)\cdot\mathsf{prob}(s,q,\eta)-\lambda(s)\cdot\sum_{u\in S}\mathbf{P}(s,u)\cdot \mathsf{prob}(u,\mathbf{q}^{\eta^+}_{q,s},\eta[\mathbf{X}^{\eta^+}_{q,s}:=0])
\]
Then we handle the case for $\nabla^-_\mathbf{1}\mathsf{prob}(s,q,\eta)$ given that $\eta(x)>0$ for all $x\in\mathcal{X}$. For $t\in (0,\min\{\eta(x)\mid x\in\mathcal{X}\})$, we have
\begin{eqnarray*}
& & \mathsf{prob}(s,q,\eta)-\mathsf{prob}(s,q,\eta-t)\\
&=& \mathsf{prob}(s,q,(\eta-t)+t)-\mathsf{prob}(s,q,\eta-t)\\
&=& (e^{\lambda(s)\cdot t}-1)\cdot \int_t^{+\infty}\left\{
\Lambda_s(\tau)\cdot\left[\sum_{u\in
S}\mathbf{P}(s,u)\cdot\mathsf{prob}\left(u,\mathbf{q}_{q,s}^{\eta+(\tau-t)},(\eta+(\tau-t))[\mathbf{X}_{q,s}^{\eta+(\tau-t)}:=0]\right)\right]\right\}\,\mathrm{d}\tau\\
 & &{}-\int_0^t\left\{\Lambda_s(\tau)\cdot\left[\sum_{u\in
S}\mathbf{P}(s,u)\cdot\mathsf{prob}\left(u,\mathbf{q}_{q,s}^{\eta-(t-\tau)},(\eta-(t-\tau))[\mathbf{X}_{q,s}^{\eta-(t-\tau)}:=0]\right)\right]\right\}\,\mathrm{d}\tau\\
&=& (1-e^{-\lambda(s)\cdot t})\cdot \int_0^{+\infty}\left\{
\Lambda_s(\tau)\cdot\left[\sum_{u\in
S}\mathbf{P}(s,u)\cdot\mathsf{prob}\left(u,\mathbf{q}_{q,s}^{\eta+\tau},(\eta+\tau)[\mathbf{X}_{q,s}^{\eta+\tau}:=0]\right)\right]\right\}\,\mathrm{d}\tau\\
& &{}-e^{-\lambda(s)\cdot t}\cdot\int_0^t\left\{\lambda(s)e^{\lambda(s)\tau}\cdot\left[\sum_{u\in
S}\mathbf{P}(s,u)\cdot\mathsf{prob}\left(u,\mathbf{q}_{q,s}^{\eta-\tau},(\eta-\tau)[\mathbf{X}_{q,s}^{\eta-\tau}:=0]\right)\right]\right\}\,\mathrm{d}\tau\\
\end{eqnarray*}
where the last step is obtained by performing $\tau'=\tau-t$ in the first integral and $\tau'=t-\tau$ in the second integral.
By Proposition~\ref{lem:rightleft}, there exists $t_2>0$ such that $\mathbf{q}_{q,s}^{\eta-t}$ and $\mathbf{X}_{q,s}^{\eta-t}$ does not change for $t\in (0,t_2)$. Thus the integrand function in the integral
\[
\int_0^t\left\{\lambda(s)e^{\lambda(s)\tau}\cdot\left[\sum_{u\in
S}\mathbf{P}(s,u)\cdot\mathsf{prob}\left(u,\mathbf{q}_{q,s}^{\eta-\tau},(\eta-\tau)[\mathbf{X}_{q,s}^{\eta-\tau}:=0]\right)\right]\right\}\,\mathrm{d}\tau
\]
is continuous on $\tau$ when $t\in (0,t_2)$. Furthermore, the point $\tau=0$ can be continuously redefined. Thus we can also apply L'H\^{o}spital's Rule and obtain the desired result.
\end{proof}
\begin{remark}
Note that if $[\eta]_\sim$ is not marginal, then $[\eta^+]_\sim=[\eta^-]_\sim=[\eta]_\sim$. This tells us that $\nabla_\mathbf{1}\mathsf{prob}(s,q,\eta)$ exists when $[\eta]_\sim$ is not marginal, i.e., $\nabla^+_\mathbf{1}\mathsf{prob}(s,q,\eta)=\nabla^-_\mathbf{1}\mathsf{prob}(s,q,\eta)$.\qed
\end{remark}
Based on Theorem~\ref{thm:drv}, we present our differential characterization.

\begin{theorem}\label{thm:unique:diff}
The function $\mathsf{prob}$ is the unique solution of the following system of differential equations on $h:\mathcal{V}\mapsto[0,1]$: given any $s\in S,q\in Q$ and $\eta,\eta'\in\mathrm{Val}(\mathcal{X})$,
\begin{enumerate} \itemsep0pt \parskip0pt \parsep0pt \topsep0pt
\item if $\eta\equiv_{\mathrm{b}}\eta'$ then $h(s,q,\eta)=h(s,q,\eta')$;
\item $h(s,q,\eta)=0$ if $(s,q,[\eta]_\sim)$ cannot reach a final vertex in $G$, and $h(s,q,\eta)=1$ if $q\in F$;
\item if $(s,q,[\eta]_\sim)$ can reach a final vertex in $G$ and $q\not\in F$, then
\[
\nabla^+_\mathbf{1}h(s,q,\eta)=\lambda(s)\cdot h(s,q,\eta)-\lambda(s)\cdot\sum_{u\in S}\mathbf{P}(s,u)\cdot h(u,\mathbf{q}^{\eta^+}_{q,s},\eta[\mathbf{X}^{\eta^+}_{q,s}:=0])
\]
and
\[
\nabla^-_\mathbf{1}h(s,q,\eta)=\lambda(s)\cdot h(s,q,\eta)-\lambda(s)\cdot\sum_{u\in S}\mathbf{P}(s,u)\cdot h(u,\mathbf{q}^{\eta^-}_{q,s},\eta[\mathbf{X}^{\eta^-}_{q,s}:=0])
\]
when $\eta(x)>0$ for all $x\in\mathcal{X}$.
\end{enumerate}
\end{theorem}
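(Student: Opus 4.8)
The plan is to split the claim into existence and uniqueness, and to reduce uniqueness to the integral characterization already established in Theorem~\ref{thm:unique:int}.

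For existence I would simply verify that $\mathsf{prob}$ meets all three requirements. Condition~1 is exactly Proposition~\ref{lem:boundness}; the clause $h(s,q,\eta)=1$ for $q\in F$ in condition~2 is immediate from the definition of $\mathsf{prob}$, while the clause $h(s,q,\eta)=0$ when $(s,q,[\eta]_\sim)$ cannot reach a final vertex is the contrapositive of the ``$\Rightarrow$'' direction of Proposition~\ref{lem:prg}; and condition~3 is precisely the content of Theorem~\ref{thm:drv}. Thus $\mathsf{prob}$ is a solution and nothing new is needed here.

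For uniqueness, let $h$ be an arbitrary solution of the differential system. I would show that $h$ also solves the integral system of Theorem~\ref{thm:unique:int}; since conditions~1 and~2 of the two systems coincide verbatim, the whole task is to recover the integral equation from the directional-derivative equation. Fix $(s,q,\eta)$ with $q\notin F$ and $(s,q,[\eta]_\sim)$ reaching a final vertex, and set $f(t):=h(s,q,\eta+t)$ together with
\[
g(t):=\sum_{u\in S}\mathbf{P}(s,u)\cdot h\bigl(u,\mathbf{q}_{q,s}^{\eta+t},(\eta+t)[\mathbf{X}_{q,s}^{\eta+t}:=0]\bigr).
\]
By the product rule for the right directional derivative, condition~3 rewrites as $\nabla^+_\mathbf{1}\bigl(e^{-\lambda(s)t}f(t)\bigr)=-\lambda(s)\,e^{-\lambda(s)t}g(t)$, up to the finitely many instants where $(\eta+t)^+$ differs from $\eta+t$, which form a null set and affect no integral. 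Integrating this identity over $[0,T]$ by the fundamental theorem of calculus and letting $T\to\infty$, the boundary term $e^{-\lambda(s)T}f(T)$ vanishes because $f$ is bounded, leaving $h(s,q,\eta)=\int_0^{+\infty}\Lambda_s(t)\,g(t)\,\mathrm{d}t$, which is the integral equation; Theorem~\ref{thm:unique:int} then gives $h=\mathsf{prob}$. On stretches where $(s,q,[\eta+t]_\sim)$ cannot reach a final vertex one has $f(t)=0$, and the same argument as in the uniqueness part of Theorem~\ref{thm:unique:int} forces $g(t)=0$ there, so $\nabla^+_\mathbf{1}f=\lambda(s)(f-g)$ holds trivially on those intervals.

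The main obstacle is the regularity needed to apply the fundamental theorem of calculus to a one-sided derivative, namely the continuity of $t\mapsto f(t)$ on $[0,\infty)$. At every non-marginal configuration that reaches a final vertex both $\nabla^+_\mathbf{1}h$ and $\nabla^-_\mathbf{1}h$ exist and coincide (cf.\ the remark after Theorem~\ref{thm:drv}), so $f$ has a two-sided derivative there; on non-marginal unreachable stretches $f\equiv 0$; and at a marginal instant $t_0>0$ whose configuration still reaches a final vertex, both one-sided-derivative equations of condition~3 apply and each forces one-sided continuity. The one genuinely delicate case is a marginal $\eta+t_0$ that does not reach a final vertex while the region immediately preceding it does: there the left-hand equation of condition~3 is unavailable, so continuity from the left cannot be read off the differential system and must instead be extracted from the product-region-graph structure. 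I would handle it through the reachability analysis behind Proposition~\ref{lem:prg}, arguing that the left limit of $f$ at such a boundary is forced to equal $f(t_0)$, so that $e^{-\lambda(s)t}f(t)$ has no jumps. Once this continuity is secured the telescoping integration is valid, the remaining computations are routine, and the reduction to Theorem~\ref{thm:unique:int} closes the argument.
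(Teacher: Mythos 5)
Your overall strategy---verify that $\mathsf{prob}$ solves the system via Proposition~\ref{lem:boundness}, Proposition~\ref{lem:prg} and Theorem~\ref{thm:drv}, then prove uniqueness by multiplying by $e^{-\lambda(s)t}$, integrating with the fundamental theorem of calculus over the inter-marginal intervals, and reducing to Theorem~\ref{thm:unique:int}---is exactly the paper's proof. The one place where you depart from it is that you explicitly isolate the point the paper passes over in a single sentence (the paper merely asserts that $f[s,q,\eta]$ is Lipschitz continuous ``since the existence and boundness of $\nabla^+_\mathbf{1}h$ and $\nabla^-_\mathbf{1}h$''), namely left-continuity of $t\mapsto h(s,q,\eta+t)$ at a marginal instant $t_0$ where $(s,q,[\eta+t_0]_\sim)$ cannot reach a final vertex while $(s,q,[(\eta+t_0)^-]_\sim)$ can.

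However, your plan for closing that case---that ``the left limit of $f$ at such a boundary is forced to equal $f(t_0)$,'' to be extracted from the reachability analysis behind Proposition~\ref{lem:prg}---is a genuine gap, and I do not see how it can be filled from the stated hypotheses: the differential system simply does not constrain that left limit. Concretely, take one clock $x$ with $T_x=1$, a single CTMC state $s$ with a self-loop and rate $\lambda(s)$, and a DTA with non-final $q_0$, final $q_1$, a sink $q_2$, and rules $(q_0,\mathcal{L}(s),x<1,\emptyset,q_1)$ and $(q_0,\mathcal{L}(s),x\ge 1,\emptyset,q_2)$. Then $(s,q_0,[\eta]_\sim)$ with $\eta(x)\ge 1$ cannot reach a final vertex (from the marginal region $\{x=1\}$ every edge must delay into $(1,\infty)$ and hence leads to $q_2$), whereas $(s,q_0,(0,1))$ can, but only via a transition taken inside $(0,1)$. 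Now set $h(s,q_0,\eta)=1$ if $\eta(x)<1$ and $h(s,q_0,\eta)=0$ otherwise, with $h(s,q_1,\cdot)=1$ and $h(s,q_2,\cdot)=0$. Conditions 1--3 all hold: on $\{\eta(x)<1\}$ both directional derivatives vanish and the right-hand sides equal $\lambda(s)\cdot 1-\lambda(s)\cdot h(s,q_1,\cdot)=0$, while at $\eta(x)\ge 1$ condition~3 is vacuous because the vertex cannot reach a final vertex. Yet $h\ne\mathsf{prob}$, since $\mathsf{prob}(s,q_0,\eta)=1-e^{-\lambda(s)(1-\eta(x))}$ for $\eta(x)<1$. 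So the jump you worry about genuinely occurs for solutions of the differential system, the telescoping integration breaks precisely there, and no reachability argument will force the left limit to be $f(t_0)$. What is actually needed is an additional hypothesis enforcing left-continuity at such boundaries---for instance requiring the $\nabla^-_\mathbf{1}$ equation whenever $(s,q,[\eta^-]_\sim)$ can reach a final vertex, or imposing continuity of $h$ outright; note that the paper's own one-line continuity claim is subject to the same objection.
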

\begin{proof}
It is clear that $\mathsf{prob}$ satisfies the differential equations above. For the uniqueness, we prove that all functions $h$ that satisfies the differential equations will satisfy the integral equations specified in Theorem~\ref{thm:unique:int}. Let $h$ be such a function. The situation is clear when $(s,q,[\eta]_\sim)$ cannot reach a final vertex in $G$ or $q\in F$. Below we only consider $(s,q,\eta)$ such that $(s,q,[\eta])$ can reach a final vertex in $G$ however $q\not\in F$. For each such $(s,q,\eta)$, we define $f[s,q,\eta]:\mathbb{R}_{\ge 0}\mapsto [0,1]$ by $f[s,q,\eta](t):=h(s,q,\eta+t)$. Then $f[s,q,\eta]$ is differential at those points $t$ where $[\eta+t]_\sim$ is not a marginal region. Note that there are only finitely many points $t$ such that $[\eta+t]_\sim$ is marginal, we can divide $[0,+\infty)$ into a finite cluster $\mathcal{I}$ of open intervals, disregarding a finite number of isolating points, where for each $I\in\mathcal{I}$, $\eta+t\sim\eta+t'$ for all $t',t\in I$. Thus $f[s,q,\eta]$ is piecewise differentiable on $\mathbb{R}_{\ge 0}$. Then $f[s,q,\eta]$ is Lipschitz continuous since the existence and boundness of $\nabla^+_\mathbf{1}h$ and $\nabla^-_\mathbf{1}h$.

Consider $t\in\mathbb{R}_{\ge 0}$ such that $[\eta+t]_\sim$ is not  marginal. We have
\begin{align}\label{eq:2}
&\frac{\mathrm{d}}{\mathrm{d}t}f[s,q,\eta](t)=\\\nonumber
&\qquad\lambda(s)\cdot h(s,q,\eta+t)-\lambda(s)\cdot\sum_{u\in S}\mathbf{P}(s,u)\cdot h(u,\mathbf{q}^{\eta+t}_{q,s},(\eta+t)[\mathbf{X}^{\eta+t}_{q,s}:=0])
\end{align}
Multiply each side of~(\ref{eq:2}) by $e^{-\lambda(s)t}$, we obtain that the equality
\begin{align*}
& e^{-\lambda(s)t}\cdot\frac{\mathrm{d}}{\mathrm{d}t}f[s,q,\eta](t)-\lambda(s)e^{-\lambda(s)t}\cdot f[s,q,\eta](t)=\\
&\qquad{}-\lambda(s)e^{-\lambda(s)t}\cdot\sum_{u\in S}\mathbf{P}(s,u)\cdot h(u,\mathbf{q}^{\eta+t}_{q,s},(\eta+t)[\mathbf{X}^{\eta+t}_{q,s}:=0])
\end{align*}
This is essentially
\begin{align*}
&\frac{\mathrm{d}}{\mathrm{d}t}\left\{e^{-\lambda(s)t}\cdot f[s,q,\eta](t)\right\}=\\
&\qquad{}-\lambda(s)e^{-\lambda(s)t}\cdot\sum_{u\in S}\mathbf{P}(s,u)\cdot h(u,\mathbf{q}^{\eta+t}_{q,s},(\eta+t)[\mathbf{X}^{\eta+t}_{q,s}:=0])
\end{align*}
Note that
$e^{-\lambda(s)t}\cdot f[s,q,\eta](t)$ is absolutely continuous on any closed interval since it is Lipschitz continuous. Thus by the Fundamental Theorem of Calculus for Lebesgue Integral~\cite{Rudin:1987:RCA:26851}, for each $I\in\mathcal{I}$ with $I=(\inf I,\sup I)$ ,
\[
\int_I\frac{\mathrm{d}}{\mathrm{d}t}\left\{e^{-\lambda(s)t}\cdot f[s,q,\eta](t)\right\}\,\mathrm{d}t=e^{-\lambda(s)t}\cdot f[s,q,\eta](t){\Big|_{\inf I}^{\sup I}}
\]
Then by
\begin{align*}
&\int_I\frac{\mathrm{d}}{\mathrm{d}t}\left\{e^{-\lambda(s)t}\cdot f[s,q,\eta](t)\right\}\,\mathrm{d}t=\\
&\qquad{}-\int_I\lambda(s)e^{-\lambda(s)t}\cdot\sum_{u\in S}\mathbf{P}(s,u)\cdot h\left(u,\mathbf{q}^{\eta+t}_{q,s},(\eta+t)[\mathbf{X}^{\eta+t}_{q,s}:=0]\right)\,\mathrm{d}t
\end{align*}
we have
\begin{align*}
&-\int_0^{+\infty}\lambda(s)e^{-\lambda(s)t}\cdot\sum_{u\in S}\mathbf{P}(s,u)\cdot h\left(u,\mathbf{q}^{\eta+t}_{q,s},(\eta+t)[\mathbf{X}^{\eta+t}_{q,s}:=0]\right)\,\mathrm{d}t\\
&\qquad\qquad=\sum_{I\in\mathcal{I}}e^{-\lambda(s)t}\cdot f[s,q,\eta](t){\Big|_{\inf I}^{\sup I}}=-h(s,q,\eta)
\end{align*}
Thus we obtain
\[
h(s,q,\eta)=\int_0^{+\infty}\Lambda_s(t)\cdot\sum_{u\in S}\mathbf{P}(s,u)\cdot h(u,\mathbf{q}^{\eta+t}_{q,s},(\eta+t)[\mathbf{X}^{\eta+t}_{q,s}:=0])\,\mathrm{d}t
\]
Then $h$ satisfies the prerequisite of Theorem~\ref{thm:unique:int}. So $h$ is unique.
\end{proof}

\section{Finite Difference Methods}

In this section, we deal with the approximation of the function $\mathsf{prob}$ through finite approximation schemes. We establish our approximation scheme based on Theorem~\ref{thm:unique:diff} and by finite difference methods~\cite{jw.thomas}. Then we prove that our approximation scheme converges to $\mathsf{prob}$ with a derived error bound.

We fix a CTMC $\mathcal{M}=(S,L,\mathbf{P},\lambda,\mathcal{L})$ and a DTA $\mathcal{A}=(Q,L,\mathcal{X},\Delta,F)$. For computational purpose we assume that all numerical values in $\mathcal{M}$ are rational.

Given valuation $\eta$ and $t\ge 0$, we define $\eta\oplus t\in\mathrm{Val}(\mathcal{X})$ by: $(\eta\oplus t)(x)=\min\{T_x,\eta(x)+t\}$ for all $x\in\mathcal{X}$. Note that $\eta\oplus 0=\eta$ iff $\eta(x)\le T_x$ for all clocks $x$.
We extend $\oplus$, $+$, $\lambda(\centerdot)$, $[\centerdot]_\sim$ and $\mathbf{P}(\centerdot,\centerdot)$ to a triple $(s,q,\eta)\in\mathcal{V}$ as follows:
\begin{itemize} \itemsep0pt \parskip0pt \parsep0pt \topsep0pt
\item $(s,q,\eta)\oplus t=(s,q,\eta\oplus t)$ and $(s,q,\eta)+t=(s,q,\eta+t)$;
\item $[(s,q,\eta)]_\sim=(s,q,[\eta]_\sim)$;
\item $\lambda(s,q,\eta)=\lambda(s)$ and $\mathbf{P}\left((s,q,\eta),u\right)=\mathbf{P}(s,u)$ for $u\in S$.
\end{itemize}
Furthermore, we say that $[(s,q,\eta)]_\sim$ is \emph{marginal} if $[\eta]_\sim$ is marginal. Note that by Lipschitz Continuity and Proposition~\ref{lem:boundness}, we have $\mathsf{prob}(v+t)=\mathsf{prob}(v\oplus t)$ for all $v\in\mathcal{V}$ and $t\ge 0$.

Given $(s,q,\eta)\in\mathcal{V}$ and $u\in S$, we denote $(s,q,\eta)^+:=(s,q,\eta^+)$, and denote the triple $(u,\mathbf{q}^{\eta^+}_{q,s},\eta[\mathbf{X}^{\eta^+}_{q,s}:=0])\in\mathcal{V}$ by $(s,q,\eta)^+_u$.

Before we introduce our approximation scheme, we first prove a useful lemma.

\begin{lemma}\label{lem:region:auxi}
Let $v\in\mathcal{V}$ such that $[v]_\sim$ cannot reach a final vertex in $G$. Then $[v\oplus t]_\sim$ cannot reach a final vertex in $G$ for all $t\ge 0$, and $[v^+_u]_\sim$ cannot reach a final vertex in $G$ for all $u\in S$ such that $\mathbf{P}(v,u)>0$.
\end{lemma}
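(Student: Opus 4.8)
The plan is to recast the combinatorial hypothesis analytically through Proposition~\ref{lem:prg}: since $[w]_\sim$ reaches a final vertex in $G$ iff $\mathsf{prob}(w)>0$, the hypothesis is equivalent to $\mathsf{prob}(v)=0$, and the two conclusions become $\mathsf{prob}(v\oplus t)=0$ for all $t\ge 0$ and $\mathsf{prob}(v^+_u)=0$ for all $u$ with $\mathbf{P}(v,u)>0$. Writing $v=(s,q,\eta)$, note first that $q\notin F$, for otherwise $v$ is itself a final vertex reachable from $[v]_\sim$, contradicting the hypothesis; hence the integral equation of Corollary~\ref{crlly:mrbl} applies at $v$. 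From $\mathsf{prob}(s,q,\eta)=0$, the nonnegativity of the integrand together with $\Lambda_s(\tau)>0$ for $\tau>0$ forces the bracketed sum
\[
g(\tau):=\sum_{u\in S}\mathbf{P}(s,u)\cdot\mathsf{prob}\!\left(u,\mathbf{q}^{\eta+\tau}_{q,s},(\eta+\tau)[\mathbf{X}^{\eta+\tau}_{q,s}:=0]\right)
\]
to vanish for almost every $\tau>0$.

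For the first conclusion I would use the identity $\mathsf{prob}(v+t)=\mathsf{prob}(v\oplus t)$ recorded just before the lemma, which reduces the goal to $\mathsf{prob}(v+t)=0$. Since $(\eta+t)+\tau=\eta+(t+\tau)$, the integral equation for $\mathsf{prob}(s,q,\eta+t)$ has integrand exactly $\Lambda_s(\tau)\,g(t+\tau)$; as $g$ vanishes a.e.\ on $(0,+\infty)$ it vanishes a.e.\ on $(t,+\infty)$, so the integral is $0$.

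For the second conclusion I would again start from $g(\tau)=0$ for a.e.\ $\tau>0$, so that for each fixed $u$ with $\mathbf{P}(s,u)>0$ the term $\mathsf{prob}(u,\mathbf{q}^{\eta+\tau}_{q,s},(\eta+\tau)[\mathbf{X}^{\eta+\tau}_{q,s}:=0])$ is $0$ for a.e.\ $\tau$. By Proposition~\ref{lem:rightleft} there is $t_1>0$ with $\mathbf{q}^{\eta+\tau}_{q,s}=\mathbf{q}^{\eta^+}_{q,s}$ and $\mathbf{X}^{\eta+\tau}_{q,s}=\mathbf{X}^{\eta^+}_{q,s}$ throughout $(0,t_1)$, so I can choose a sequence $\tau_n\downarrow 0$ along which $\mathsf{prob}(u,\mathbf{q}^{\eta^+}_{q,s},(\eta+\tau_n)[\mathbf{X}^{\eta^+}_{q,s}:=0])=0$. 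As $\tau_n\to 0$ the valuation $(\eta+\tau_n)[\mathbf{X}^{\eta^+}_{q,s}:=0]$ tends to $\eta[\mathbf{X}^{\eta^+}_{q,s}:=0]$ in $\|\cdot\|_\infty$, so the Lipschitz continuity of Corollary~\ref{crlly:lipschitz} gives $\mathsf{prob}(v^+_u)=\mathsf{prob}(u,\mathbf{q}^{\eta^+}_{q,s},\eta[\mathbf{X}^{\eta^+}_{q,s}:=0])=0$, and Proposition~\ref{lem:prg} converts this back to unreachability in $G$.

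I expect this last limiting step to be the crux. The triple $v^+_u$ resets its clocks on $\eta$ rather than on a delayed valuation $\eta+\tau$, so $[v^+_u]_\sim$ is in general \emph{not} a graph successor of $[v]_\sim$ and no purely edge-chasing argument reaches it; the continuity of $\mathsf{prob}$ is what lets me transfer the a.e.\ vanishing of $g$ near $0$ to the exact boundary value at $\tau=0$. The remaining work is routine manipulation of the integral equation and the time variable.
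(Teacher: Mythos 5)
Your proof is correct, and it reaches the same two destinations as the paper's proof by a partially different road. The paper handles the first claim purely combinatorially: it observes that every $G$-successor of $[v+t]_\sim$ is already a $G$-successor of $[v]_\sim$, and that $v+t$ and $v\oplus t$ have identical behaviour after any further positive delay (so they have the same successors and the same location), which gives unreachability of a final vertex directly, without ever mentioning $\mathsf{prob}$. You instead route everything through Proposition~\ref{lem:prg} and the integral equation of Corollary~\ref{crlly:mrbl}, extracting the almost-everywhere vanishing of the bracketed sum $g$ and translating the integral to shift the time origin. Both are sound; the paper's version stays inside the region graph and needs no measure-theoretic ``a.e.''\ reasoning, while yours is more uniform (one mechanism drives both claims) and avoids having to argue about successor-closure in $G$.

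For the second claim the two proofs essentially coincide in their crux, and you have correctly identified it: $[v^+_u]_\sim$ need not be a $G$-successor of $[v]_\sim$ because the reset is applied to $\eta$ itself rather than to a positively delayed valuation, so a limit is unavoidable. The only difference is how the intermediate vanishing is obtained: the paper shows that $[(v+\tau)^+_u]_\sim$ is a genuine $G$-successor of $[v]_\sim$ for every $\tau\in(0,t_1)$ (using that $[\eta+\tau]_\sim$ is non-marginal there) and then applies Proposition~\ref{lem:prg} to get $\mathsf{prob}((v+\tau)^+_u)=0$ on all of $(0,t_1)$, whereas you get it only for almost every $\tau$ from the vanishing of $g$; either suffices, since Proposition~\ref{lem:rightleft} freezes $\mathbf{q}^{\eta+\tau}_{q,s}$ and $\mathbf{X}^{\eta+\tau}_{q,s}$ near $0$ and Corollary~\ref{crlly:lipschitz} lets one pass to the limit along any sequence $\tau_n\downarrow 0$ in the full-measure set. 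The final conversion back to the graph via Proposition~\ref{lem:prg} is identical in both arguments.
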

\begin{proof}
Since $[v]_\sim$ cannot reach a final vertex in $G$, we have $[v+t]_\sim$ cannot reach a final vertex in $G$ for all $t\ge 0$. Note that $v+t$ and $v\oplus t$ differ only at those clocks $x$ whose values in $v+t$ are greater than $T_x$. Then $[v+t]_\sim$ can reach a final vertex in $G$ iff $[v\oplus t]_\sim$ can reach a final vertex in $G$ by the fact that $(v+t)+\tau\sim (v\oplus t)+\tau$ for all $\tau>0$. Thus we have $[v\oplus t]_\sim$ cannot reach a final vertex in $G$.

Let $u\in S$ such that $\mathbf{P}(v,u)>0$. By Lemma~\ref{lem:rightleft}, there exists $t_1>0$ such that $[v+t]_\sim$ does not change for all $t\in (0,t_1)$. Then $[v+t]_\sim$ is not marginal for $t\in (0,t_1)$. Since $[v]_\sim$ cannot reach a final vertex in $G$, we have $[(v+t)_u^+]_\sim$ cannot reach a final vertex in $G$ for $t\in (0,t_1)$. Then by Lemma~\ref{lem:prg}, $\mathsf{prob}((v+t)_u^+)=0$ for all $t\in (0,t_1)$. Thus by Corollary~\ref{crlly:lipschitz}, we have $\mathsf{prob}(v_u^+)=0$. It follows that $[v^+_u]_\sim$ cannot reach a final vertex in $G$.
\end{proof}

\subsection{Approximation Schemes}

We establish our approximation scheme in two steps: firstly, we discretize the hypercube $\prod_{x\in\mathcal{X}}[0,T_x]\subseteq\mathrm{Val}(\mathcal{X})$ into small grids; secondly, we establish our approximation scheme by building constraints between these discrete values through finite difference methods. By Lipschitz Continuity and Proposition~\ref{lem:boundness}, we don't need to consider clock valuations outside $\prod_{x\in\mathcal{X}}[0,T_x]$. The  discretization is as follows.

\begin{definition}
Let $m\in\mathbb{N}$. A clock valuation $\eta$ is \emph{on $m$-grid} if $\eta(x)\in [0, T_x]$ and $\eta(x)\cdot m$ is an integer for all clocks $x$. The set of discrete values $\mathsf{D}_m$ of concern is defined as follows:
\[
\mathsf{D}_m=\{\mathsf{h}[(s,q,\eta)]\mid (s,q,\eta)\in \mathcal{V}\mbox{ and }\eta\mbox{ is on }m\mbox{-grid}\}\enskip.
\]
\end{definition}
Below we fix a $m\in\mathbb{N}$ and define $\rho:=m^{-1}$. Based on Theorem~\ref{thm:unique:diff}, we establish our basic approximation scheme, as follows.

\begin{definition}[Basic Approximation Scheme]\label{def:approx:basic}
The approximation scheme $\Gamma_m$ consists of the discrete values $\mathsf{D}_m$ and a system of linear equations on $\mathsf{D}_m$. The system of linear equations contains one of the following equations for each $\mathsf{h}[v]\in\mathsf{D}_m$:
\begin{itemize} \itemsep0pt \parskip0pt \parsep0pt \topsep0pt
\item $\mathsf{h}[v]=0$ if $[v]_\sim$ (as a vertex of $G$) cannot reach a final vertex in $G$;
\item $\mathsf{h}[v]=1$ if $[v]_\sim$ is a final vertex in $G$;
\item If $[v]_\sim$ can reach a final vertex in $G$ however itself is not a final vertex, then
\[
\frac{\mathsf{h}[v\oplus\rho]-\mathsf{h}[v]}{\rho}=\lambda(v)\cdot\mathsf{h}[v]-\lambda(v)\cdot\sum_{u\in S}\mathbf{P}(v,u)\cdot\mathsf{h}[v^+_u]\enskip.
\]
\end{itemize}
\end{definition}
In other words, we relate elements of $\mathsf{D}_m$ by using $\nabla^+_\mathbf{1}\mathsf{h}$ in Theorem~\ref{thm:unique:diff}. Note that $\mathsf{h}[v]$ is in essence $v$. Sometimes we will not distinguish between $\mathsf{h}[v]$ and $v$.

We note that $\Gamma_m$ does not have initial values from which we can approximate $\mathsf{prob}$ incrementally. One fundamental problem is whether $\Gamma_m$ has a solution, or even a unique solution. Another fundamental problem is the error bound $\max\{|h^*[v]-\mathsf{prob}(v)|\mid \mathsf{h}[v]\in\mathsf{D}_m\}$ provided that $h^*$ is the unique solution of $\Gamma_m$.

Below we first derive the \emph{error bound} of $\Gamma_m$ which is the error bound of each linear equality when we substitute all $\mathsf{h}[v]$ by $\mathsf{prob}(v)$. Note that generally the error bound of an approximation scheme does not imply any information of the error bound of the solution to the approximation scheme.

\begin{proposition}\label{lem:err:scheme}
For all $\mathsf{h}[v]\in\mathsf{D}_m$, if $[v]_\sim$ is not a final vertex and can reach some final vertex in $G$ then $\left|\frac{1}{\rho}\cdot(\mathsf{prob}(v\oplus\rho)-\mathsf{prob}(v))-\nabla_\mathbf{1}^+\mathsf{prob}(v)\right|<M_2\cdot\rho$,
where $M_2:=2\lambda_{\max}\cdot M_1$\enskip.
\end{proposition}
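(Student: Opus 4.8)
The plan is to reduce the statement to a pointwise estimate on the one-sided derivative of $\mathsf{prob}$ along $\mathbf{1}$ and then to exploit the grid alignment of $v$. Writing $v=(s,q,\eta)$, I would first note that since $v$ lies on the $m$-grid we have $\mathsf{prob}(v\oplus\rho)=\mathsf{prob}(v+\rho)$ by the remark preceding Definition~\ref{def:approx:basic}, so it suffices to bound $\left|\tfrac{1}{\rho}(\mathsf{prob}(v+\rho)-\mathsf{prob}(v))-\nabla^+_\mathbf{1}\mathsf{prob}(v)\right|$. Setting $f(t):=\mathsf{prob}(s,q,\eta+t)$, Corollary~\ref{crlly:lipschitz} shows that $f$ is locally Lipschitz, hence absolutely continuous on $[0,\rho]$, so the Fundamental Theorem of Calculus for the Lebesgue integral gives $\tfrac{1}{\rho}(\mathsf{prob}(v+\rho)-\mathsf{prob}(v))=\tfrac{1}{\rho}\int_0^\rho f'(t)\,\mathrm{d}t$, where $f'(t)=\nabla^+_\mathbf{1}\mathsf{prob}(v+t)$ at every non-marginal point. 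Thus the quantity to be bounded equals $\left|\tfrac{1}{\rho}\int_0^\rho\big(\nabla^+_\mathbf{1}\mathsf{prob}(v+t)-\nabla^+_\mathbf{1}\mathsf{prob}(v)\big)\,\mathrm{d}t\right|$, and it remains to control the integrand pointwise.

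The key structural observation, which I expect to be the crux, is that because $\eta$ is on the $m$-grid, every clock value $\eta(x)$ is a multiple of $\rho$, so for $t\in(0,\rho)$ the value $(\eta+t)(x)$ lies strictly between two consecutive multiples of $\rho$ and in particular never becomes an integer. Consequently no clock crosses an integer boundary on the open interval $(0,\rho)$, and since all clocks advance at the same rate, $[\eta+t]_\sim$ is both constant and non-marginal throughout $(0,\rho)$. I would then use this to freeze the discrete successor data: as $\eta^+$ and $\eta+t$ represent the same non-marginal region and hence satisfy the same guards of $\mathcal{A}$, we get $\mathbf{q}^{(\eta+t)^+}_{q,s}=\mathbf{q}^{\eta^+}_{q,s}$ and $\mathbf{X}^{(\eta+t)^+}_{q,s}=\mathbf{X}^{\eta^+}_{q,s}$ for all $t\in(0,\rho)$. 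This is exactly what keeps the successor targets from jumping; without the grid alignment, an integer crossing inside $(0,\rho)$ could make $(v+t)^+_u$ and $v^+_u$ land in unrelated regions, so the integrand would fail to be $O(t)$ and the argument would break.

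With the successor data frozen, I would expand $\nabla^+_\mathbf{1}\mathsf{prob}(v+t)-\nabla^+_\mathbf{1}\mathsf{prob}(v)$ using the formula of Theorem~\ref{thm:drv} (applicable since $q\notin F$). The difference splits into $\lambda(s)\big(\mathsf{prob}(v+t)-\mathsf{prob}(v)\big)$ and $-\lambda(s)\sum_{u}\mathbf{P}(s,u)\big(\mathsf{prob}((v+t)^+_u)-\mathsf{prob}(v^+_u)\big)$. For the first term Corollary~\ref{crlly:lipschitz} gives a bound $M_1 t$, as the two valuations differ by $t<1$ in the sup-norm. For each summand of the second term the two configurations share the same state $u$ and the same frozen location $\mathbf{q}^{\eta^+}_{q,s}$, while their valuations $(\eta+t)[\mathbf{X}^{\eta^+}_{q,s}:=0]$ and $\eta[\mathbf{X}^{\eta^+}_{q,s}:=0]$ differ by at most $t$ in the sup-norm, so Corollary~\ref{crlly:lipschitz} again yields $M_1 t$. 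Summing over $u$ and using $\sum_u\mathbf{P}(s,u)=1$ gives $\big|\nabla^+_\mathbf{1}\mathsf{prob}(v+t)-\nabla^+_\mathbf{1}\mathsf{prob}(v)\big|\le 2\lambda(s)M_1 t\le M_2 t$. Integrating, $\tfrac{1}{\rho}\int_0^\rho M_2 t\,\mathrm{d}t=\tfrac{1}{2}M_2\rho<M_2\rho$, which is the claimed strict bound.
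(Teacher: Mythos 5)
Your proof is correct and follows essentially the same route as the paper's: both reduce the difference quotient to the derivative at interior points of $(0,\rho)$ (the paper via the Mean Value Theorem, you via absolute continuity and the Fundamental Theorem of Calculus), then apply the formula of Theorem~\ref{thm:drv} together with the Lipschitz bound of Corollary~\ref{crlly:lipschitz}. Your explicit justification that the grid alignment of $v$ keeps $[\eta+t]_\sim$ constant and non-marginal on $(0,\rho)$, so that the successor location and reset set are frozen, is a point the paper leaves implicit but relies on in the same way.
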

\begin{proof}
Suppose $\mathsf{h}[v]\in\mathsf{D}_m$ such that $[v]_\sim$ is not a final vertex and can reach some final vertex
in $G$. Since $\mathsf{h}[v]\in\mathsf{D}_m$, the function $f[v](t):t\mapsto\mathsf{prob}(v\oplus t)$ is continuous on $[0,\rho]$ and is differentiable on $(0,\rho)$. By Lagrange's Mean Value Theorem, there exists $\rho'\in (0,\rho)$ such that $\frac{1}{\rho}\cdot(\mathsf{prob}(v\oplus\rho)-\mathsf{prob}(v))=\frac{\mathrm{d}}{\mathrm{d}t}f[v](\rho')$.
By Theorem~\ref{thm:drv}, we have
\[
\frac{\mathrm{d}}{\mathrm{d}t}f[v](\rho')=\lambda(v)\cdot\mathsf{prob}(v+\rho')-\lambda(v)\cdot\sum_{u\in S}\mathbf{P}(v,u)\cdot\mathsf{prob}((v+\rho')^+_u)
\]
and
\[
\nabla^+_\mathbf{1}\mathsf{prob}(v)=\lambda(v)\cdot\mathsf{prob}(v)-\lambda(v)\cdot\sum_{u\in S}\mathbf{P}(v,u)\cdot\mathsf{prob}(v^+_u)
\]
Then by Corollary~\ref{crlly:lipschitz}, we obtain the desired result.
\end{proof}
To analyze $\Gamma_m$, we further define several auxiliary approximation schemes. Below we define $\mathsf{B}_m,\mathsf{B}_m^{\max}$ as follows:
\begin{itemize} \itemsep0pt \parskip0pt \parsep0pt \topsep0pt
\item $\mathsf{B}_m=\{\mathsf{h}[v]\in\mathsf{D}_m\mid [v]_\sim\mbox{ is not final and can reach some final vertex in }G\}$
\item $\mathsf{B}_m^{\max}=\{\mathsf{h}[v]\in\mathsf{B}_m\mid v=(s,q,\eta)\mbox{ and }\eta(x)=T_x\mbox{ for all }x\in\mathcal{X}\}$\enskip.
\end{itemize}
For each $\mathsf{h}[v]\in\mathsf{B}_m$, we denote by $N_v\in\mathbb{N}_0$ the minimum number such that either $\mathsf{h}\left[v\oplus (N_v\cdot\rho)\right]\in\mathsf{B}^{\max}_m$ or $[v\oplus (N_v\cdot\rho)]_\sim$ cannot reach some final vertex in $G$. We first transform $\Gamma_m$ into an equivalent form.

\begin{definition}
The approximation scheme $\Gamma'_m$ consists of the discrete values $\mathsf{D}_m$, and the system of linear equations which contains one of the following linear equalities for each $\mathsf{h}[v]\in\mathsf{D}_m$:
\begin{itemize} \itemsep0pt \parskip0pt \parsep0pt \topsep0pt
\item $\mathsf{h}[v]=0$ if $[v]_\sim$ cannot reach a final vertex in ${G}$;
\item $\mathsf{h}[v]=1$ if $[v]_\sim$ is a final vertex of $G$.
\item If $\mathsf{h}[v]\in\mathsf{B}_m\backslash\mathsf{B}_m^{\max}$, then
\begin{equation}\label{eq:3}
\mathsf{h}[v]=\frac{1}{1+\rho\cdot\lambda(v)}\cdot
\mathsf{h}[v\oplus\rho]+\frac{\rho\cdot\lambda(v)}{1+\rho\cdot\lambda(v)}\cdot\sum_{u\in S}\mathbf{P}(v,u)\cdot\mathsf{h}[v^+_u]
\end{equation}
\item if $\mathsf{h}[v]\in\mathsf{B}^{\max}_m$ then $\mathsf{h}[v]=\sum_{u\in S}\mathbf{P}(v,u)\cdot\mathsf{h}[v^+_u]$\enskip.
\end{itemize}
\end{definition}
It is clear that $\Gamma'_m$ is just a re-formulation of $\Gamma_m$. Note that the case $\mathsf{h}[v]\in\mathsf{B}^{\max}_m$ is derived from $v\oplus\rho=v$. The error bound of $\Gamma'_m$ is as follows.

\begin{proposition}\label{lem:err:gamma1}
For all $\mathsf{h}[v]\in\mathsf{B}^{\max}_m$, $\mathsf{prob}(v)=\sum_{u\in S}\mathbf{P}(v,u)\cdot\mathsf{prob}(v^+_u)$. For all $\mathsf{h}[v]\in\mathsf{B}_m\backslash\mathsf{B}_m^{\max}$,
\[
\left|\mathsf{prob}(v)-\left(\frac{1}{1+\rho\cdot\lambda(v)}\cdot\mathsf{prob}(v\oplus\rho)+\frac{\rho\cdot\lambda(v)}{1+\rho\cdot\lambda(v)}\cdot\sum_{u\in S}\mathbf{P}(v,u)\cdot\mathsf{prob}\left(v^+_u\right)\right)\right|<M_2\rho^2\enskip.
\]
\end{proposition}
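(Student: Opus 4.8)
The plan is to derive both claims directly from the forward-difference error estimate in Proposition~\ref{lem:err:scheme} together with the closed form of $\nabla^+_\mathbf{1}\mathsf{prob}$ supplied by Theorem~\ref{thm:drv}; no new machinery is needed, only careful algebraic rearrangement.

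First I would treat the boundary case $\mathsf{h}[v]\in\mathsf{B}^{\max}_m$. Here $v=(s,q,\eta)$ with $\eta(x)=T_x$ for every clock $x$, so $\eta\oplus t=\eta$ and hence $v\oplus t=v$ for all $t\ge 0$. By the identity $\mathsf{prob}(v+t)=\mathsf{prob}(v\oplus t)$ noted after the definition of $\oplus$, the map $t\mapsto\mathsf{prob}(v+t)$ is therefore constant, which forces $\nabla^+_\mathbf{1}\mathsf{prob}(v)=0$ straight from the definition of the right directional derivative. Since $\mathsf{h}[v]\in\mathsf{B}_m$, the vertex $[v]_\sim$ is not final, so $q\notin F$ and Theorem~\ref{thm:drv} applies; substituting $\nabla^+_\mathbf{1}\mathsf{prob}(v)=0$ into its formula gives $0=\lambda(v)\cdot\mathsf{prob}(v)-\lambda(v)\cdot\sum_{u\in S}\mathbf{P}(v,u)\cdot\mathsf{prob}(v^+_u)$, and dividing by $\lambda(v)>0$ yields the first equation.

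For the interior case $\mathsf{h}[v]\in\mathsf{B}_m\backslash\mathsf{B}_m^{\max}$ I would abbreviate $\lambda:=\lambda(v)$ and $\Sigma:=\sum_{u\in S}\mathbf{P}(v,u)\cdot\mathsf{prob}(v^+_u)$, so that Theorem~\ref{thm:drv} reads $\nabla^+_\mathbf{1}\mathsf{prob}(v)=\lambda\cdot\mathsf{prob}(v)-\lambda\cdot\Sigma$. Feeding this into Proposition~\ref{lem:err:scheme} turns its conclusion into
\[
\left|\tfrac{1}{\rho}\bigl(\mathsf{prob}(v\oplus\rho)-\mathsf{prob}(v)\bigr)-\lambda\cdot\mathsf{prob}(v)+\lambda\cdot\Sigma\right|<M_2\rho .
\]
Multiplying by $\rho$ and collecting terms gives $\bigl|(1+\rho\lambda)\cdot\mathsf{prob}(v)-\mathsf{prob}(v\oplus\rho)-\rho\lambda\cdot\Sigma\bigr|<M_2\rho^2$. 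Dividing through by $1+\rho\lambda$ produces exactly the expression inside the absolute value in the statement, and since $1+\rho\lambda\ge 1$ the bound is only improved, remaining strictly below $M_2\rho^2$.

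The computation is essentially bookkeeping, and I do not expect a genuine obstacle: the content is in recognizing that the residual of the \emph{implicit} scheme $\Gamma'_m$ (Equation~(\ref{eq:3})) can be matched, after clearing the denominator $1+\rho\lambda$, against the residual of the \emph{explicit} forward-difference scheme $\Gamma_m$ already estimated in Proposition~\ref{lem:err:scheme}. The only step that requires care is observing that the factor $1/(1+\rho\lambda)\le 1$ preserves (indeed strengthens) the $M_2\rho^2$ estimate rather than weakening it, so that the same constant $M_2$ carries over unchanged.
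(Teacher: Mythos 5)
Your proposal is correct and follows the same route as the paper's own (much terser) proof: the boundary case from $\nabla^+_\mathbf{1}\mathsf{prob}(v)=0$ via Theorem~\ref{thm:drv}, and the interior case by substituting Theorem~\ref{thm:drv} into Proposition~\ref{lem:err:scheme} and clearing the factor $1+\rho\lambda(v)\ge 1$. The algebra checks out, and your observation that dividing by $1+\rho\lambda(v)$ only tightens the $M_2\rho^2$ bound is exactly the point the paper leaves implicit.
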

\begin{proof}
The case $\mathsf{h}[v]\in\mathsf{B}^{\max}_m$ is due to the fact that $\nabla^+_\mathbf{1}\mathsf{prob}(v)=0$. The case $\mathsf{h}[v]\in\mathsf{B}_m\backslash\mathsf{B}_m^{\max}$ can be directly derived from the statement of Proposition~\ref{lem:err:scheme}.
\end{proof}

\begin{remark}
$\Gamma'_m$ can somewhat be viewed as a reachability problem on a discrete-time Markov chain (DTMC). The nodes of the DTMC are elements of $\mathsf{D}_m$. The goal nodes are those $\mathsf{h}[v]$'s where $[v]_\sim$ is a final vertex of $G$. The set of nodes which can not reach a goal state are those $\mathsf{h}[v]$'s where $[v]_\sim$ cannot reach a final vertex in $G$. Then $\Gamma'_m$ states the relationship between the reachability probabilities of the remaining states. It seems that we can use this fact to deduce that $\Gamma'_m$ (hence $\Gamma_m$) has a unique solution (e.g., by applying the proof on~\cite[Page 766]{DBLP:books/daglib/0020348}). However, this may fail because the remaining states may still contain states which cannot reach goal states. This can be seen as follows.

Suppose the DTA has four locations $\{q_0,q_1,q_2,q_3\}$ where $q_2$ is the only final location and two clocks $\{x,y\}$. Let $s$ be a CTMC-state with $\mathbf{P}(s,s)>0$. Assume that the DTA has only two meaningful rules, namely $(q_0,\mathcal{L}(s),x<1\wedge y<1,\{x\},q_1)$ and $(q_1,\mathcal{L}(s),x<1\wedge y>1,\emptyset,q_2)$. The other rules lead to the "deadlock" location $q_3$. Define $\eta=(\frac{m-1}{m},0)$ where the first (resp. second) coordinate is the value on $x$ (resp. on $y$). Then $(s,q_0,[\eta]_\sim)$ can reach a final vertex in $G$ since we have $(q_0,\eta)\xrightarrow{\mathcal{L}(s),\frac{1}{2}\cdot\rho}(q_1,(0,\frac{1}{2}\cdot\rho))$ and $(q_1,(0,\frac{1}{2}\cdot\rho))\xrightarrow{\mathcal{L}(s),1-\frac{1}{4}\cdot\rho}(q_2,(1-\frac{1}{4}\cdot\rho,1+\frac{1}{4}\cdot\rho))$. However after the discretization, the node $(s,q_0,\eta)$ can go to $(s,q_1,(0,0))$ but from $(s,q_1,(0,0))$ we cannot go to $q_2$. This implies that $(s,q_0,\eta)$ cannot reach a final node.\qed
\end{remark}
Below we unfold $\Gamma'_m$ into another equivalent form $\Gamma''_m$.

\begin{definition}
The approximation scheme $\Gamma''_m$ consists of the discrete values $\mathsf{D}_m$, and one of the following linear equality for each $\mathsf{h}[v]\in\mathsf{D}_m$:
\begin{itemize}\itemsep0pt \parskip0pt \parsep0pt \topsep0pt
\item $\mathsf{h}[v]=0$ if $[v]_\sim$ cannot reach some final vertex in $G$, and $\mathsf{h}[v]=1$ if $[v]_\sim$ is a final vertex;
\item if $\mathsf{h}[v]\in\mathsf{B}_m\backslash\mathsf{B}^{\max}_m$, then
\begin{align}\label{eq:4}
\mathsf{h}[v]&=\sum_{l=0}^{N_v-1}\left\{\left(\frac{1}{1+\rho\cdot\lambda(v)}\right)^l\cdot\frac{\rho\cdot\lambda(v)}{1+\rho\cdot\lambda(v)}\cdot\sum_{u\in S}\mathbf{P}(v,u)\cdot\mathsf{h}\left[(v\oplus(l\cdot\rho))^+_u\right]\right\}\nonumber\\
& \quad{}+\left(\frac{1}{1+\rho\cdot\lambda(v)}\right)^{N_v}\cdot f(v)
\end{align}
where $f(v):=0$ if $[v\oplus (N_v\cdot\rho)]_\sim$ cannot reach some final vertex in $G$ and $f(v):=\sum_{u\in S}\mathbf{P}(v,u)\cdot\mathsf{h}[(v\oplus (N_v\cdot\rho))^+_u]$ if $\mathsf{h}\left[v\oplus (N_v\cdot\rho)\right]\in\mathsf{B}^{\max}_m$;
\item if $\mathsf{h}[v]\in\mathsf{B}^{\max}_m$ then $\mathsf{h}[v]=\sum_{u\in S}\mathbf{P}(v,u)\cdot\mathsf{h}[v^+_u]$\enskip.
\end{itemize}
\end{definition}
Intuitively, $\Gamma''_m$ is obtained by unfolding $\mathsf{h}[v\oplus\rho]$ further in Equation~(\ref{eq:3}). In the following, we prove that $\Gamma'_m$ and $\Gamma''_m$ are equivalent.

We describe $\Gamma''_m$ by a matrix equation $\mu=\mathbf{A}\mu+\mathbf{b}$ where $\mu$ is the vector over $\mathsf{B}_m$ to be solved, $\mathbf{b}:\mathsf{B}_m\mapsto\mathbb{R}$ is a vector and $\mathbf{A}:\mathsf{B}_m\times \mathsf{B}_m\mapsto\mathbb{R}$ is a matrix. More specifically, the row $\mathbf{A}(\mathsf{h}[v],-)$ is specified by the coefficents on $\mathsf{h}[v']\in\mathsf{B}_m$ in Equation~(\ref{eq:4}); the value $\mathbf{b}(\mathsf{h}[v])$ is the sum of the values over $\mathsf{D}_m\backslash\mathsf{B}_m$ in Equation~(\ref{eq:4}). The exact permutation among $\mathsf{B}_m$ is irrelevant. Analogously, we describe $\Gamma'_m$ by a matrix equation $\mu=\mathbf{C}\mu+\mathbf{d}$.

\begin{proposition}\label{lem:approx:equiv}
$\Gamma'_m$ and $\Gamma''_m$ are equivalent, i.e., they have the same set of solutions.
\end{proposition}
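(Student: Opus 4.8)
The plan is to exploit that $\Gamma'_m$ and $\Gamma''_m$ share the same variables $\mathsf{D}_m$ and, crucially, the same \emph{terminal} equations: the clause ``$\mathsf{h}[v]=0$'' (when $[v]_\sim$ cannot reach a final vertex), ``$\mathsf{h}[v]=1$'' (when $[v]_\sim$ is final), and the $\mathsf{B}^{\max}_m$-equation $\mathsf{h}[v]=\sum_{u\in S}\mathbf{P}(v,u)\cdot\mathsf{h}[v^+_u]$. The two schemes differ only in the equation attached to each $\mathsf{h}[v]\in\mathsf{B}_m\backslash\mathsf{B}^{\max}_m$, namely \eqref{eq:3} versus \eqref{eq:4}. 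Hence it suffices to fix an arbitrary $\mathsf{h}:\mathsf{D}_m\mapsto\mathbb{R}$ already satisfying the common terminal equations and show that $\mathsf{h}$ satisfies \emph{all} instances of \eqref{eq:3} iff it satisfies \emph{all} instances of \eqref{eq:4}. Throughout I would write $v_l:=v\oplus(l\cdot\rho)$ and $c_v:=\frac{1}{1+\rho\cdot\lambda(v)}$ (so $1-c_v=\frac{\rho\cdot\lambda(v)}{1+\rho\cdot\lambda(v)}$), and record two invariances that keep the chain manageable: since $\oplus$ acts only on the clock component, $\lambda(v_l)=\lambda(v)$ and $\mathbf{P}(v_l,u)=\mathbf{P}(v,u)$ for all $l,u$, and $\oplus$ is additive, i.e.\ $(v\oplus a)\oplus b=v\oplus(a+b)$; moreover the location component of $v_l$ never changes, so no $v_l$ is final. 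Finally, by the minimality in the definition of $N_v$, for $v\in\mathsf{B}_m\backslash\mathsf{B}^{\max}_m$ we have $N_v\ge 1$, each $v_l$ with $0\le l<N_v$ again lies in $\mathsf{B}_m\backslash\mathsf{B}^{\max}_m$, and $v_{N_v}$ is either in $\mathsf{B}^{\max}_m$ or cannot reach a final vertex.

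For the forward direction ($\Gamma'_m\Rightarrow\Gamma''_m$) I would telescope. Using the invariances, \eqref{eq:3} at $v_l$ reads
\[
\mathsf{h}[v_l]=(1-c_v)\cdot\sum_{u\in S}\mathbf{P}(v,u)\cdot\mathsf{h}[(v_l)^+_u]+c_v\cdot\mathsf{h}[v_{l+1}]\enskip.
\]
Substituting this successively for $l=0,1,\dots,N_v-1$ (legitimate because each such $v_l\in\mathsf{B}_m\backslash\mathsf{B}^{\max}_m$, where \eqref{eq:3} applies) collapses the chain into exactly the geometric sum of \eqref{eq:4}, with residual term $c_v^{N_v}\cdot\mathsf{h}[v_{N_v}]$. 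The terminal equation at $v_{N_v}$ gives either $\mathsf{h}[v_{N_v}]=0$ or $\mathsf{h}[v_{N_v}]=\sum_{u}\mathbf{P}(v,u)\cdot\mathsf{h}[(v_{N_v})^+_u]$, which are precisely the two clauses defining $f(v)$; so the residual equals $c_v^{N_v}\cdot f(v)$ and \eqref{eq:4} follows.

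For the backward direction ($\Gamma''_m\Rightarrow\Gamma'_m$) I would start from \eqref{eq:4} at $v$, peel off the $l=0$ summand $(1-c_v)\sum_u\mathbf{P}(v,u)\cdot\mathsf{h}[v^+_u]$, factor $c_v$ out of the remaining terms, and reindex $l\mapsto l-1$. Two subcases arise according to $v_1=v\oplus\rho$. If $v_1\in\mathsf{B}_m\backslash\mathsf{B}^{\max}_m$ (equivalently $N_v\ge 2$), then using additivity of $\oplus$ together with $N_{v_1}=N_v-1$ and $f(v_1)=f(v)$ --- both immediate from $v_1\oplus(N_{v_1}\cdot\rho)=v_{N_v}$ and the $\lambda,\mathbf{P}$ invariances --- the factored bracket is \emph{exactly} the right-hand side of \eqref{eq:4} at $v_1$, hence equals $\mathsf{h}[v_1]$, yielding \eqref{eq:3}. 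If instead $v_1\notin\mathsf{B}_m\backslash\mathsf{B}^{\max}_m$, then $N_v=1$, the bracket degenerates to $f(v)$, and the terminal equation at $v_1$ (either $\mathsf{h}[v_1]=0$, or the $\mathsf{B}^{\max}_m$-equation $\mathsf{h}[v_1]=\sum_u\mathbf{P}(v,u)\cdot\mathsf{h}[(v_1)^+_u]$) shows $f(v)=\mathsf{h}[v_1]$; again \eqref{eq:3} follows.

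The conceptual content is a single clean telescoping of a geometric recurrence, so no step is genuinely deep; the main obstacle is purely bookkeeping. I expect the care to be concentrated in (i) justifying $N_{v_1}=N_v-1$, $f(v_1)=f(v)$, and the reindexing in the backward direction, and (ii) isolating the boundary case $N_v=1$, where the chain terminates at the very first step so that the bracket contributes only the $f(v)$ term rather than a full copy of \eqref{eq:4}.
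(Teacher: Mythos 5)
Your proof is correct, and both directions rest on the same underlying idea as the paper's (telescoping the geometric recurrence), but your backward direction is organized quite differently. The paper proves $\Gamma''_m\Rightarrow\Gamma'_m$ by a downward induction on $N_v-m$ that establishes the whole family of partial unfoldings~(\ref{eq:5}) for every truncation depth, recovering~(\ref{eq:3}) only at the end by setting the depth to $0$; the inductive step substitutes the one-step identity at the shifted point $v\oplus((m+1)\cdot\rho)$ into the partial sum. You instead perform a single algebraic peel: strip the $l=0$ summand from~(\ref{eq:4}) at $v$, factor out $\frac{1}{1+\rho\lambda(v)}$, reindex, and recognize the bracket as the right-hand side of~(\ref{eq:4}) at $v\oplus\rho$ via the identities $N_{v\oplus\rho}=N_v-1$ and $f(v\oplus\rho)=f(v)$, which you correctly reduce to the additivity of $\oplus$ and the invariance of $\lambda$, $\mathbf{P}$ and the location under $\oplus$. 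This buys a shorter argument with no auxiliary family of identities and no induction (beyond the implicit well-foundedness of $N_v$), at the cost of having to isolate the boundary case $N_v=1$ explicitly — which you do, and your case analysis there is exhaustive because the location component never changes under $\oplus$, so $v\oplus\rho$ cannot become final. Your explicit justification that each $v\oplus(l\cdot\rho)$ with $l<N_v$ stays in $\mathsf{B}_m\backslash\mathsf{B}^{\max}_m$ (needed to apply~(\ref{eq:3}) along the chain in the forward direction) is a detail the paper glosses over with ``it is clear that''; including it is a mild improvement.
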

\begin{proof}
``$\Gamma'_m\Rightarrow\Gamma''_m$'': It is clear that $\Gamma''_m$ is obtained directly from $\Gamma'_m$ by expanding $\mathsf{h}[v\oplus\rho]$ iteratively in Equation~(\ref{eq:3}) whenever $v\oplus\rho\in\mathsf{B}_m\backslash\mathsf{B}_m^{\max}$.

``$\Gamma''_m\Rightarrow\Gamma'_m$'': Let $\{h[v]\mid\mathsf{h}[v]\in\mathsf{D}_m\}$ be a solution of $\Gamma''_m$. We define $\hat{h}:\mathsf{D}_m\mapsto\mathbb{R}$ by $\hat{h}[v]=h[v]\cdot{\langle}{\mathsf{B}_m}{\rangle}(v)$ for $\mathsf{h}[v]\in\mathsf{D}_m$. We prove that for all $\mathsf{h}[v]\in\mathsf{B}_m\backslash\mathsf{B}_m^{\max}$ and all $0\le m< N_v$,
\begin{align}\label{eq:5}
h[v]&=\sum_{l=0}^{m}\left\{\left(\frac{1}{1+\rho\cdot\lambda(v)}\right)^l\cdot\frac{\rho\cdot\lambda(v)}{1+\rho\cdot\lambda(v)}\cdot\sum_{u\in S}\mathbf{P}(v,u)\cdot \hat{h}\left[(v\oplus(l\cdot\rho))^+_u\right]\right\}\nonumber\\
&\quad+\sum_{l=0}^{m}\left\{\left(\frac{1}{1+\rho\cdot\lambda(v)}\right)^l\cdot\mathbf{d}\left(\mathsf{h}\left[v\oplus(l\cdot\rho)\right]\right)\right\}\nonumber\\
& \quad+\left(\frac{1}{1+\rho\cdot\lambda(v)}\right)^{m+1}\cdot h\left[v\oplus ((m+1)\cdot\rho)\right]\enskip.
\end{align}
We prove this by induction on $N_v-m$. The case when $m=N_v-1$ is directly specified by $\Gamma''_m$. Let $0\le m<m+1<N_v$ and suppose Equation~\ref{eq:5} holds when $N_{v'}-m'<N_v-m$. Then we have
\begin{align*}
h[v]&=\sum_{l=0}^{m+1}\left\{\left(\frac{1}{1+\rho\cdot\lambda(v)}\right)^l\cdot\frac{\rho\cdot\lambda(v)}{1+\rho\cdot\lambda(v)}\cdot\sum_{u\in S}\mathbf{P}(v,u)\cdot \hat{h}\left[(v\oplus(l\cdot\rho))^+_u\right]\right\}\\
&\quad+\sum_{l=0}^{m+1}\left\{\left(\frac{1}{1+\rho\cdot\lambda(v)}\right)^l\cdot\mathbf{d}\left(\mathsf{h}\left[v\oplus(l\cdot\rho)\right]\right)\right\}\\
& \quad{}+\left(\frac{1}{1+\rho\cdot\lambda(v)}\right)^{m+2}\cdot h\left[v\oplus ((m+2)\cdot\rho)\right]\enskip.
\end{align*}
Then we have $(\dag)$:
\begin{align*}
h[v]&=\sum_{l=0}^{m}\left\{\left(\frac{1}{1+\rho\cdot\lambda(v)}\right)^l\cdot\frac{\rho\cdot\lambda(v)}{1+\rho\cdot\lambda(v)}\cdot\sum_{u\in S}\mathbf{P}(v,u)\cdot \hat{h}\left[(v\oplus l\cdot\rho)^+_u\right]\right\}\\
&\quad{}+\sum_{l=0}^{m}\left\{\left(\frac{1}{1+\rho\cdot\lambda(v)}\right)^l\cdot\mathbf{d}\left(\mathsf{h}\left[v\oplus(l\cdot\rho)\right]\right)\right\}\\
& \quad{}+\left(\frac{1}{1+\rho\cdot\lambda(v)}\right)^{m+1}\cdot\Big\{
\frac{\rho\cdot\lambda(v)}{1+\rho\cdot\lambda(v)}\cdot
\sum_{u\in S}\mathbf{P}(v,u)\cdot \hat{h}\left[(v\oplus ((m+1)\cdot\rho))^+_u\right]\\
& \quad{}+\mathbf{d}\left(\mathsf{h}\left[v\oplus((m+1)\cdot\rho)\right]\right)+\left(\frac{1}{1+\rho\cdot\lambda(v)}\right)\cdot h\left[v\oplus ((m+2)\cdot\rho)\right]\Big\}\enskip.
\end{align*}
Note that $N_{v\oplus(m+1)\cdot\rho}-0=N_v-(m+1)<N_v-m$. Thus by the induction hypothesis,
\begin{align*}
& h[v\oplus((m+1)\cdot\rho)]\\
= & \frac{\rho\cdot\lambda(v)}{1+\rho\cdot\lambda(v)}\cdot
\sum_{u\in S}\mathbf{P}(v,u)\cdot \hat{h}\left[(v\oplus((m+1)\cdot\rho))_u^+\right]\\
&{}+\mathbf{d}\left(\mathsf{h}\left[v\oplus((m+1)\cdot\rho)\right]\right)+\left(\frac{1}{1+\rho\cdot\lambda(v)}\right)\cdot h\left[v\oplus ((m+2)\cdot\rho)\right]\enskip.
\end{align*}
Thus we obtain that Equation~(\ref{eq:5}) holds when we substitute $h[v\oplus((m+1)\cdot\rho)]$ into $(\dag)$. Then by taking $m=0$ in Equation~(\ref{eq:5}), we obtain that $\{h[v]\mid\mathsf{h}[v]\in\mathsf{D}_m\}$ is a solution of $\Gamma'_m$.
\end{proof}
Below we derive the error bound of $\Gamma''_m$.

\begin{proposition}\label{lem:err:gamma2}
The error bound of $\Gamma''_m$ is $M_3\cdot\rho$, where $M_3=T_{\max}\cdot M_2$.
\end{proposition}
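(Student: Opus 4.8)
The plan is to read off the residual of each defining equation of $\Gamma''_m$ after substituting $\mathsf{prob}$ for $\mathsf{h}$, and to show it never exceeds $M_3\rho$. The equations $\mathsf{h}[v]=0$ (when $[v]_\sim$ cannot reach a final vertex) and $\mathsf{h}[v]=1$ (when $[v]_\sim$ is final) produce zero residual, by Proposition~\ref{lem:prg} and by $\mathsf{prob}(v)=1$ for $q\in F$ respectively; the equation for $\mathsf{h}[v]\in\mathsf{B}^{\max}_m$ also produces zero residual, by the first assertion of Proposition~\ref{lem:err:gamma1}. Hence it suffices to bound the residual of Equation~(\ref{eq:4}) for each $\mathsf{h}[v]\in\mathsf{B}_m\setminus\mathsf{B}^{\max}_m$.

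To this end I would revisit the telescoping already carried out in the proof of Proposition~\ref{lem:approx:equiv}, but keep track of the error term. Writing $c:=\frac{1}{1+\rho\lambda(v)}$ (so that $1-c=\frac{\rho\lambda(v)}{1+\rho\lambda(v)}$), the key single-step identity is the rearrangement of the second assertion of Proposition~\ref{lem:err:gamma1}: for every $w\in\mathsf{B}_m\setminus\mathsf{B}^{\max}_m$,
\[
\mathsf{prob}(w)=c\cdot\mathsf{prob}(w\oplus\rho)+(1-c)\cdot\sum_{u\in S}\mathbf{P}(w,u)\cdot\mathsf{prob}(w^+_u)+\varepsilon_w,\qquad |\varepsilon_w|<M_2\rho^2 .
\]
Since $\lambda(\cdot)$ and $\mathbf{P}(\cdot,u)$ depend only on the $S$-component, which is unchanged along $v,v\oplus\rho,\dots$, the constant $c$ is the same throughout. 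By the minimality in the definition of $N_v$, each of $v,v\oplus\rho,\dots,v\oplus((N_v-1)\rho)$ lies in $\mathsf{B}_m\setminus\mathsf{B}^{\max}_m$, so I may substitute this identity successively exactly as in Proposition~\ref{lem:approx:equiv}. Telescoping yields
\[
\mathsf{prob}(v)=\sum_{l=0}^{N_v-1}c^{\,l}(1-c)\sum_{u\in S}\mathbf{P}(v,u)\,\mathsf{prob}\big((v\oplus(l\rho))^+_u\big)+c^{\,N_v}\mathsf{prob}\big(v\oplus(N_v\rho)\big)+\sum_{l=0}^{N_v-1}c^{\,l}\varepsilon_{v\oplus(l\rho)} .
\]

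The next step is to identify the boundary term $c^{N_v}\mathsf{prob}(v\oplus(N_v\rho))$ with $c^{N_v}f(v)$, where $f(v)$ is the quantity of Equation~(\ref{eq:4}) evaluated at $\mathsf{prob}$. If $[v\oplus(N_v\rho)]_\sim$ cannot reach a final vertex then $\mathsf{prob}(v\oplus(N_v\rho))=0$ by Proposition~\ref{lem:prg}, while $f(v)=0$ by definition. If instead $\mathsf{h}[v\oplus(N_v\rho)]\in\mathsf{B}^{\max}_m$, then the first assertion of Proposition~\ref{lem:err:gamma1} gives $\mathsf{prob}(v\oplus(N_v\rho))=\sum_{u\in S}\mathbf{P}(v,u)\mathsf{prob}((v\oplus(N_v\rho))^+_u)=f(v)$ exactly, with no extra error. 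In either case the first two sums above coincide with the right-hand side of Equation~(\ref{eq:4}) evaluated at $\mathsf{prob}$, so the residual of Equation~(\ref{eq:4}) at $v$ equals precisely $\big|\sum_{l=0}^{N_v-1}c^{\,l}\varepsilon_{v\oplus(l\rho)}\big|$.

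It then remains to estimate this residual. Since $0<c\le 1$ and $|\varepsilon_{v\oplus(l\rho)}|<M_2\rho^2$, it is strictly less than $N_v\,M_2\rho^2$. Finally I would bound $N_v$: along $v,v\oplus\rho,\dots$ each clock $x$ grows by $\rho$ per step until capped at $T_x$, so $\mathsf{B}^{\max}_m$ (or an unreachable region first) is reached after at most $T_{\max}/\rho=T_{\max}m$ steps, giving $N_v\le T_{\max}/\rho$. Combining, the residual is $<T_{\max}\,M_2\rho=M_3\rho$, as claimed. The only genuinely delicate point is the two-case identification of the boundary term, ensuring it matches $c^{N_v}f(v)$ exactly; the observation that $c^{\,l}\le 1$ makes the geometric weighting harmless, so the crude count $N_v\le T_{\max}/\rho$ already suffices and no sharper summation is needed.
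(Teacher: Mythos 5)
Your proposal is correct and follows essentially the same route as the paper's proof: both iterate the one-step estimate of Proposition~\ref{lem:err:gamma1} along $v, v\oplus\rho,\dots,v\oplus((N_v-1)\rho)$, accumulate the error as $M_2\rho^2\cdot\sum_{l=0}^{N_v-1}c^{\,l}$, and bound this by $N_v\cdot M_2\rho^2\le (T_{\max}/\rho)\cdot M_2\rho^2=M_3\rho$ using $c\le 1$ and $N_v\le T_{\max}/\rho$. Your explicit two-case treatment of the boundary term $c^{N_v}f(v)$ and the verification that all intermediate points stay in $\mathsf{B}_m\setminus\mathsf{B}^{\max}_m$ are details the paper leaves implicit, but the argument is the same.
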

\begin{proof}
We only need to consider $\mathsf{h}[v]\in\mathsf{B}_m\backslash\mathsf{B}^{\max}_m$. By Proposition~\ref{lem:err:gamma1},
\begin{equation}\label{eq:6}
\left|\mathsf{prob}(v)-\left(\frac{1}{1+\rho\cdot\lambda(v)}\cdot
\mathsf{prob}(v\oplus\rho)+\frac{\rho\cdot\lambda(v)}{1+\rho\cdot\lambda(v)}\cdot\sum_{u\in S}\mathbf{P}(v,u)\cdot\mathsf{prob}\left(v^+_u\right)\right)\right|<M_2\rho^2
\end{equation}
Expanding $\mathsf{prob}[v\oplus\rho]$ one step further in~(\ref{eq:6}) will result in another error of
$\frac{1}{1+\rho\cdot\lambda(v)}\cdot M_2\rho^2$. By iterated expansion up to $N_v$ $(\le T_{\max}\rho^{-1})$ steps, we obtain that the error bound of $\Gamma''_m$ is no greater than $M_2\rho^2\cdot\sum_{n=0}^{N_v-1}(\frac{1}{1+\rho\cdot\lambda(s)})^n$, which is smaller than $M_2\cdot T_{\max}\cdot\rho$.
\end{proof}

\subsection{Analysis of the Approximation Schemes}

Below we analyse the approximation schemes proposed in the previous subsection. We fix some $m\in\mathbb{N}$ and $\rho=m^{-1}$. We define $\lambda_{\min}=\min\{\lambda(s)\mid s\in S\}$ and $p_{\min}=\min\{\mathbf{P}(s,u)\mid s,u\in S\mbox{ and }\mathbf{P}(s,u)>0\}$. Note that $\lambda_{\min}>0$.

Recall that we describe $\Gamma''_m$ by $\mu=\mathbf{A}\mu+\mathbf{b}$ and $\Gamma'_m$ by $\mu=\mathbf{C}\mu+\mathbf{d}$ in the previous subsection. Below we analyse the equation $\mu=\mathbf{A}\mu+\mathbf{b}$. To this end, we first reproduce (on CTMC and DTA) the notions of $\delta$-\emph{seperateness} and $\delta$-\emph{wideness}, which is originally discovered by Brazdil~\emph{et al.}~\cite{DBLP:conf/hybrid/BrazdilKKKR11} on semi-Markov processes and DTA. Below we define the transition relation $\xrightarrow{t}$ over $\mathcal{V}$ by: $(s,q,\eta)\xrightarrow{t}(u,q',\eta')$ iff $\mathbf{P}(s,u)>0$ and $(q,\eta)\xrightarrow{\mathcal{L}(s),t}(q',\eta')$.

\begin{definition}
A clock valuation $\eta$ is $\delta$-\emph{separated} if for all $d_1,d_2\in\mathcal{R}_\eta$, either $d_1=d_2$ or $|d_1-d_2|\ge\delta$. A transition $(s,q,\eta)\xrightarrow{t} (u,q',\eta')$ is $\delta$-\emph{wide} if $t\ge\delta$ and for all $\tau\in (t-\delta,t+\delta)$, $\eta+\tau\sim\eta+t$. Furthermore, a transition path
\[
(s_0,q_0,\eta_0)\xrightarrow{t_1}(s_1,q_1,\eta_1)\dots\xrightarrow{t_n}(s_n,q_n,\eta_n)\quad (n\ge 1)
\]
is $\delta$-\emph{wide} if all its transitions are $\delta$-wide.
\end{definition}
Intuitively, A transition is $\delta$-wide if one can adjust the transition by up to $\delta$ time units, while keeping the DTA rule used on this transition.

Below we say that a set $\mathcal{I}$ of \emph{disjoint} open intervals is an \emph{open partition} (of $[0,1]$) iff it holds that $\bigcup\mathcal{I}\subseteq[0,1]$ and $[0,1]\backslash\bigcup\mathcal{I}$ is a finite set. Given a non-empty open interval $I\subseteq [0,1]$ with $I=(c_1,c_2)$ and a $t\in\mathbb{R}_{\ge 0}$, we denote by $I\diamond t$ the (possibly empty) interval $(\mathrm{frac}(c_1+t),\mathrm{frac}(c_2+t))$\enskip.
The following result is the counterpart of the one on semi Markov processes and DTA~\cite{DBLP:conf/concur/BrazdilKKKR10,DBLP:conf/hybrid/BrazdilKKKR11}. 

\begin{proposition}\label{lem:deltawide}
For all $(s,q,\eta)\in\mathcal{V}$, if $\eta$ is $\delta$-separated and $(s,q,[\eta]_\sim)$ can reach some final vertex in $G$ with $q\not\in F$, then there exists an at most $|V|$-long, $\delta\slash|V|$-wide transition path from $(s,q,\eta)$ to some $(s',q',\eta')$ with $q'\in F$.
\end{proposition}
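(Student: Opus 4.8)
The plan is to lift a path in the region graph $G$ to a concrete $\delta/|V|$-wide transition path. First I would combine the hypothesis with Proposition~\ref{lem:prg}: since $(s,q,[\eta]_\sim)$ reaches a final vertex, there is a path $(s,q,[\eta]_\sim)=(s_0,q_0,r_0)\rightarrow\cdots\rightarrow(s_n,q_n,r_n)$ in $G$ with $q_n\in F$, and taking a \emph{shortest} such path makes it loop-free, so it visits distinct vertices and hence $n\le|V|-1<|V|$. The goal is then to realize this region path by a concrete transition path $(s,q,\eta)=(s_0,q_0,\eta_0)\xrightarrow{t_1}\cdots\xrightarrow{t_n}(s_n,q_n,\eta_n)$ with $\eta_i\in r_i$ and every transition $\delta/|V|$-wide.

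Next I would carry out the realization inductively. For a single step, the edge $(s_{i-1},q_{i-1},r_{i-1})\rightarrow(s_i,q_i,r_i)$ guarantees, by the definition of $E$, a witness delay landing in a non-marginal region that is routed into $r_i$; moreover, by the order-preservation properties of $\sim$ recalled in Section 4.1, the set of feasible delays from the \emph{actual} valuation $\eta_{i-1}$ is, exactly as in the cluster construction in the proof of Proposition~\ref{lem:prg}, a union of maximal open intervals on each of which $\eta_{i-1}+\tau$ stays in one fixed non-marginal region. Choosing $t_i$ in the interior of such an interval, at distance at least $\delta/|V|$ from both of its endpoints, makes the step $\delta/|V|$-wide by the definition of $\delta$-wideness; the side condition $t_i\ge\delta/|V|$ then holds automatically, since the left endpoint of every interior interval is nonnegative.

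The quantitative heart is to show that, at \emph{every} step, a feasible interior interval of width at least $2\delta/|V|$ is available, so that the choice above can be made. Here I would exploit $\delta$-separateness through the $\diamond$-operation and open partitions introduced just before the statement: the fractional values of $\eta_0$ together with $\{0,1\}$, being $\delta$-separated, cut $[0,1]$ into an open partition all of whose intervals have length $\ge\delta$; each concrete delay rotates this partition by $\diamond t$, while resets only collapse surviving values onto $0$ and thus never create new small gaps. I would propagate an open partition $\mathcal{I}_i$ of feasible fractional offsets for the suffix starting at step $i$, obtained from $\mathcal{I}_{i+1}$ by a $\diamond$-rotation followed by intersection with the guard $\mathfrak{g}(\cdot)$ of the $i$-th rule. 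The key lemma to establish is that each such intersection deletes at most one separating threshold, so the minimal interval length of $\mathcal{I}_i$ drops by at most $\delta/|V|$ per step; as there are at most $|V|$ steps, each $\mathcal{I}_i$ still contains an interval of length $\ge\delta-(n-i)\cdot\delta/|V|\ge\delta/|V|$, from which the required width-$2\delta/|V|$ interior interval, and hence a $\delta/|V|$-wide step, is read off.

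The step I expect to be the main obstacle is precisely this additive width bookkeeping. The naive realization, in which one re-centers the delay at the midpoint of the landing interval at every step, halves the available separation each time and therefore yields only a $\delta/2^{|V|}$-wide path, which is far weaker than claimed. To obtain the uniform $\delta/|V|$ bound one must avoid paying the full separation afresh at each step; instead the delays (equivalently the cumulative times, and hence the reset phases $\mathrm{frac}(T_i-T_j)$ contributed by the at most $|V|$ resets) have to be chosen coherently, so that \emph{all} fractional values occurring anywhere along the path remain mutually $\ge\delta/|V|$-separated. Distributing these at most $|V|$ newly created phases inside the original length-$\ge\delta$ arcs, and verifying that each guard intersection costs only a single $\diamond$-rotated threshold, is the delicate combinatorial core that the $\diamond$/open-partition formalism is designed to handle, and it is this coherence argument — rather than the lifting itself — on which the proof stands or falls.
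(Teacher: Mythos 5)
Your overall strategy coincides with the paper's: take a shortest (hence loop-free, length $\le|V|$) path to a final vertex in $G$, lift it to a concrete transition path, and use the $\delta$-separateness of $\mathcal{R}_\eta$ together with open partitions of $[0,1]$ and the $\diamond$-rotation to budget how much ``width'' each step may consume. You also correctly diagnose the central danger: re-centering the delay in the landing interval at every step only yields $\delta/2^{|V|}$-wideness. The problem is that the device you propose to escape this --- a backward-propagated partition $\mathcal{I}_i$ whose \emph{minimal interval length drops by at most $\delta/|V|$ per step} --- is neither established nor, as stated, achievable. Each step inserts one new threshold (the fractional phase created by the cumulative delay and the reset) into one arc of the partition. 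If a single original arc of length $\ge\delta$ receives $j\ge 2$ such thresholds over the course of the path, the best uniform guarantee for its $j+1$ pieces is $\delta/(j+1)$, not $\delta-j\cdot\delta/|V|$; for $|V|\ge 4$ and $j=2$ these are incompatible. Moreover it does not suffice that \emph{some} interval of $\mathcal{I}_i$ remains long: the region path prescribes which region $\eta_i+t_i$ must land in, hence which specific piece of the rotated partition the delay must hit, so you cannot route through the largest surviving interval. You explicitly leave this ``coherence argument'' open and observe that the proof stands or falls on it; since it is precisely the quantitative content of the proposition, the proposal has a genuine gap.

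The paper closes this gap with a two-pass construction that your single-pass scheme cannot replicate without lookahead. In the first pass it builds \emph{some} realization $\eta_1\xrightarrow{t_1}\cdots\xrightarrow{t_{n-1}}\eta_n$ of the region path while maintaining two open partitions $\mathcal{I}'_i$ (the rotated one containing $\mathcal{R}_{\eta_i}$) and $\mathcal{I}_i$ (the unrotated record of splittings), linked by a bijection $\varphi_i$; this pass is used only to determine the combinatorics, namely which arc is split at each step and hence the numbers $k^i_1,k^i_2$ of \emph{future} splittings that each of the two newly created sub-arcs will undergo. Since $n\le|V|$, each original arc --- of length $\ge\delta$ by $\delta$-separateness --- is split at most $|V|-1$ times. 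In the second pass the delays $t'_i$ are re-chosen (preserving $\mathrm{int}(t'_i)=\mathrm{int}(t_i)$ and the order type of the partition, via a second bijection $\psi_i$) so that each newly created piece is allotted length at least $(k+1)\cdot\delta/|V|$, where $k$ is its recorded future splitting count; the total demand per original arc is at most $|V|\cdot\delta/|V|=\delta$, so the allocation is feasible, and every transition ends up at distance $\ge\delta/|V|$ from the nearest threshold, i.e.\ is $\delta/|V|$-wide. If you want to salvage your backward formulation, you would have to build an equivalent form of this lookahead into the invariant --- e.g.\ carry, for each arc, the number of thresholds it will still receive --- rather than a single uniform ``minimal length'' bound.
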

\begin{proof}
Let $(s,q,\eta)\in\mathcal{V}$ such that $(s,q,[\eta]_\sim)$ is not final and can reach some final vertex in $G$. Then there is a path
\[
(s,q,[\eta]_\sim)=(s_1,q_1,r_1)\rightarrow\dots\rightarrow (s_n,q_n,r_n)
\]
in $G$ such that $1<n\le |V|$ and $q_n\in F$. We first inductively construct a transition path
\[
(s_1,q_1,\eta_1)\xrightarrow{t_1}\dots\xrightarrow{t_{n-1}}(s_n,q_n,\eta_n)
\]
such that $\eta_i\in r_i$ for all $1\le i\le n$, while maintaining the following structures:
\begin{itemize} \itemsep0pt \parskip0pt \parsep0pt \topsep0pt
\item two open partitions $\mathcal{I}'_i,\mathcal{I}_i$ with $\mathcal{R}_{\eta_i}\subseteq [0,1]\backslash\bigcup\mathcal{I}'_i$ for each $1\le i\le n$;
\item a bijection $\varphi_i:\mathcal{I}'_i\mapsto\mathcal{I}_i$ for each $1\le i\le n$;
\item two intervals $(\mathsf{c}_1^i,\mathsf{c}_2^i)\in\mathcal{I}_i$, $(\mathsf{w}_1^i,\mathsf{w}_2^i)\in\mathcal{I}'_i$ with $\varphi_i((\mathsf{w}_1^i,\mathsf{w}_2^i))=(\mathsf{c}_1^i,\mathsf{c}_2^i)$ for each $1\le i\le n-1$;
\item a $\mathsf{c}^i\in (\mathsf{c}_1^i,\mathsf{c}_2^i)$ for each $1\le i\le n-1$.
\end{itemize}
Initially, we set $\eta_1=\eta$ and $\mathcal{I}_1=\mathcal{I}'_1=\{(w_j,w_{j+1})\mid 0\le j<m\}$, where $\{w_j\}_{0\le j\le m}$ satisfies that $\mathcal{R}_\eta=\{w_0,w_1,\dots,w_m\}$ and $w_j<w_{j+1}$ for all $0\le j<m\enskip$ (note that $w_0=0$ and $w_{m}=1$). We let $\varphi_1$ be the identity mapping.

Suppose that the transition path until $(s_i,q_i,\eta_i)$ together with $\mathcal{I}'_i,\mathcal{I}_i,\varphi_i$ are constructed. Since $(s_i,q_i,r_i)\rightarrow(s_{i+1},q_{i+1},r_{i+1})$, there exists $t_i>0$ such that $[\eta_i+t_i]_\sim$ is not marginal, $q_{i+1}=\mathbf{q}_{q_i,s_i}^{\eta_i+t_i}$ and $(\eta_i+t_i)[\mathbf{X}^{\eta_i+t_i}_{q_i,s_i}:=0]\in r_{i+1}$. Since  $[\eta_i+t_i]_\sim$ is not marginal, we can assume $1\in (\mathsf{w}^i_1+\mathrm{frac}(t_i),\mathsf{w}^i_2+\mathrm{frac}(t_i))$ for some $(\mathsf{w}^i_1,\mathsf{w}^i_2)\in\mathcal{I}'_i$.
Denote $(\mathsf{c}^i_1,\mathsf{c}^i_2)=\varphi_i((\mathsf{w}^i_1,\mathsf{w}^i_2))$ and choose  $\mathsf{c}^i\in (\mathsf{c}^i_1,\mathsf{c}^i_2)$ arbitrarily (e.g., $\mathsf{c}^i=\frac{1}{2}\cdot(\mathsf{c}^i_1+\mathsf{c}^i_2)$). Then we set $\eta_{i+1}:=(\eta_i+t_i)[\mathbf{X}^{\eta_i+t_i}_{q_i,s_i}:=0]\in r_{i+1}$, and split $\mathcal{I}_{i},\mathcal{I}'_{i}$ as follows:
\begin{align*}
\mathcal{I}_{i+1}:=&\varphi_i(\mathcal{I}'_i-\{(\mathsf{w}^i_1,\mathsf{w}^i_2)\})\cup\{(\mathsf{c}^i_1,\mathsf{c}^i),(\mathsf{c}^i,\mathsf{c}^i_2)\}\enskip;\\
\mathcal{I}'_{i+1}:=&\{(\mathsf{w}^i_1+\mathrm{frac}(t_i),1),(0, \mathrm{frac}(\mathsf{w}^i_2+\mathrm{frac}(t_i)))\}\cup\{I\diamond t_i\mid I\in\mathcal{I}'_{i}-\{(\mathsf{w}^i_1,\mathsf{w}^i_2)\}\}\enskip.
\end{align*}
The mapping $\varphi_{i+1}:\mathcal{I}'_{i+1}\mapsto\mathcal{I}_{i+1}$ is defined as follows:
\begin{align*}
&\varphi_{i+1}((\mathsf{w}^i_1+\mathrm{frac}(t_i),1))=(\mathsf{c}^i_1,\mathsf{c}^i)\mbox{ and }\varphi_{i+1}((0,\mathrm{frac}(\mathsf{w}^i_2+\mathrm{frac}(t_i)))=(\mathsf{c}^i,\mathsf{c}^i_2)\enskip;\\
&\varphi_{i+1}(I\diamond t_i)=\varphi_i(I)\mbox{ for all }I\in\mathcal{I}'_{i}-\{(\mathsf{w}^i_1,\mathsf{w}^i_2)\}\enskip.
\end{align*}
Intuitively, we record by $\mathcal{I}'_i$ every splitting point which may make $\eta_i+t_i$ less wide, and we record the splitting information without time delaying by $\mathcal{I}_i$, where the correspondence between them is maintained by $\varphi_i$.

Since $n\le |V|$, at most $|V|-1$ splitting occurs at each interval (including its sub-intervals) in $\mathcal{I}_1$ during the inductive construction above. Based on this point, we inductively construct a $\delta\slash|V|$-wide transition path
\[
(s,q,\eta)=(s_1,q_1,\eta'_1)\xrightarrow{t'_1}\dots\xrightarrow{t'_{n-1}}(s_n,q_n,\eta'_n)
\]
such that $\eta'_i\sim\eta_i$ for all $1\le i\le n$, while maintaining an open partition $\mathcal{I}''_i$ and a bijection $\psi_i:\mathcal{I}''_i\mapsto\mathcal{I}'_i$ for each $1\le i\le n$ which satisfy the following condition for all $1\le i\le n$:
\begin{enumerate} \itemsep0pt \parskip0pt \parsep0pt \topsep0pt
\item $\mathcal{R}_{\eta'_i}\subseteq [0,1]\backslash\bigcup\mathcal{I}''_i$;
\item for all $I_1,I_2\in\mathcal{I}''_i$, $\sup I_1\le \inf I_2$ iff $\sup\psi_i(I_1)\le\inf\psi_i(I_2)$;
\item for all $x\in\mathcal{X}$, $(c'_1,c'_2)\in\mathcal{I}''_i$ and $(c_1,c_2)\in\mathcal{I}'_i$, if $\psi_i((c'_1,c'_2))=(c_1,c_2)$, then (i) $\eta'_i(x)=c'_1$ iff $\eta_i(x)=c_1$ and (ii) $\eta'_i(x)=c'_2$ iff $\eta_i(x)=c_2$.
\end{enumerate}
Intuitively, we maintain the order on the fractional values in the previous transition path, however adjust the gap on each transition.

Initially we set $\mathcal{I}''_1=\mathcal{I}'_1$ and $\psi_1$ to be the identity mapping. Suppose the path until $(s_i,q_i,\eta'_i)$ is constructed. Let $(\mathsf{d}^i_1,\mathsf{d}^i_{2})\in\mathcal{I}''_i$ such that $\psi_i((\mathsf{d}^i_{1},\mathsf{d}^i_{2}))=(\mathsf{w}^i_{1},\mathsf{w}^i_{2})$. We choose $t'_i$ such that $\mathrm{int}(t'_i)=\mathrm{int}(t_i)$, $1\in (\mathsf{d}^i_{1}+\mathrm{frac}(t'_i),\mathsf{d}^i_{2}+\mathrm{frac}(t'_i))$ and the length of  $(\mathsf{d}^i_{1}+\mathrm{frac}(t'_i),1)$ (resp. $(0, \mathrm{frac}(\mathsf{d}^i_{2}+\mathrm{frac}(t'_i)))$ is no smaller than $(k^i_1+1)\cdot\delta\slash|V|$ (resp. $(k^i_2+1)\cdot\delta\slash|V|$), where $k^i_1$ (resp. $k^i_2$) is the number of splittings on the sub-intervals of $(\mathsf{c}_1^i,\mathsf{c}^i)$ (resp. $(\mathsf{c}^i,\mathsf{c}_2^i)$) during the previous inductive construction. Then we set $\eta'_{i+1}:=(\eta'_i+t'_i)[\mathbf{X}_{q_i,s_i}^{\eta'_i+t'_i}:=0]$ and similarly split $\mathcal{I}''_i,\psi_i$ as follows:
\begin{align*}
&\mathcal{I}''_{i+1}:=\{I\diamond t'_i\mid I\in\mathcal{I}''_{i}-\{(\mathsf{d}^i_1,\mathsf{d}^i_2)\}\}\cup\{(\mathsf{d}^i_{1}+\mathrm{frac}(t'_i), 1),(0, \mathrm{frac}(\mathsf{d}^i_{2}+\mathrm{frac}(t'_i)))\};\\
&\psi_{i+1}(I'\diamond t'_i)=I\diamond t_i\mbox{ whenever }\psi_{i}(I')=I\mbox{ and }I'\ne (\mathsf{d}^i_1,\mathsf{d}^i_2);\\
&\psi_{i+1}((\mathsf{d}^i_{1}+\mathrm{frac}(t'_i), 1))=(\mathsf{w}^i_{1}+\mathrm{frac}(t_i), 1);\\
&\psi_{i+1}((0, \mathrm{frac}(\mathsf{d}^i_{2}+\mathrm{frac}(t'_i))))=(0, \mathrm{frac}(\mathsf{w}^i_{2}+\mathrm{frac}(t_i)))\enskip.
\end{align*}
By the choice of $t'_i,t_i$, one can prove inductively that $\eta_i\sim\eta'_i$ and $\eta_i+t_i\sim\eta'_i+t'_i$ for all $1\le i\le n$. Thus the constructed path is a legal transition path. And by the construction, this transition path is $\delta\slash|V|$-wide.
\end{proof}
Then we study the linear equation system $\mu=\mathbf{A}\mu+\zeta$ where $\zeta:\mathsf{B}_m\mapsto\mathbb{R}$ is an arbitrary real vector. Below we define $\zeta_{\max}:\mathsf{B}_m\mapsto\mathbb{R}$ such that $\zeta_{\max}(\mathsf{h}[v])=M_3\cdot\rho$ for all $\mathsf{h}[v]\in\mathsf{B}_m$. We denote by $|\zeta|$ the vector such that $|\zeta|(\mathsf{h}[v])=|\zeta(\mathsf{h}[v])|$ for all $\mathsf{h}[v]\in\mathsf{B}_m$. We extend $\le$ to vectors over $\mathsf{B}_m$ in a pointwise fashion.

\begin{proposition}~\label{lem:leq:existence}
Suppose $m>2|V|^2$. Let $\zeta$ be a vector over $\mathsf{B}_m$ such that $|\zeta|\le\zeta_{\max}$. Then the matrix series $\sum_{n=0}^\infty\mathbf{A}^n\zeta$ converges. Moreover, we have $\parallel\sum_{n=0}^\infty \mathbf{A}^n\zeta\parallel_\infty\le |V|\cdot\mathfrak{c}^{-|V|}\cdot(M_3\cdot\rho)$, where $\mathfrak{c}:=e^{-\lambda_{\max}T_{\max}}\cdot p_{\min}\cdot \frac{\lambda_{\min}}{2|V|^2+\lambda_{\min}}$\enskip.
\end{proposition}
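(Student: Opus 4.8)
The plan is to regard $\mathbf{A}$ as the sub-stochastic one-step kernel, restricted to the transient class $\mathsf{B}_m$, of the absorbing discrete-time Markov chain described in the Remark following Proposition~\ref{lem:err:gamma1}, whose absorbing states are those $\mathsf{h}[v]$ with $[v]_\sim$ either final or unable to reach a final vertex. From the coefficients of Equation~(\ref{eq:4}) one checks that each row of $\mathbf{A}$ is nonnegative and sums to at most $1$ (the geometric weights telescope to $\sum_{l=0}^{N_v-1}(1+\rho\lambda(v))^{-l}\frac{\rho\lambda(v)}{1+\rho\lambda(v)}=1-(1+\rho\lambda(v))^{-N_v}$, which with the $f(v)$-term never exceeds $1$), so $\parallel\mathbf{A}\parallel_\infty\le 1$ and hence $\parallel\mathbf{A}^j\parallel_\infty\le 1$ for all $j$.

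First I would reduce everything to the single estimate $\parallel\mathbf{A}^{|V|}\parallel_\infty\le 1-\mathfrak{c}^{|V|}$. Granting it, writing $N=k|V|+j$ with $0\le j<|V|$ gives $\parallel\mathbf{A}^N\parallel_\infty\le(1-\mathfrak{c}^{|V|})^{k}$, whence
\[
\sum_{n=0}^\infty\parallel\mathbf{A}^n\parallel_\infty\le|V|\cdot\sum_{k=0}^\infty(1-\mathfrak{c}^{|V|})^{k}=|V|\cdot\mathfrak{c}^{-|V|},
\]
so $\sum_{n=0}^\infty\mathbf{A}^n\zeta$ converges absolutely and $\parallel\sum_{n=0}^\infty\mathbf{A}^n\zeta\parallel_\infty\le|V|\cdot\mathfrak{c}^{-|V|}\cdot\parallel\zeta\parallel_\infty\le|V|\cdot\mathfrak{c}^{-|V|}\cdot(M_3\cdot\rho)$, using $|\zeta|\le\zeta_{\max}$. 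Since $\mathbf{A}^{|V|}$ is nonnegative, its maximal row sum is $\max_v$ of the probability of remaining in $\mathsf{B}_m$ for $|V|$ macro-steps, so the required estimate is exactly the statement that from every $v\in\mathsf{B}_m$ the chain is absorbed (reaches a final or a dead vertex) within $|V|$ macro-steps with probability at least $\mathfrak{c}^{|V|}$.

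For this uniform absorption bound I would invoke Proposition~\ref{lem:deltawide}. As $\eta$ lies on the $m$-grid, its fractional coordinates are multiples of $\rho$, so $\eta$ is $\rho$-separated; since $(s,q,[\eta]_\sim)$ reaches a final vertex, Proposition~\ref{lem:deltawide} yields an at most $|V|$-long, $(\rho/|V|)$-wide transition path to a final location. I would then realise this path inside the discrete chain one macro-step at a time, using that for a grid valuation each $\sim$-region occupies exactly one grid cell $(l\rho,(l+1)\rho)$ and every marginal instant is a grid point, so the rule fired just after $v\oplus l\rho$ agrees with the continuous transition whenever its firing time lies in that cell. The per-step weight is bounded below by decomposing $\mathfrak{c}=e^{-\lambda_{\max}T_{\max}}\cdot p_{\min}\cdot\frac{\lambda_{\min}}{2|V|^2+\lambda_{\min}}$: the survival factor $(1+\rho\lambda(v))^{-l}$ up to the firing cell is at least $e^{-\lambda_{\max}T_{\max}}$ (from $l\le T_{\max}/\rho$ and $(1+x)^{1/x}\le e$); summing the jump weights over the admissible block of cells gives, by Bernoulli's inequality, $1-(1+\rho\lambda(v))^{-k}\ge\frac{\lambda_{\min}}{2|V|^2+\lambda_{\min}}$ as soon as the block has $k\ge m/(2|V|^2)$ cells; and the CTMC branch contributes $\mathbf{P}(v,u)\ge p_{\min}$. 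Multiplying over the at most $|V|$ transitions produces $\mathfrak{c}^{|V|}$; the boundary case $\mathsf{h}[v]\in\mathsf{B}_m^{\max}$ (where $v\oplus\rho=v$) folds into the same bookkeeping.

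The main obstacle is precisely this faithful grid-realisation: proving that at each macro-step the grid firing times whose successor still admits a wide path onward form a contiguous block spanning a time window of length at least $1/(2|V|^2)$, hence containing at least $m/(2|V|^2)\ge 1$ cells (this is where $m>2|V|^2$ is consumed). The difficulty is that, as the Remark shows, firing at the single grid point nearest the continuous time may drop the chain onto a marginal or dead valuation and destroy region-level reachability; the $(\rho/|V|)$-wideness from Proposition~\ref{lem:deltawide}, propagated along the path, is what must be shown to guarantee a whole window of safe firing cells and thereby rule out this thin-region pathology.
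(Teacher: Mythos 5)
Your outer argument---reducing everything to the single estimate $\parallel\mathbf{A}^{|V|}\parallel_\infty\le 1-\mathfrak{c}^{|V|}$ and then summing the series in blocks of length $|V|$---is sound and coincides with how the paper finishes (the paper likewise passes to general $\zeta$ with $|\zeta|\le\zeta_{\max}$ via monotonicity of $\mathbf{A}$ and the Cauchy criterion). The gap is in how you propose to establish that estimate. You apply Proposition~\ref{lem:deltawide} at an \emph{arbitrary} grid valuation $\eta$, observing that $\eta$ is $\rho$-separated, and so obtain only a $(\rho/|V|)$-wide transition path. A perturbation window of width $2\rho/|V|$ contains at most one grid cell of size $\rho$ (and possibly none), so the jump weight you can collect at each macro-step is at most of order $\rho\lambda(v)$, which vanishes as $m\to\infty$; it cannot be bounded below by the $m$-independent factor $\frac{\lambda_{\min}}{2|V|^2+\lambda_{\min}}$ appearing in $\mathfrak{c}$. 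Indeed your own Bernoulli estimate needs the admissible block to contain $k\ge m/(2|V|^2)$ cells, i.e.\ a window of length $1/(2|V|^2)$, which is what a $|V|^{-1}$-wide path would give---but $\rho$-separation only buys $\rho/|V|$. You flag the grid-realisation as ``the main obstacle,'' but with your separation parameter the obstacle is not a technicality: the quantities do not match up, and the proposed route does not yield a $\rho$-independent absorption constant.

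The paper circumvents this by never invoking Proposition~\ref{lem:deltawide} at a general grid point. From an arbitrary $v\in\mathsf{B}_m$ it uses the tail term $\left(\frac{1}{1+\rho\lambda(v)}\right)^{N_v}\ge e^{-\lambda_{\max}T_{\max}}$ of Equation~(\ref{eq:4}) to funnel a constant fraction of the mass, in a single macro-step, either out of $\mathsf{B}_m$ (immediate absorption) or into $\mathsf{B}_m^{\max}$. At a point of $\mathsf{B}_m^{\max}$ every clock equals its integer bound $T_x$, so $\mathcal{R}_{\eta^*}=\{0,1\}$ and $\eta^*$ is $1$-separated; Proposition~\ref{lem:deltawide} then yields a $|V|^{-1}$-wide path, and perturbing each of its at most $|V|-1$ transitions within a window of length $\delta=|V|^{-2}$ produces $\lfloor m/|V|^2\rfloor\ge 2$ admissible grid cells per step (this is where $m>2|V|^2$ is consumed), whence the $m$-independent per-step weight and the bound $(1-\mathfrak{c}^{|V|})$ after composing with the initial funnelling step. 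To repair your proof you would need to insert this detour through $\mathsf{B}_m^{\max}$; as written, the proposal does not establish the proposition.
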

\begin{proof}
Let $\delta:=|V|^{-2}$ and $k:={\lfloor}{m\slash|V|^2}{\rfloor}$. We analyse $(\sum_{n=0}^\infty\mathbf{A}^n\zeta)(\mathsf{h}[v^*])$ for each $\mathsf{h}[v^*]\in\mathsf{B}_m$. Firstly, we consider the case when $\zeta=\zeta_{\max}$ and $\mathsf{h}[v^*]\in\mathsf{B}^{\max}_m$. Denote $v^*=(s^*,q^*,\eta^*)$. By definition, $\eta^*$ is $1$-separated. Then by Proposition~\ref{lem:deltawide}, there exists a shortest $|V|^{-1}$-wide path
\[
(s^*,q^*,\eta^*)=(s_1,q_1,\eta_1)\xrightarrow{t_1}\dots\xrightarrow{t_{n-1}}(s_n,q_n,\eta_n)
\]
with $1<n\le |V|$, $q_i\not\in F$ for $1\le i\le n-1$ and $q_n\in F$. Note that $[\eta_i+t_i]_\sim$ is not marginal for $1\le i\le n-1$. We adjust the delay-times in the transition path up to $\delta$ by:
\[
(s^*,q^*,\eta^*)=(s_1,q'_1,\eta'_1)\xrightarrow{t_1+\delta_1}\dots\xrightarrow{t_{n-1}+\delta_{n-1}}(s_n,q'_n,\eta'_n)
\]
where $\delta_i\in [0,\delta)$ for all $1\le i\le n-1$. Given arbitrary $\{\delta_i\}_i$, one can prove inductively that $(\dag)$:
\[
q'_i=q_i, \eta'_i\equiv_\mathrm{g}\eta_i\mbox{ and }\eta'_i+(t_i+\delta_i)\equiv_\mathrm{g}\eta_i+t_i\mbox{ for all }{1\le i\le n-1}\enskip,
\]
\noindent by checking that: $\eta'_i(x)+(t_i+\delta_i)\le\eta_i(x)+t_i+\sum_{j=1}^{i}\delta_j<\eta_i(x)+t_i+|V|^{-1}$ for all clocks $x$, at each stage of induction.

Define $\{\mathsf{V}_i\}_{1\le i\le n}$ with each $\mathsf{V}_i\subseteq\mathcal{V}$ as follows:
\[
\mathsf{V}_1=\{v^*\}\mbox{ and }\mathsf{V}_{i+1}=\{(v+\tau)^+_{s_{i+1}}\mid v\in\mathsf{V}_i, \tau\in [t_i,t_i+\delta)\}\enskip.
\]
By $(\dag)$, we have $[v]_\sim$ is not final and can reach some final vertex in $G$ for all $v\in\bigcup_{i=1}^{n-1}\mathsf{V}_i$. Let $\mathsf{V}'_i:=\{v\oplus 0\mid v\in\mathsf{V}_i\}$ for each $1\le i\le n$. Note that $v\oplus 0$ and $v$ only differ at clocks $x$ whose value is greater than $T_x$. Thus we have $[v]_\sim$ can reach some final vertex in $G$ iff $[v\oplus 0]_\sim$ can reach some final vertex in $G$, for any $v\in\mathcal{V}$. It follows that  $[v]_\sim$ is not final vertex and can reach some final vertex in $G$ for all $v\in\bigcup_{i=1}^{n-1}\mathsf{V}'_i$.

Then we define $\{\mathsf{B}'_i\}_{1\le i\le n}$ with each $\mathsf{B}'_i\subseteq\mathsf{D}_m$ by:
\[
\mathsf{B}'_1=\{\mathsf{h}[v^*]\}\mbox{ and }\mathsf{B}'_{i+1}=\bigcup\{\mathrm{Post}_i(v)\mid \mathsf{h}[v]\in\mathsf{B}'_i\},
\]
where for each $v\in\mathcal{V}$:
\begin{itemize} \itemsep0pt \parskip0pt \parsep0pt \topsep0pt
\item $\mathrm{Post}_i(v):=\{\mathsf{h}[(v')^+_{s_{i+1}}]\mid v'\in\mathrm{Delay}_i(v)\}$ ;
\item $\mathrm{Delay}_i(v):=\{v\oplus\tau\mid\tau\in [t_i,t_i+\delta), \mathsf{h}[v\oplus\tau]\in\mathsf{D}_m\}$ .
\end{itemize}
Consider any $v\in\mathsf{V}_i$ and $\tau\in[t_i,t_i+\delta)$ with $1\le i\le n-1$.
By $(\dag)$, we have $[v+\tau]_\sim$ is not marginal. Then $[(v+\tau)^+]_\sim=[(v\oplus\tau)^+]_\sim$. It follows that $(v+\tau)^+_{s_{i+1}}\oplus 0=(v\oplus\tau)^+_{s_{i+1}}$. Then one can prove inductively that $\mathsf{B}'_i\subseteq\mathsf{V}'_i$ for all $1\le i\le n$. It follows that  $\bigcup_{i=1}^{n-1}\mathsf{B}'_i\subseteq\mathsf{B}_m$.

We prove by induction on $i\ge 1$ that for all $v\in\mathsf{B}'_{n-i}$, $|(\mathbf{A}^i\zeta_{\max})(v)|\le (1-\mathfrak{c}^i)\cdot M_3\rho$. Note that $\mathbf{A}\zeta_{\max}\le\zeta_{\max}$ and $\mathbf{A}\zeta_1\le\mathbf{A}\zeta_2$ for all $\vec{0}\le\zeta_1\le\zeta_2$. It follows that $\mathbf{A}^j\zeta_{\max}\le\mathbf{A}^i\zeta_{\max}$ for all $0\le i\le j$.

\textbf{Base Step}: $i=1$. Consider an arbitrary $v\in\mathsf{B}'_{n-1}$. By $\mathsf{B}'_n\subseteq\mathsf{V}'_n$, we know that $[v']_\sim$ is final in $G$ for all $v'\in\mathrm{Post}_{n-1}(v)$. If $v\oplus(N_v\cdot\rho)\in\mathrm{Delay}_{n-1}(v)$, then $\mathsf{h}[v\oplus(N_v\cdot\rho)]\in\mathsf{B}_m^{\max}$ by Lemma~\ref{lem:region:auxi}. Therefore from $\Gamma''_m$ we have
\begin{eqnarray*}
1-\sum_{\mathsf{h}[v']\in\mathsf{B}_m}\mathbf{A}(\mathsf{h}[v],\mathsf{h}[v'])&\ge & p_{\min}\cdot{\left(\frac{1}{1+\rho\cdot\lambda(v)}\right)}^{N_v}\\
&\ge & p_{\min}\cdot{\left(\frac{1}{1+\rho\cdot\lambda(v)}\right)}^{T_{\max}\slash\rho}\\
&\ge & p_{\min}\cdot e^{-\lambda_{\max}T_{\max}}\\
&\ge & \mathfrak{c}
\end{eqnarray*}
Otherwise, from Lemma~\ref{lem:region:auxi} there are at least $\lfloor\delta\slash\rho\rfloor=k$ distinct elements in $\mathrm{Delay}_{n-1}(v)$. Note that $k\rho\ge\frac{1}{2}|V|^{-2}$. We have
\begin{eqnarray*}
& &1-\sum_{\mathsf{h}[v']\in\mathsf{B}_m}\mathbf{A}(\mathsf{h}[v],\mathsf{h}[v'])\\
&\ge&{\left(\frac{1}{1+\rho\cdot\lambda(v)}\right)}^{T_{\max}\slash\rho}\cdot
\frac{k\rho\cdot\lambda(v)}{1+\rho\cdot\lambda(v)}\cdot p_{\min}\\
&\ge& e^{-\lambda_{\max}T_{\max}}\cdot\frac{\frac{1}{2}|V|^{-2}\cdot\lambda(v)}{1+\frac{1}{2}|V|^{-2}\cdot\lambda(v)}\cdot p_{\min}\\
&\ge& e^{-\lambda_{\max}T_{\max}}\cdot p_{\min}\cdot \frac{\lambda_{\min}}{2|V|^2+\lambda_{\min}}
\end{eqnarray*}
Thus we have $|(\mathbf{A}\zeta_{\max})(v)|\le (1-\mathfrak{c})\cdot M_3\rho$.

\textbf{Inductive Step}: Suppose $(\mathbf{A}^i\zeta_{\max})(v)\le (1-\mathfrak{c}^i)\cdot M_3\rho$ for all $v\in \mathsf{B}'_{n-i}$. We prove the case for $i+1$. Fix some $v\in\mathsf{B}'_{n-(i+1)}$. If $v\oplus(N_v\cdot\rho)\in\mathrm{Delay}_{n-(i+1)}(v)$, then similar to the base step we have
\[
(\mathbf{A}^{i+1}\zeta_{\max})(v)\le\left(1-p_{\min}\cdot\left(\frac{1}{1+\rho\cdot\lambda(v)}\right)^{N_v}\cdot\mathfrak{c}^i\right)\cdot M_3\rho\le (1-\mathfrak{c}^{i+1})\cdot M_3\rho
\]
Otherwise, there are at least $\lfloor\delta\slash\rho\rfloor=k$ distinct elements in $\mathrm{Delay}_{n-(i+1)}(v)$.
By $\mathrm{Post}_{n-(i+1)}(v)\subseteq\mathsf{B}'_{n-i}$ and $\mathbf{A}^i\zeta_{\max}\le\zeta_{\max}$, we obtain that
\begin{eqnarray*}
&   & (\mathbf{A}^{i+1}\zeta_{\max})(v)\\
&\le& M_3\rho\cdot \left\{1-{\left(\frac{1}{1+\rho\cdot\lambda(v)}\right)}^{N_v}\cdot\frac{k\rho\cdot\lambda(v)}{1+\rho\cdot\lambda(v)}\cdot p_{\min}\cdot\mathfrak{c}^i\right\}\\
&\le& M_3\rho\cdot \left\{1-e^{-\lambda_{\max}T_{\max}}\cdot p_{\min}\cdot \frac{\lambda_{\min}}{2|V|^2+\lambda_{\min}}\cdot \mathfrak{c}^i\right\}\\
&=& M_3\rho\cdot (1-\mathfrak{c}^{i+1})
\end{eqnarray*}
Then the result follows. Then we obtain that
\[
(\mathbf{A}^{|V|-1}\zeta_{\max})(v)\le (\mathbf{A}^i\zeta_{\max})(v)\le (1-\mathfrak{c}^i)\cdot M_3\rho\le (1-\mathfrak{c}^{|V|-1})\cdot M_3\rho
\]
for all $1\le i\le n-1$ and $v\in\mathsf{B}'_{n-i}$. Thus $\mathbf{A}^{|V|-1}\zeta_{\max}(v^*)\le (1-\mathfrak{c}^{|V|-1})\cdot M_3\rho$ for all $v^*\in\mathsf{B}_m^{\max}$.

Now consider an arbitrary $v\in\mathsf{B}_m$ while $\zeta=\zeta_{\max}$. If either $v\oplus (N_v\rho)\not\in\mathsf{B}_m^{\max}$ or $\mathsf{h}[(v\oplus (N_v\rho))^+_u]\not\in\mathsf{B}_m$ for some $u\in S$ such that $\mathbf{P}(v,u)>0$, then we have
\begin{eqnarray*}
(\mathbf{A}^{|V|}\zeta_{\max})(v)&\le &(\mathbf{A}\zeta_{\max})(v)\\
&\le &\left(1-\left(\frac{1}{1+\rho\lambda(v)}\right)^{T_{\max}\slash\rho}\cdot p_{\min}\right)\cdot M_3\rho\\
&\le &(1-e^{-\lambda_{\max}T_{\max}}\cdot p_{\min})\cdot M_3\rho\\
&\le &(1-\mathfrak{c}^{|V|})\cdot M_3\rho
\end{eqnarray*}
Otherwise, by
\[
\sum_{u\in S}\mathbf{P}(v^*,u)\cdot(\mathbf{A}^{|V|-1}\zeta_{\max})((v^*)^+_u)=(\mathbf{A}^{|V|}\zeta_{\max})(v^*)\le (\mathbf{A}^{|V|-1}\zeta_{\max})(v^*)
\]
where $v^*=v\oplus (N_v\rho)$, we have
\begin{eqnarray*}
(\mathbf{A}\cdot\mathbf{A}^{|V|-1}\zeta_{\max})(v)&\le &\left(1-\left(\frac{1}{1+\rho\lambda(v)}\right)^{T_{\max}\slash\rho}\cdot\mathfrak{c}^{|V|-1}\right)\cdot M_3\rho\\
&\le & (1-\mathfrak{c}^{|V|})\cdot M_3\rho
\end{eqnarray*}
Then we have $\mathbf{A}^{|V|}\zeta_{\max}\le (1-\mathfrak{c}^{|V|})\cdot\zeta_{\max}$. It follows that
\[
\mathbf{A}^{i|V|}\zeta_{\max}\le (1-\mathfrak{c}^{|V|})^i\zeta_{\max}
\]
for all $i\in\mathbb{N}$. Thus by the monotonicity of $\mathbf{A}$, we have $\sum_{i=0}^\infty\mathbf{A}^i\zeta_{\max}$ converges since $\sum_{i=0}^\infty\mathbf{A}^i\zeta_{\max}$ is bounded by $|V|\cdot\mathfrak{c}^{-|V|}\cdot\zeta_{\max}$.

Finally we consider any $\zeta$ such that $|\zeta|\le \zeta_{\max}$. Note that $\left|\mathbf{A}^i\zeta\right|\le \mathbf{A}^i\zeta_{\max}$ since all entries of $\mathbf{A}^i$ are non-negative. Thus by Cauchy Criterion, it follows that $\sum_{i=0}^{\infty}\mathbf{A}^i\zeta$ converges and $\parallel\sum_{i=0}^{\infty}\mathbf{A}^i\zeta\parallel_\infty\le |V|\cdot\mathfrak{c}^{-|V|}\cdot M_3\rho$.
\end{proof}
By Proposition~\ref{lem:leq:existence}, the system of linear equation $\mu=\mathbf{A}\mu+\zeta$ has a solution for all $|\zeta|\le\zeta_{\max}$ when $m>2|V|^2$. Below we assume that $m>2|V|^2$. The following lemmas show that the linear equation has a unique solution.

\begin{proposition}\label{lem:leq:bound}
For all solutions $\mu$ of $\mu=\mathbf{A}\mu+\zeta$ with ${\parallel\zeta\parallel}_\infty<M_3\rho$, we have $|\mu|\le\mu^*$, where $\mu^*:=\sum_{i=0}^\infty\mathbf{A}^i\zeta_{\max}$.
\end{proposition}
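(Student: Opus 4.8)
The plan is to iterate the fixed-point equation and then pass to the limit, exploiting that every entry of $\mathbf{A}$ is non-negative together with the contraction estimate already established in Proposition~\ref{lem:leq:existence}. The whole argument is elementary once that estimate is reused; no further spectral analysis of $\mathbf{A}$ is needed.

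First I would unfold the equation $\mu=\mathbf{A}\mu+\zeta$ by repeated substitution, obtaining for every $n\ge 1$ the identity $\mu=\mathbf{A}^n\mu+\sum_{i=0}^{n-1}\mathbf{A}^i\zeta$. Taking absolute values componentwise and using that all entries of each power $\mathbf{A}^i$ are non-negative (so that $|\mathbf{A}^i w|\le\mathbf{A}^i|w|$ and the triangle inequality applies termwise), this yields $|\mu|\le\mathbf{A}^n|\mu|+\sum_{i=0}^{n-1}\mathbf{A}^i|\zeta|$. Since $\parallel\zeta\parallel_\infty<M_3\rho$ while $\zeta_{\max}$ is the constant vector with value $M_3\rho$, we have $|\zeta|\le\zeta_{\max}$, hence by monotonicity $\sum_{i=0}^{n-1}\mathbf{A}^i|\zeta|\le\sum_{i=0}^{n-1}\mathbf{A}^i\zeta_{\max}$, whose limit is exactly $\mu^*$ by the convergence established in Proposition~\ref{lem:leq:existence}.

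The key remaining step, which I expect to be the main obstacle, is to show $\mathbf{A}^n|\mu|\to\vec{0}$ as $n\to\infty$. Because $\mathsf{B}_m$ is finite and $\zeta_{\max}$ has strictly positive constant entries $M_3\rho$, I can choose $C:=\parallel\mu\parallel_\infty\slash(M_3\rho)$ so that $|\mu|\le C\cdot\zeta_{\max}$; applying the monotonicity of $\mathbf{A}$ then gives $\mathbf{A}^n|\mu|\le C\cdot\mathbf{A}^n\zeta_{\max}$. It therefore suffices to argue $\mathbf{A}^n\zeta_{\max}\to\vec{0}$, which follows from two facts recorded inside the proof of Proposition~\ref{lem:leq:existence}: the sequence $\{\mathbf{A}^n\zeta_{\max}\}_n$ is monotonically decreasing since $\mathbf{A}\zeta_{\max}\le\zeta_{\max}$, and the subsequence obeys $\mathbf{A}^{i|V|}\zeta_{\max}\le(1-\mathfrak{c}^{|V|})^i\zeta_{\max}$, which tends to $\vec{0}$ because $0<\mathfrak{c}^{|V|}<1$. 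A monotone sequence admitting a subsequence that converges to $\vec{0}$ itself converges to $\vec{0}$.

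Finally, letting $n\to\infty$ in the inequality $|\mu|\le\mathbf{A}^n|\mu|+\sum_{i=0}^{n-1}\mathbf{A}^i\zeta_{\max}$ and combining the two limits just computed gives $|\mu|\le\mu^*$, which is the desired bound. The only delicate point is the vanishing of $\mathbf{A}^n|\mu|$, and that is handled entirely by dominating $|\mu|$ by a multiple of $\zeta_{\max}$ and reusing the contraction estimate of the preceding proposition.
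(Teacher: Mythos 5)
Your proof is correct, but it takes a genuinely different route from the paper's. The paper argues by a discrete minimum principle: it sets $\mu':=\mu^*-\mu$, observes that $\mu'>\mathbf{A}\mu'$ componentwise because $\zeta_{\max}-\zeta$ is strictly positive and $\mu^*=\mathbf{A}\mu^*+\zeta_{\max}$, and then derives a contradiction from a hypothetical negative minimal component of $\mu'$, using only that $\mathbf{A}$ is non-negative with row sums in $[0,1]$; the symmetric argument with $\mu^*+\mu$ gives the other half of $|\mu|\le\mu^*$. You instead unfold the fixed-point equation $n$ times, dominate $|\zeta|$ by $\zeta_{\max}$ and $|\mu|$ by the multiple $\bigl(\parallel\mu\parallel_\infty\slash(M_3\rho)\bigr)\cdot\zeta_{\max}$, and pass to the limit using $\mathbf{A}^n\zeta_{\max}\rightarrow\vec{0}$. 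All your steps check out (non-negativity of the powers $\mathbf{A}^i$ justifies $|\mathbf{A}^i w|\le\mathbf{A}^i|w|$ and the monotonicity you use, and finiteness of $\mathsf{B}_m$ gives the domination of $|\mu|$). One small simplification: the decay $\mathbf{A}^n\zeta_{\max}\rightarrow\vec{0}$ already follows from the bare statement of Proposition~\ref{lem:leq:existence} — a convergent series has terms tending to zero — so you need not reach inside that proof for the monotonicity $\mathbf{A}\zeta_{\max}\le\zeta_{\max}$ and the estimate $\mathbf{A}^{i|V|}\zeta_{\max}\le(1-\mathfrak{c}^{|V|})^i\zeta_{\max}$. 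As for what each approach buys: yours additionally yields uniqueness essentially for free (taking $\zeta=\vec{0}$ shows the homogeneous system has only the trivial solution), which the paper instead derives in the subsequent proposition from the boundedness established here; the paper's argument is lighter in that, beyond the existence of $\mu^*$, it requires no limit passage at all and no control on the powers of $\mathbf{A}$.
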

\begin{proof}
Let $\mu$ be an arbitrary solution of $\mu=\mathbf{A}\mu+\zeta$. Define $\mu'=\mu^*-\mu$. By the fact that $\mu^*=\mathbf{A}\mu^*+\zeta_{\max}$, we have $\mu'(\mathsf{h}[v])>(\mathbf{A}\mu')(\mathsf{h}[v])$ for all $\mathsf{h}[v]\in\mathsf{B}_m$. Suppose there is some $\mathsf{h}[v]\in\mathsf{B}_m$ such that $\mu'(\mathsf{h}[v])<0$. W.l.o.g we assume that $\mu'(\mathsf{h}[v])$ is the least element of $\{\mu'(\mathsf{h}[v'])\mid \mathsf{h}[v']\in\mathsf{B}_m\}$. Denote $c:=\sum_{\mathsf{h}[v']\in\mathsf{B}_m}\mathbf{A}(\mathsf{h}[v],\mathsf{h}[v'])\in [0,1]$. We have $\mu'(\mathsf{h}[v])>c\cdot\mu'(\mathsf{h}[v])$, which implies $c>1$. Contradiction. Thus $\mu'\ge\vec{0}$. Similar arguments holds if we define $\mu'=\mu^*+\mu$. Thus we have $|\mu|\le\mu^*$.
\end{proof}
\begin{proposition}
The system of linear equations $\mu=\mathbf{A}\mu+\zeta$ has a unique solution for all $\zeta$ such that ${\parallel}{\zeta}{\parallel}_\infty<M_3\rho$. It follows that $\mathbf{I}-\mathbf{A}$ is invertible where $\mathbf{I}$ is the identity matrix.
\end{proposition}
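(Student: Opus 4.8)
The plan is to split the statement into three parts --- existence of a solution, uniqueness, and the resulting invertibility of $\mathbf{I}-\mathbf{A}$ --- and to run them entirely off the convergence and decay estimates already produced in Proposition~\ref{lem:leq:existence} together with the comparison bound of Proposition~\ref{lem:leq:bound}.

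First I would establish \emph{existence}. Given $\zeta$ with $\|\zeta\|_\infty<M_3\rho$, every coordinate obeys $|\zeta(\mathsf{h}[v])|<M_3\rho=\zeta_{\max}(\mathsf{h}[v])$, so $|\zeta|\le\zeta_{\max}$ pointwise. Proposition~\ref{lem:leq:existence} then guarantees that $\mu:=\sum_{n=0}^\infty\mathbf{A}^n\zeta$ converges. To see that this $\mu$ solves the system, note that since $\mathbf{A}$ is a fixed finite matrix it commutes with the (convergent) summation, giving $\mathbf{A}\mu=\sum_{n=0}^\infty\mathbf{A}^{n+1}\zeta=\sum_{n=1}^\infty\mathbf{A}^n\zeta=\mu-\zeta$, i.e. $\mu=\mathbf{A}\mu+\zeta$.

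Next comes \emph{uniqueness}. If $\mu_1,\mu_2$ both solve $\mu=\mathbf{A}\mu+\zeta$, their difference $\nu:=\mu_1-\mu_2$ satisfies $\nu=\mathbf{A}\nu$, hence $\nu=\mathbf{A}^n\nu$ for every $n\ge 0$. Because all entries of $\mathbf{A}$, and therefore of each $\mathbf{A}^n$, are non-negative, I obtain the entrywise bound $|\nu|=|\mathbf{A}^n\nu|\le\mathbf{A}^n|\nu|$. Writing $K:=\|\nu\|_\infty/(M_3\rho)$ we have $|\nu|\le K\,\zeta_{\max}$, so monotonicity of $\mathbf{A}$ on non-negative vectors yields $|\nu|\le K\,\mathbf{A}^n\zeta_{\max}$ for all $n$. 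Now I invoke the block-decay estimate from the proof of Proposition~\ref{lem:leq:existence}, namely $\mathbf{A}^{i|V|}\zeta_{\max}\le(1-\mathfrak{c}^{|V|})^i\zeta_{\max}$, together with $\mathbf{A}^r\zeta_{\max}\le\zeta_{\max}$ for $0\le r<|V|$: decomposing $n=i|V|+r$ gives $\mathbf{A}^n\zeta_{\max}\le(1-\mathfrak{c}^{|V|})^i\zeta_{\max}$. Since $\mathfrak{c}\in(0,1)$ we have $1-\mathfrak{c}^{|V|}\in(0,1)$, so $\mathbf{A}^n\zeta_{\max}\to\vec{0}$ as $n\to\infty$; hence $|\nu|\le K\,\mathbf{A}^n\zeta_{\max}\to\vec{0}$, forcing $\nu=\vec{0}$ and $\mu_1=\mu_2$. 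Finally, invertibility is immediate: specializing to $\zeta=\vec{0}$ (admissible, as $\|\vec{0}\|_\infty=0<M_3\rho$), uniqueness says the only solution of $(\mathbf{I}-\mathbf{A})\mu=\vec{0}$ is $\mu=\vec{0}$, so $\mathbf{I}-\mathbf{A}$ has trivial kernel and, being a square real matrix, is invertible.

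The genuinely new work beyond Proposition~\ref{lem:leq:existence} is confined to the uniqueness step, and the main obstacle I anticipate there is twofold: upgrading the block decay $\mathbf{A}^{i|V|}\zeta_{\max}\le(1-\mathfrak{c}^{|V|})^i\zeta_{\max}$ (valid only for multiples of $|V|$) to $\mathbf{A}^n\zeta_{\max}\to\vec{0}$ for \emph{all} $n$ via the interpolation $n=i|V|+r$, and justifying the entrywise chain $|\mathbf{A}^n\nu|\le\mathbf{A}^n|\nu|\le K\,\mathbf{A}^n\zeta_{\max}$. Both reduce to non-negativity and monotonicity of $\mathbf{A}$, so once the convergence machinery is in place the argument is routine. (Equivalently, the bound $\mathbf{A}^{|V|}\zeta_{\max}\le(1-\mathfrak{c}^{|V|})\zeta_{\max}$ with $\zeta_{\max}$ strictly positive shows by a Perron--Frobenius argument that the spectral radius of $\mathbf{A}$ is strictly below $1$, which gives invertibility of $\mathbf{I}-\mathbf{A}$ at once; I prefer the direct estimate above for self-containedness.)
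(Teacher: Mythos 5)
Your proof is correct, and while your existence and invertibility steps coincide with the paper's (the convergent series $\sum_{n\ge 0}\mathbf{A}^n\zeta$ from Proposition~\ref{lem:leq:existence} is the solution, and uniqueness applied to $\zeta=\vec{0}$ gives invertibility), your uniqueness argument takes a genuinely different route. The paper argues by contradiction through Proposition~\ref{lem:leq:bound}: if two distinct solutions existed, the homogeneous system $\mu=\mathbf{A}\mu$ would have a non-trivial solution $\nu$, so $\mu+c\nu$ would be a solution for every scalar $c$ and the solution set could not be bounded by $\mu^*$. You instead kill the homogeneous solution directly: from $\nu=\mathbf{A}^n\nu$, entrywise non-negativity of $\mathbf{A}^n$, and the block-decay $\mathbf{A}^{i|V|}\zeta_{\max}\le(1-\mathfrak{c}^{|V|})^i\zeta_{\max}$ extracted from the proof of Proposition~\ref{lem:leq:existence} (interpolated to all $n$ via $\mathbf{A}\zeta_{\max}\le\zeta_{\max}$), you obtain $|\nu|\le K\,\mathbf{A}^n\zeta_{\max}\rightarrow\vec{0}$, hence $\nu=\vec{0}$. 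Your route makes Proposition~\ref{lem:leq:bound} unnecessary for this statement and in effect shows that the spectral radius of $\mathbf{A}$ is strictly below $1$, which is slightly stronger (it identifies $(\mathbf{I}-\mathbf{A})^{-1}$ with the convergent Neumann series); the cost is that you must reach into the internals of the proof of Proposition~\ref{lem:leq:existence} for the decay estimate rather than quoting only the statements of the two preceding propositions. Both arguments are sound.
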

\begin{proof}
By Lemma~\ref{lem:leq:existence}, the system $\mu=\mathbf{A}\mu+\zeta$ has a solution. And by Lemma~\ref{lem:leq:bound}, all solutions of $\mu=\mathbf{A}\mu+\zeta$ are bounded by $\mu^*$. Suppose it has two distinct solutions. Then the homogeneous system of linear equations $\mu=\mathbf{A}\mu$ has a non-trivial solution, which implies that the solutions of $\mu=\mathbf{A}\mu+\zeta$ cannot be bounded. Contradiction. Thus $\mu=\mathbf{A}\mu+\zeta$ has a unique solution and $\mathbf{I}-\mathbf{A}$ is invertible.
\end{proof}
Now we analyse $\Gamma'_m$ ($\Gamma_m$). In the following theorem which is the main result of the paper, we prove that the equation $\mu=\mathbf{C}\mu+\mathbf{d}$ has a unique solution (i.e. $\mathbf{I}-\mathbf{C}$ is invertible), and give the error bound between the unique solution and the function $\mathsf{prob}$.

\begin{theorem}\label{thm:main}
The matrix equation $\mu=\mathbf{C}\mu+\mathbf{d}$ (for $\Gamma'_m$) has a unique solution $\overline{\mu}$. Moreover, $\max_{\mathsf{h}[v]\in\mathsf{B}_m}\left|\overline{\mu}(\mathsf{h}[v])-\mathsf{prob}(v)\right|\le |V|\cdot{\mathfrak{c}}^{-|V|}\cdot M_3\rho$.
\end{theorem}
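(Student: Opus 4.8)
The plan is to transfer everything to the unfolded scheme $\Gamma''_m$ in its matrix form $\mu = \mathbf{A}\mu + \mathbf{b}$, since that is the form for which the iteration matrix is already controlled in Proposition~\ref{lem:leq:existence}. First I would settle uniqueness: by Proposition~\ref{lem:approx:equiv} the schemes $\Gamma'_m$ and $\Gamma''_m$ have exactly the same solution set, so it suffices to treat $\mu = \mathbf{A}\mu + \mathbf{b}$. This equation has the unique solution $\overline{\mu} = (\mathbf{I}-\mathbf{A})^{-1}\mathbf{b}$ because $\mathbf{I}-\mathbf{A}$ was just shown to be invertible; the very same vector is then the unique solution of $\mu = \mathbf{C}\mu + \mathbf{d}$, giving the first assertion.

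For the error bound, I would let $\mathbf{p}$ be the restriction of $\mathsf{prob}$ to $\mathsf{B}_m$, i.e. $\mathbf{p}(\mathsf{h}[v]) = \mathsf{prob}(v)$, and note that on $\mathsf{D}_m\setminus\mathsf{B}_m$ the values of $\mathsf{prob}$ coincide with the boundary constants folded into $\mathbf{b}$ (namely $0$ when $[v]_\sim$ cannot reach a final vertex and $1$ when $[v]_\sim$ is final). Substituting $\mathsf{prob}$ into the interior equations~(\ref{eq:4}) therefore yields a residual vector $\mathbf{e}$ with $\mathbf{p} = \mathbf{A}\mathbf{p} + \mathbf{b} + \mathbf{e}$, and Proposition~\ref{lem:err:gamma2} bounds this residual componentwise, $\|\mathbf{e}\|_\infty < M_3\rho$, so that $|\mathbf{e}|\le\zeta_{\max}$.

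Subtracting $\overline{\mu} = \mathbf{A}\overline{\mu} + \mathbf{b}$ from the identity for $\mathbf{p}$ gives $\mathbf{p}-\overline{\mu} = \mathbf{A}(\mathbf{p}-\overline{\mu}) + \mathbf{e}$. Since $\mathbf{I}-\mathbf{A}$ is invertible and the geometric estimate inside the proof of Proposition~\ref{lem:leq:existence} forces $\mathbf{A}^n$ to drive every $\zeta_{\max}$-bounded vector to zero, the unique solution of this error equation is the convergent series $\mathbf{p}-\overline{\mu} = \sum_{i=0}^\infty \mathbf{A}^i\mathbf{e}$. Applying the norm bound of Proposition~\ref{lem:leq:existence} with $\zeta=\mathbf{e}$ (legitimate since $|\mathbf{e}|\le\zeta_{\max}$) gives $\|\mathbf{p}-\overline{\mu}\|_\infty \le |V|\cdot\mathfrak{c}^{-|V|}\cdot M_3\rho$, which is precisely $\max_{\mathsf{h}[v]\in\mathsf{B}_m}|\overline{\mu}(\mathsf{h}[v]) - \mathsf{prob}(v)| \le |V|\cdot\mathfrak{c}^{-|V|}\cdot M_3\rho$.

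The delicate point is the middle step: confirming that the residual produced by substituting $\mathsf{prob}$ into the \emph{matrix} form of $\Gamma''_m$ is exactly the per-equation error bounded in Proposition~\ref{lem:err:gamma2}. This demands checking that $\mathsf{prob}$ honours the boundary constants, so that no extra discrepancy leaks in through $\mathbf{b}$, and that the split of coefficients between $\mathbf{A}$ (the $\mathsf{B}_m$-terms) and $\mathbf{b}$ (the $\mathsf{D}_m\setminus\mathsf{B}_m$-terms) is aligned with that estimate. Once this bookkeeping is in place, the remainder is a direct invocation of the convergence and norm bounds already established.
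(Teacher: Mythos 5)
Your proposal is correct and follows essentially the same route as the paper: reduce to the unfolded scheme $\mu=\mathbf{A}\mu+\mathbf{b}$ via the equivalence of $\Gamma'_m$ and $\Gamma''_m$, use the invertibility of $\mathbf{I}-\mathbf{A}$ for uniqueness, and bound the residual of $\mathsf{prob}$ (via the error bound $M_3\rho$ of $\Gamma''_m$) through the series estimate of Proposition~\ref{lem:leq:existence}. The only cosmetic difference is that you invoke Proposition~\ref{lem:err:gamma2} directly, whereas the paper starts from the $\Gamma'_m$-level bound of Proposition~\ref{lem:err:gamma1} and re-performs the unfolding inside the proof; the substance is identical.
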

\begin{proof}
We first prove that $\mu=\mathbf{C}\mu+\mathbf{d}$ has a unique solution. Let $\mu=\mathbf{C}\mu+\zeta$ be a matrix equation such that ${\parallel\zeta\parallel}_\infty< M_2\rho^2$. From the proof of Proposition~\ref{lem:approx:equiv}, we can equivalently expand this equation into some equation $\mu=\mathbf{A}\mu+\zeta'$ with ${\parallel\zeta'\parallel}_\infty<T_{\max}\slash\rho\cdot M_2\rho^2=M_3\cdot\rho$. Since $\mu=\mathbf{A}\mu+\zeta'$ has a unique solution, we have $\mu=\mathbf{C}\mu+\zeta$ also has a unique solution. Thus $\mathbf{I}-\mathbf{C}$ is invertible and $\mu=\mathbf{C}\mu+\mathbf{d}$ has a unique solution.

Now we prove the error bound between $\overline{\mu}$ and $\mathsf{prob}$. Define the vector $\mu'$ such that $\mu'(\mathsf{h}[v])=\overline{\mu}(\mathsf{h}[v])-\mathsf{prob}(v)$ for all $\mathsf{h}[v]\in\mathsf{B}_m$. By Proposition~\ref{lem:err:gamma1}, $\mu'$ is the unique solution of $\mu=\mathbf{C}\mu+\zeta$ for some $\parallel\zeta\parallel_\infty<M_2\rho^2$. Then $\mu'$ is also the unique solution of the equation $\mu=\mathbf{A}\mu+\zeta'$ for some $\parallel\zeta'\parallel_\infty<M_3\rho$. By Proposition~\ref{lem:leq:existence},  $\parallel\mu'\parallel_\infty\le|V|\cdot{\mathfrak{c}}^{-|V|}\cdot M_3\rho$.
\end{proof}
By Theorem~\ref{thm:main} and the Lipschitz Continuity (Corollary~\ref{crlly:lipschitz}), we can approximate $\mathsf{prob}(s,q,\eta)$ as follows: given $\epsilon\in (0,1)$, we choose $m$ sufficiently large and some $\mathsf{h}[v]\in\mathsf{D}_m$ such that $|\mathsf{prob}(v)-\mathsf{prob}(s,q,\eta)|<\frac{1}{2}\epsilon$ and $M_3|V|\mathfrak{c}^{-|V|}\cdot\rho<\frac{1}{2}\epsilon$. Then we solve the system $\Gamma'_m$ to obtain $\overline{\mu}(\mathsf{h}[v])$.

\section{Conclusion and Future Work}

We have shown an algorithm to approximate the acceptance probabilities of CTMC-paths by a multi-clock DTA under finite acceptance condition. Unlike the result by Barbot~\emph{et al.}~\cite{DBLP:conf/tacas/BarbotCHKM11}, we are able to derive an approximation error. Chen~\emph{et al.}~\cite{DBLP:journals/corr/abs-1101-3694} demonstrated that computing the acceptance probability of CTMC-paths by a multi-clock DTA under Muller acceptance condition can be reduced to the one under finite acceptance condition. Thus our result can also be applied to Muller acceptance conditions. One future direction is to refine our approximation algorithm by importing zone-based techniques~\cite{DBLP:conf/ac/BengtssonY03}. Another future direction is to extend this result to continuous-time Markov decision processes (CTMDP)~\cite{DBLP:conf/formats/BertrandS12} or continuous-time Markov games (CTMG)~\cite{DBLP:conf/concur/BrazdilKKKR10,DBLP:conf/fsttcs/FearnleyRSZ11}. A more challenging task would be to consider the acceptance probabilities of CTMC-paths by a non-deterministic timed automaton.

\subsubsection*{Acknowledgement}

I thank Prof. Joost-Pieter Katoen for valuable comments on the writing of this paper. The author is supported by a CSC (China Scholarship Council) scholarship.

\bibliographystyle{plain}
\bibliography{references}

\end{document}